\theoremstyle{plain}
\newtheorem{theorem}{Theorem}[section]
\newtheorem{lemma}[theorem]{Lemma}
\newtheorem{proposition}[theorem]{Proposition}
\newtheorem{corollary}[theorem]{Corollary}
\theoremstyle{remark}
\newtheorem{remark}{Remark}
\newtheorem{definition}[theorem]{Definition}
\newtheorem{example}{Example}
\newcommand{\be}{\begin{equation}}
\newcommand{\ee}{\end{equation}}
\newcommand{\bea}{\begin{eqnarray}}
\newcommand{\eea}{\end{eqnarray}}
\newcommand{\barr}{\begin{array}}
\newcommand{\earr}{\end{array}}
\newcommand{\bn}{\begin{enumerate}}
\newcommand{\en}{\end{enumerate}}
\newcommand{\bi}{\begin{itemize}}
\newcommand{\ei}{\end{itemize}}
\newcommand{\bbbm}{\begin{pmatrix}}
\newcommand{\eeem}{\end{pmatrix}}
\newcommand{\la}{\lambda}
\newcommand{\La}{\Lambda}
\newcommand{\si}{\sigma}
\newcommand{\bbu}{{\bf u}}
\newcommand{\bbx}{{\bf x}}
\newcommand{\cC}{{\cal C}}
\newcommand{\cD}{{\cal D}}
\newcommand{\cL}{{\cal L}}
\newcommand{\cM}{{\cal M}}
\newcommand{\cN}{{\cal N}}
\newcommand{\cP}{{\cal P}}
\newcommand{\C}{{\mathbb C}}
\newcommand{\E}{{\mathbb E}}
\newcommand{\N}{{\mathbb N}}
\newcommand{\R}{{\mathbb R}}
\newcommand{\nn}{\nonumber}
\newcommand{\Wt}{{{\cal W}_{2}}}
\newcommand{\Wtq}{{{\cal W}_{2}^2}}
\newcommand{\spt}{\mathop{\rm spt}}
\newcommand*{\tr}{\mathsf{T}}
\DeclareMathOperator*{\argmin}{argmin}
\DeclareMathOperator*{\argmax}{argmax}
\DeclareMathOperator*{\Var}{{Var}}
\DeclareMathOperator{\id}{Id}
\newcommand{\K}{\preceq_{\rm K}}
\renewcommand{\C}{\preceq_{\rm C}}
  \definecolor{darkspringgreen}{rgb}{0.09, 0.45, 0.27} 
 \definecolor{darkgray}{rgb}{0.66, 0.66, 0.66}
 \definecolor{DeepBlue}{RGB}{0,0,160} 
\begin{document}

\begin{frontmatter}
\title{
Data denoising with self consistency, variance maximization, and the Kantorovich dominance
}

\runtitle{Variance Maximization Under Distributional Dominance for Data Denoising}

\begin{aug}
\author[A]{\fnms{Joshua}~\snm{Zoen-Git Hiew}\ead[label=e3]{joshuazo@ualberta.ca}},
\author[B]{\fnms{Tongseok}~\snm{Lim}\thanks{All authors contributed equally to this work and are listed in alphabetical order.}\ead[label=e2]{lim336@purdue.edu}\orcid{0000-0002-1290-6964}},
\author[A]{\fnms{Brendan}~\snm{Pass}\ead[label=e1]{pass@ualberta.ca}},
\and
\author[C]{\fnms{Marcelo}~\snm{Cruz de Souza}\ead[label=e4]{marcelo.souza@bcb.gov.br}}
\

\address[A]{Department of Mathematical and Statistical Sciences, University of Alberta, Edmonton AB Canada\printead[presep={ ,\ }]{e1,e3}}

\address[B]{Mitch Daniels School of Business, Purdue University, West Lafayette, IN 47907, USA \printead[presep={,\ }]{e2}}

\address[C]{Financial System Monitoring Department, Central Bank of Brazil, Fortaleza, CE, Brazil \printead[presep={,\ }]{e4}}

\end{aug}

\begin{abstract}

We introduce a new framework for data denoising, partially inspired by martingale optimal transport. For a given noisy distribution (the data),  our approach involves finding the closest distribution to it among all distributions which 1) have a particular prescribed structure (expressed by requiring they lie in a particular domain), and 2)  are self-consistent with the data.  We show that this amounts to maximizing the variance among measures in the domain which are dominated in convex order by the data.  For particular choices of the domain, this problem and a relaxed version of it, in which the self-consistency condition is removed, are intimately related to various classical approaches to denoising.  We prove that our general problem has certain desirable features: solutions exist under mild assumptions, have certain robustness properties, and, for very simple domains, coincide with solutions to the relaxed problem.  

We also introduce a novel relationship between distributions, termed Kantorovich dominance, which retains certain aspects of the convex order while being a weaker, more robust, and easier-to-verify condition. Building on this, we propose and analyze a new denoising problem by substituting the convex order in the previously described framework with Kantorovich dominance. We demonstrate that this revised problem shares some characteristics with the full convex order problem but offers enhanced stability, greater computational efficiency, and, in specific domains, more meaningful solutions. Finally, we present simple numerical examples illustrating solutions for both the full convex order problem and the Kantorovich dominance problem.

\end{abstract}

\begin{keyword}[class=MSC]
\kwd[Primary ]{62G05}
\kwd{62G35}
\kwd[; secondary ]{62H12}
\end{keyword}

\begin{keyword}
\kwd{Data denoising}
\kwd{Self-consistency}
\kwd{Convex order}
\kwd{Kantorovich dominance}
\kwd{Variance maximization}
\kwd{Optimal transport}
\kwd{Principal curve}
\end{keyword}

\end{frontmatter}




\section{Introduction}\label{sect: intro}

Denoising or dimensional reduction is a  central problem in statistics and machine learning \cite{ChenChiFanMa2021Spec}, \cite{kegl2000learning}, consisting of inferring an unknown probability distribution $\mu$ (the true data or signal) from an observed distribution $\nu$ (the noisy data).

Given a structured domain $\cD$ of Borel probability measures to which the signal is assumed to belong, there are at least two distinct approaches to denoising.  The first involves finding the $\mu \in \cD$ which is \emph{closest to the data} $\nu$ in an appropriate sense.  When the distance between $\mu$ and $\nu$ is measured by the Wasserstein metric (\eqref{eqn: Wasserstein distance} below), this corresponds to the \emph{relaxed problem} in our nomenclature here (\eqref{eqn: relaxed problem} below) and, for different choices of $\cD$, encompasses principal components, $k$-means clustering, and the version of principal curves in \cite{kegl2000learning}.

A second approach involves identifying a $\mu \in \cD$ which \textit{lies in the middle of} $\nu$ in a certain sense.  A precise formulation of this notion is known as self-consistency in the statistics literature and was first formulated by Hastie and Stuetzle in their seminal work \cite{HastiStuetzle1989} introducing principal curves. While self-consistency (expressed in our work here as the existence of a martingale coupling $\pi \in \cM(\mu,\nu)$ defined in \eqref{eqn: martingale couplings} below) can naturally be interpreted as, conditional on the signal being $X$, the average of the noise around $X$ vanishing, it also arises (roughly speaking) as the first order variation of the distance function from the data \cite{HastiStuetzle1989}.  Despite this connection, solutions obtained from the self-consistency approach do not even locally generally minimize the distance to the data among $\mu \in \cD$ \cite{DuchampStuetzle1996}, and, to the best of our knowledge, there is not an existing general paradigm for data denoising capturing key features from both of these two approaches.

Our first contribution here is to propose such a framework, by exploiting ideas from the theory of martingale optimal transport (MOT) \cite{Beiglbock2016}, \cite{henry2017model}, \cite{de2019irreducible}. Heuristically, this problem amounts to finding the closest $\mu$ to $\nu$ \emph{among those $\mu \in \cD$ which can be coupled to $\nu$ in a self-consistent way}; more precisely, it is to minimize
\be\label{introproblem}
\min_{\mu \in \cD} \min_{\pi \in \cM(\mu,\nu)}  \int |x-y|^2d\pi(x,y).
\ee
We note that this can be interpreted as a backwards martingale optimal transport problem somewhat reminiscent of the one in \cite{kramkov2022optimal}.  Since by Strassen's theorem, the measures $\mu$ for which a self-consistent (or martingale, in the language of MOT) coupling to $\nu$ exist are precisely those which are less than $\nu$ in convex order, denoted by $\mu \C \nu$, and the distance between measures is measured by minimizing the expected squared distance among such couplings, it is straightforward to show that this problem is in fact equivalent to \emph{maximizing} the variance among measures $\mu \in \cD$ which are dominated by $\nu$ in convex order.  
We show that this novel denoising problem has certain desirable properties; solutions always exist for reasonably nice domains $\cD$, and are robust in the sense that if the measure $\nu$ is close to some $\rho \in \cD$ (i.e., the noise is small) with $\rho \C \nu$, our solution $\mu$ must be quantifiably close to $\rho$ as well.  Furthermore, for particularly simple domains, we show that problem \eqref{introproblem} in fact coincides with the following \emph{relaxed problem}
\be\label{eqn: intro relaxed problem}
    \min_{\mu \in \cD}  \min_{\pi \in \Pi(\mu,\nu)} \int |x-y|^2d\pi(x,y)
\ee
where $\Pi(\mu,\nu)$ is the set of couplings between $\mu$ and $\nu$ (without the martingale or self-consistency condition imposed).
 This is not surprising, since the self-consistency condition arises as a sort of optimality condition in \eqref{eqn: relaxed problem}; this equivalence is in fact implicit in standard analysis of $k$-means clustering problems (although we have not seen \eqref{introproblem} explicitly formulated in this setting, and we establish the equivalence more generally here by identifying a general condition on the domain $\cD$ for which it holds).

Since they serve as a key motivation for our work here, let us digress briefly to describe in more detail how various notions of principal curves appearing in the literature relate to our framework. Hastie and Stuetzle defined a principal curve of $\nu$ to be a smooth curve $s \in \R \mapsto f(s)$ such that for each $s$, the barycenter of the points $y$ which project to $f(s)$ is $f(s)$.  In our language, letting $P$ be the projection map $P(y) =\argmin_{f(s)} \|y - f(s)\| $, $\mu = P_\# \nu$ and $\pi =(P,\id)_\#\nu \in  \cM(\mu,\nu)$, where $\id$ denotes the identity map and $P_\# \nu$ the pushforward of the data distribution $\nu$ by  $P$.\footnote{Let $F: {\cal X} \to {\cal Y}$ be a measurable map and $\rho$ be a distribution on $\cal X$. The pushforward of $\rho$ by the map $F$, denoted by $F_\# \rho$, is the distribution on $\cal Y$ satisfying $F_\# \rho (B) = \rho ( F^{-1} (B))$ for every measurable set $B$ in $\cal Y$.} Several variants of this definition have been defined since.  Notably, Tibshirani relaxed the projection requirement by looking for (in our nomenclature) a probability measure $\mu$ supported on a smooth curve $s\mapsto f(s)$, together with a martingale coupling $\pi \in \cM(\mu,\nu)$ (so $\mu$ need not be the projection of $\nu$ onto $f$, but must still be in convex order with it) \cite{tibshirani1992principal}.  Neither the definition in \cite{HastiStuetzle1989} nor the one in \cite{tibshirani1992principal} had a variational aspect analogous to \eqref{introproblem}, although the idea of minimizing the distance to the data was clearly present in the formulation of the self-consistency condition as discussed above.  Heuristically, when $\mathcal{D}$ is taken to be the set of all curves (neglecting for now issues about the regularity of the curves), our problem \eqref{introproblem} is to find the principal curve in the sense of Tibshirani which is closest to the data.

Existence of principal curves as defined by Hastie-Stuetzle  for general distributions $\nu$ is not known. Partially to address this, another notion was introduced by Kegl et al \cite{kegl2000learning}.  In our language, their principal curves are solutions to the relaxed problem \eqref{eqn: intro relaxed problem} when $\mathcal{D}$  is the set of all continuous curves of length at most $L$.  With this definition, they easily established existence for any $\nu$.  On the other hand, the self-consistency condition is lost.

For this same domain $\mathcal{D}$, we can define a principal curve as a solution to \eqref{introproblem}; a straightforward argument (given in a more general setting in the first part of Theorem \ref{existence}) then implies existence of a solution.  To the best of our knowledge, this is the first notion of a self-consistent principal curve for which solutions generally exist and have the desirable feature of minimizing noise, or being as close as possible to the data.

Returning to the discussion of general domains $\cD$, despite its advantages outlined above, our formulation \eqref{introproblem} does come with certain drawbacks.  First, checking whether a self-consistent coupling between $\mu \in \cD$ and $\nu$ exists, or, equivalently, checking whether $\mu$ and $\nu$ are in convex order, is not straightforward, and so \eqref{introproblem} is computationally challenging.  Secondly, we demonstrate that \eqref{introproblem} can be \emph{unstable} with respect to variations in the data $\nu$, and third, for certain problems of interest, the domain $\cD$ may contain very few measures $\mu$ such that $\mu \C \nu$, making the problem \eqref{introproblem} trivial and its solution uninformative (see the example in Section \ref{applications} and in particular Remark \ref{rem: denoising with convex order is trivial}).  

To address these issues, we introduce a weakening of the convex order relation, and a corresponding variational problem (see \eqref{weakproblem} below). The new dominance relation between two probabiltiy measures,  which we call the Kantorovich dominance relation (KDR in short), amounts to imposing the existence of a coupling between $\mu$ and $\nu$ which, though not necessarily self-consistent, enjoys some features of self-consistency; the resulting denoising problem \eqref{weakproblem} therefore falls \emph{in between} the original problem \eqref{introproblem} with the full self-consistency condition and the fully relaxed problem \eqref{eqn: intro relaxed problem}.  We argue that this dominance relation is natural, by demonstrating that, like self-consistency, it also arises as an optimality condition for the relaxed problem in a certain sense; indeed, for a large class of domains, which we name cones, we show that the new problem is equivalent to the relaxed problem.  In addition, we show that, like \eqref{introproblem}, \eqref{weakproblem} enjoys quantifiable robustness properties as the noise becomes small, but, in contrast to \eqref{introproblem}, \eqref{weakproblem} is stable with respect to perturbations in the data distribution $\nu$.

We illustrate the properties of our new order dominance relation by discussing its relationship to several established data analysis techniques, including PCA \cite{Jolliffe2002}, \cite{greenacre2022principal}, $k$-means clustering \cite{lloyd1982least}, \cite{kanungo2002efficient}, principal curves \cite{lee:etal:20}, \cite{kang:oh:24}, and Gaussian denoising \cite{ChenChiFanMa2021Spec}; (versions of) each of these arise for appropriate choices of the domain $\cD$.

The remainder of this article is organized as follows. 
In Section \ref{sec2}, we present the full, self-consistency problem formulation, establish the existence of an optimal solution under the convex ordering constraint, and provide examples illustrating a variety of feasible domains.
In Section \ref{sec3}, we introduce the Kantorovich dominance and investigate its key properties.
In Section \ref{sec4}, we examine the variance maximization problem under the Kantorovich dominance, along with equivalent optimization formulations and applications. Section \ref{sec_numerics} is reserved for numerical examples.

\section{General problem formulation and basic properties}\label{sec2}
Let $\cP_2(\R^d)$ denote the set of all probability measures on $\R^d$ with finite second moments, and let $\cP_{2,0}(\R^d) = \{ \mu \in \cP_2(\R^d) \mid \int x \, d\mu(x) = 0\}$ represent the set of centered probability measures in $\cP_2(\R^d)$.  In this paper, all probability measures are assumed to be in $\cP_2(\R^d)$, unless stated otherwise.

For $\mu, \nu \in \cP_2(\R^d)$, the Wasserstein distance ${\mathcal W}_2(\mu, \nu)$ between $\mu$ and $\nu$ is defined as
\begin{equation}\label{eqn: Wasserstein distance}
\Wtq(\mu, \nu) = \inf_{\pi \in \Pi(\mu, \nu)} \int_{\R^d \times \R^d} |x - y|^2 \, d\pi(x, y),
\end{equation}
where $\Wtq(\mu, \nu) = \big({\mathcal W}_2(\mu, \nu) \big)^2$, and $\Pi(\mu, \nu)$ is the set of all couplings of $\mu$ and $\nu$, i.e.,
\[
\Pi(\mu, \nu) = \{ \pi = \cL(X,Y) \mid \cL(X) = \mu,\, \cL(Y) = \nu \},
\]
where $\cL(X)$ denotes the law (distribution) of the random variable $X$ \cite{santambrogio2015optimal}, \cite{villani2003topics}. We denote $\cL(X) = \mu$ and $ X \sim \mu$ interchangeably. $| \cdot |$ denotes the Euclidean norm.

In a typical application, we will assume $Y \sim \nu \in \cP_{2,0}(\R^d)$ is the given data distribution, satisfying the following model assumption:
\begin{equation} \label{eqn: model}
Y = X + R,
\end{equation}
where $X \sim \rho \in \cP_{2,0}(\R^d)$, $R \sim \epsilon \in \cP_{2,0}(\R^d)$. Our goal is to recover \( \rho \) from the observed data \( \nu \) perturbed by \( \epsilon \). With prior knowledge or constraint about \( \rho \), we assume it belongs to a known domain \( \mathcal{D} \subset \mathcal{P}_{2,0}(\mathbb{R}^d) \). Given \( \nu \in \mathcal{P}_{2,0}(\mathbb{R}^d) \), we thus arrive at the following problem (which is simply \eqref{introproblem} rewritten in probabilistic notation):
\be\label{problem}
\min_{\mu \in \cD} \min_{\pi \in \cM(\mu,\nu)} \E_\pi[ | X - Y |^2],
\ee
where $\cM(\mu,\nu)$ represents the set of martingale couplings between $\mu$ and $\nu$:
\begin{equation}\label{eqn: martingale couplings}
\cM(\mu,\nu) = \{ \pi = \cL(X, Y) \mid \cL(X) = \mu,\, \cL(Y) = \nu,\, \E_\pi[Y|X] = X \}.
\end{equation}
The martingale condition is often referred to as  \emph{self-consistency} in the statistics literature.  In the generic model \eqref{eqn: model}, it says that conditional on $X$, the average noise vanishes, $\E[R|X] = 0$.

We say that $\mu$ and $\nu$ in $\cP_2(\R^d)$  are in convex order, denoted by $\mu \C \nu$, if $\int f \, d\mu \leq \int f \, d\nu$ for any convex function $f$. Strassen’s theorem states that $\cM(\mu,\nu) \neq \emptyset$ if and only if $\mu \C \nu$.

If $\pi \in \cM(\mu,\nu)$ is a martingale coupling of $X \sim \mu$ and $Y \sim \nu$, we have $\E_\pi[XY] = \E_\mu[X \E_\pi[Y|X]] = \E_\mu[|X|^2]$. Therefore, if $\mu, \nu \in \cP_{2,0}(\R^d)$, it follows that
\[
    \E_\pi[|X - Y|^2] = \E_\nu[|Y|^2] - \E_\mu[|X|^2] = \Var(\nu) - \Var(\mu),
\]
where $\Var(\mu)$ denotes the variance of $\mu$, defined as $
    \Var(\mu) = \int |x - \int x \, d\mu(x)|^2  d\mu(x)$. 
    
    Since $\nu$ is given and fixed, this shows that problem \eqref{problem} can equivalently be formulated as:
\be\label{maxvar}
    \max_{\mu \in \cD,\, \mu \C \nu} \Var(\mu).
\ee

\begin{remark}
Since $\cD$ typically consists of measures supported on low dimensional spaces (for instance, curves), this formulation appears to achieve a spectacular dimensional reduction, as it involves optimizing over $\mu \in \cD$, rather than $(\mu,\pi)$ with $\mu \in \cD$ and $\pi \in \cM(\mu,\nu)$.  The catch, of course, is that the constraint $\mu \C \nu$ involves $\nu$ in a sophisticated way and is not straightforward to check.
\end{remark}

\begin{remark}\label{Lipconvexorder}
We have $ \mu \C \nu$ if and only if
$\int f\,d\mu \le \int f\,d\nu$ for every Lipschitz convex function $f$.
To see this, assume the inequality holds for all Lipschitz convex $f$, and let $g$ be any convex lower semicontinuous function. Then there exists a sequence of Lipschitz convex functions (indeed, one may take piecewise affine convex functions) $(g_n)_{n\ge1}$ which increases to $g$ pointwise. For each $n$,
$\int g_n\,d\mu \le \int g_n\,d\nu \le \int g\,d\nu$. Letting $n\to\infty$ and applying the monotone convergence theorem yields
$\int g\,d\mu \le \int g\,d\nu$,
which is the convex order condition.
\end{remark}

Note that, with the definition of Wasserstein distance in hand, we can rewrite the relaxed problem \eqref{eqn: intro relaxed problem} from the introduction as:
\be\label{eqn: relaxed problem}
    \min_{\mu \in \cD} \mathcal{W}_2^2(\mu, \nu).
\ee

Since self-consistency can be seen as a first order optimality condition for the functional in \eqref{eqn: relaxed problem} \cite{HastiStuetzle1989}, it is not surprising that for certain simple domains $\mathcal{D}$, problems \eqref{problem} and \eqref{eqn: relaxed problem} are in fact equivalent.  This is well known for problems such as $k$-means clustering (see Example \ref{ex: discrete} below); we offer here a general condition on $\cD$ under which equivalence holds.

The $\pi$-conditional barycentric recentering operation described below is a well-known procedure in the $k$-means clustering algorithm. (For this connection, one may assume $\mu$ is supported on $k$ points in $\R^d$.)

\begin{definition}\label{def: centering map}
Given $\mu, \nu \in \cP_2(\R^d)$ and a coupling $\pi = \mu \otimes \pi_x \in \Pi(\mu,\nu)$,\footnote{A disintegration of $\pi \in \Pi(\mu,\nu)$ w.r.t. $\mu$, denoted by $\pi = \mu \otimes \pi_x$, means that for any Borel sets $A, B \subset \R^d$, it holds $\pi(A \times B) = \int_A \pi_x(B) d\mu(x)$. Using conditional probability notation, $\pi_x (B) = \cP( Y \in B \, | \, X = x)$.} define the $\pi$-conditional barycentric map $c_\pi(x) := \int y\, d\pi_x(y)$ for $\mu$-a.e. $x$. We call ${c_\pi}_\#(\mu)$ the $\pi$-conditional barycentrically recentered first marginal of $\pi$, and  $\mathsf{C}_\# \pi$ as the $\pi$-conditional barycentrically recentered coupling of $\pi$, where $\mathsf{C}(x,y) = \mathsf{C}_\pi (x,y) := (c_\pi(x), y)$.
\end{definition}

Note that $\mathsf{C}_\# \pi \in \cM({c_\pi}_\#(\mu), \nu )$ is a martingale measure, and thus ${c_\pi}_\#(\mu) \C \nu$.

\begin{proposition}\label{prop: equivalence between relaxed and full problems}
Assume that there exists an optimal $\mu \in \cD$ for the relaxed problem \eqref{eqn: relaxed problem} and an optimal $\pi \in \Pi(\mu,\nu)$ for  \eqref{eqn: Wasserstein distance} such that ${c_\pi}_\#(\mu) \in \cD$.  Then ${c_\pi}_\#(\mu)$ is optimal in both \eqref{eqn: relaxed problem} and \eqref{problem}, and consequently, the two problems are equivalent (by yielding the same value).
\end{proposition}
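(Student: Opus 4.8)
The plan is to show that replacing the optimal relaxed coupling $\pi$ by its conditional barycentric recentering $\mathsf{C}_\# \pi$ can only \emph{decrease} the transport cost, so that optimality of $\mu$ forces this recentering to be trivial; this simultaneously produces a martingale coupling attaining the relaxed value and identifies the two optimizers. I would first record feasibility: writing $\mu' := {c_\pi}_\#(\mu)$, the remark preceding the statement gives $\mathsf{C}_\# \pi \in \cM(\mu',\nu)$, hence $\mu' \C \nu$, and together with the hypothesis $\mu' \in \cD$ this makes $\mu'$ admissible in \eqref{problem}.

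The technical engine is a conditional Pythagorean identity. Disintegrating $\pi = \mu \otimes \pi_x$, for $\mu$-a.e.\ $x$ the barycenter $c_\pi(x)$ is the minimizer of $z \mapsto \int |z-y|^2\, d\pi_x(y)$, so expanding the square and using that the cross term vanishes gives $\int |x-y|^2\,d\pi_x(y) = |x-c_\pi(x)|^2 + \int |c_\pi(x)-y|^2\,d\pi_x(y)$. Integrating against $\mu$ and using that $\pi$ is optimal in \eqref{eqn: Wasserstein distance} yields
\[
\int |c_\pi(x)-y|^2 \, d\pi(x,y) \;=\; \Wtq(\mu,\nu) - \int |x-c_\pi(x)|^2\,d\mu(x).
\]
Since $\mathsf{C}_\#\pi$ couples $\mu'$ to $\nu$, the left-hand side is an admissible cost for $\Wtq(\mu',\nu)$, so
\[
\Wtq(\mu',\nu) \;\le\; \Wtq(\mu,\nu) - \int |x-c_\pi(x)|^2\,d\mu(x) \;\le\; \Wtq(\mu,\nu).
\]

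The decisive step, and the one I expect to carry the argument, is the optimality squeeze. Because $\mu$ is optimal for \eqref{eqn: relaxed problem} while $\mu' \in \cD$ is admissible there, we also have $\Wtq(\mu,\nu) \le \Wtq(\mu',\nu)$. The displayed chain then collapses to equalities, forcing $\int |x-c_\pi(x)|^2\,d\mu(x) = 0$, i.e.\ $c_\pi(x) = x$ for $\mu$-a.e.\ $x$. Consequently $\mu' = {c_\pi}_\#(\mu) = \mu$ and $\mathsf{C}_\#\pi = \pi$, so the optimal relaxed coupling $\pi$ is \emph{itself} self-consistent: $\pi \in \cM(\mu,\nu)$. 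This collapse, which says that optimality automatically makes the relaxed coupling a martingale, is the conceptual heart of the statement; everything else is bookkeeping.

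Finally I would reconcile the values. For any admissible $\mu$ in \eqref{problem}, restricting from $\Pi(\mu,\nu)$ to $\cM(\mu,\nu) \subseteq \Pi(\mu,\nu)$ gives $\min_{\cM(\mu,\nu)} \E_\pi[|X-Y|^2] \ge \Wtq(\mu,\nu) \ge \min_{\tilde\mu \in \cD}\Wtq(\tilde\mu,\nu)$, so the optimal value of \eqref{problem} is at least that of \eqref{eqn: relaxed problem}. Conversely, the martingale coupling $\pi \in \cM(\mu,\nu)$ just produced attains $\E_\pi[|X-Y|^2] = \Wtq(\mu,\nu)$, which equals the relaxed optimal value; hence the value of \eqref{problem} is at most that of \eqref{eqn: relaxed problem}. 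The two optimal values therefore coincide, and $\mu = \mu' = {c_\pi}_\#(\mu)$ is optimal in both problems.
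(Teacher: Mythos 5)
Your proof is correct, and it rests on the same engine as the paper's: the conditional barycenter $c_\pi(x)$ minimizes $z \mapsto \int |z-y|^2\,d\pi_x(y)$, so barycentric recentering of the optimal coupling cannot increase the transport cost, and optimality of $\mu$ in \eqref{eqn: relaxed problem} closes the loop. The difference is in how the loop is closed. The paper uses only the inequality $\int |c_\pi(x)-y|^2\,d\pi \le \Wtq(\mu,\nu)$ and then sandwiches values: $\mathsf{C}_\#\pi$ is a martingale coupling of ${c_\pi}_\#(\mu)\in\cD$ with $\nu$ whose cost is at most the relaxed optimal value, while the value of \eqref{problem} always dominates that of \eqref{eqn: relaxed problem}; this gives optimality of ${c_\pi}_\#(\mu)$ in both problems without ever deciding whether ${c_\pi}_\#(\mu)=\mu$. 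You instead retain the exact Pythagorean identity, so your squeeze forces $\int |x-c_\pi(x)|^2\,d\mu = 0$, i.e.\ $c_\pi = \id$ $\mu$-a.e., yielding the strictly stronger conclusion that the recentering is trivial: ${c_\pi}_\#(\mu)=\mu$ and the optimal relaxed coupling $\pi$ is itself self-consistent, $\pi \in \cM(\mu,\nu)$. What your route buys is a proof, under these hypotheses, of the ``self-consistency as a first-order optimality condition'' heuristic invoked in the paper's introduction; what the paper's route buys is economy --- it needs only the inequality, and the statement as written (optimality of ${c_\pi}_\#(\mu)$ and equality of values) follows directly. Both arguments are sound, and your extra conclusion is a genuine, correct refinement rather than a gap.
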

\begin{proof}
    Since $c_\pi(x)$ is the barycenter of $\pi_x$, we must have
    \begin{eqnarray*}
    \int_{\mathbb{R}^d \times \mathbb{R}^d}|x-y|^2 d\mathsf{C}_\# \pi(x,y) &=&\int_{\mathbb{R}^d \times \mathbb{R}^d}|c_\pi(x)-y|^2d \pi(x,y)\\  &=&\int_{\mathbb{R}^d}\left(\int_{\mathbb{R}^d}|c_\pi(x)-y|^2d \pi_x(y)\right)d\mu(x)\\ &\leq &\int_{\mathbb{R}^d}\left(\int_{\mathbb{R}^d}|x-y|^2d \pi_x(y)\right)d\mu(x) 
    =\mathcal{W}_2^2(\mu, \nu),
     \end{eqnarray*}
     where the last equality is from the assumption that $\pi$ is optimal for \eqref{eqn: Wasserstein distance}. Since $\mathcal{W}_2^2(\mu, \nu)$ is equal to the minimal value in \eqref{eqn: relaxed problem} (since $\mu \in \cD$) and less than or equal to the minimal value in \eqref{problem} (since ${c_\pi}_\#(\mu) \in \cD$), this implies optimality of ${c_\pi}_\#(\mu)$ in both problems.
\end{proof}

We now turn our attention to the existence and robustness of solutions.  Given $\nu \in \cP_2(\R^d)$, we define the set of probability measures less than or equal to $\nu$ in convex order as:
\be
    \cM_{\nu} := \{ \mu \in \cP_2(\R^d) \mid \mu \C \nu \}.
\ee
\begin{lemma}\label{compactness}
    \begin{enumerate}[label=\roman*), ref=\roman*]
    \item  $ \cM_\nu$ is compact in the $ \Wt $-metric for any $\nu \in \cP_2(\R^d)$.

        \item Let $ \Wt(\nu_n, \nu) \to 0 $ as $ n \to \infty $, and $ \mu_n \in \cM_{\nu_n} $. Then the sequence $ \{ \mu_n \}_n $ is precompact, meaning there exists a subsequence $ \{ \mu_{n_k} \}_k $ that converges in $ \Wt $ to some $ \mu \in \cM_\nu $.
    \end{enumerate}
\end{lemma}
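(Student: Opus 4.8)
The plan is to prove both parts at once, observing that (i) is simply the special case $\nu_n \equiv \nu$ of (ii), together with the fact that in the metric space $(\cP_2(\R^d),\Wt)$ sequential compactness coincides with compactness. The backbone is the standard characterization that $\Wt$-convergence is equivalent to weak convergence together with convergence of second moments, or equivalently weak convergence plus uniform integrability of $|x|^2$. So the whole argument reduces to three steps: extract a weakly convergent subsequence of $\{\mu_n\}$, show the second moments are uniformly integrable so that the weak limit is genuinely a $\Wt$-limit, and verify that the limit lies in $\cM_\nu$.

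First I would use convexity of $x \mapsto |x|^2$: every $\mu \in \cM_{\nu_n}$ satisfies $\int |x|^2\,d\mu \le \int |x|^2\,d\nu_n$. Since $\Wt(\nu_n,\nu)\to 0$ forces $\int|x|^2\,d\nu_n \to \int|x|^2\,d\nu < \infty$, the second moments of $\{\mu_n\}$ are uniformly bounded by some $M<\infty$. Chebyshev's inequality $\mu_n(|x|>R)\le M/R^2$ then gives tightness of $\{\mu_n\}$, and Prokhorov's theorem produces a subsequence $\mu_{n_k}$ converging weakly to some $\mu$.

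The key step, which I expect to be the main obstacle, is uniform integrability of $|x|^2$ under $\{\mu_n\}$; the point is to choose the right convex test function. For $R>0$ set $g_R(x) := (|x|^2 - R^2)^+$, which is convex (a nondecreasing convex function composed with the convex map $|x|^2$). The convex order $\mu_n \C \nu_n$ yields $\int g_R\,d\mu_n \le \int g_R\,d\nu_n \le \int_{\{|x|>R\}} |x|^2\,d\nu_n$. On $\{|x|\ge 2R\}$ one has $R^2 \le \frac14|x|^2$ and hence $|x|^2 \le \frac43 g_R(x)$, so $\int_{\{|x|\ge 2R\}}|x|^2\,d\mu_n \le \frac43 \int g_R\,d\nu_n \le \frac43 \int_{\{|x|>R\}}|x|^2\,d\nu_n$. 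Finally, $\Wt$-convergence of $\{\nu_n\}$ guarantees uniform integrability of its second moment, i.e. $\sup_n \int_{\{|x|>R\}}|x|^2\,d\nu_n \to 0$ as $R\to\infty$, and therefore $\sup_n \int_{\{|x|\ge 2R\}}|x|^2\,d\mu_n \to 0$. This is exactly uniform integrability of $|x|^2$ under $\{\mu_n\}$, which upgrades $\mu_{n_k}\to\mu$ weakly to $\Wt(\mu_{n_k},\mu)\to 0$.

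It remains to check $\mu \in \cM_\nu$, that is $\mu \C \nu$. By Remark \ref{Lipconvexorder} it suffices to test against Lipschitz convex $f$; such $f$ has at most linear growth, so $\Wt$-convergence gives $\int f\,d\mu_{n_k}\to \int f\,d\mu$ and $\int f\,d\nu_{n_k}\to\int f\,d\nu$. Passing to the limit in $\int f\,d\mu_{n_k} \le \int f\,d\nu_{n_k}$ yields $\int f\,d\mu \le \int f\,d\nu$, as needed. Specializing to $\nu_n\equiv\nu$ gives sequential compactness of $\cM_\nu$ and hence (i), while the displayed argument for general $\nu_n\to\nu$ is precisely (ii). Apart from the uniform-integrability estimate and the comparison $|x|^2 \le \frac43 g_R(x)$ on $\{|x|\ge 2R\}$, every step is routine Prokhorov/Wasserstein bookkeeping.
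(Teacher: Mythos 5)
Your proof is correct and follows essentially the same route as the paper's: a truncated-quadratic convex test function (your $(|x|^2-R^2)^+$ versus the paper's $(2|x|^2-a^2)^+$) transfers uniform integrability of second moments from $\nu_n$ to the dominated measures $\mu_n$, Prokhorov's theorem gives a weak limit, the weak-convergence-plus-uniform-integrability characterization of $\Wt$-convergence (Villani's Theorem 7.12, which the paper cites) upgrades this to $\Wt$-convergence, and Lipschitz convex test functions show the limit lies in $\cM_\nu$. The only cosmetic differences are that you treat part (i) as the special case $\nu_n \equiv \nu$ of part (ii), and you obtain tightness from a uniform second-moment bound plus Chebyshev rather than the paper's $(|x|-a)^+$ estimate.
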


\begin{proof}

i) The set $\cM_\nu$ is clearly closed: If $\{\mu_n\}_n \subset \cM_\nu$ and $\lim_{n \to \infty} \Wt(\mu_n, \mu) = 0$ so that $\int f \, d\mu_n \le \int f \, d\nu$ for any $L$-Lipschitz convex function $f$, then by duality, we have
\[
\left | \int f \, d\mu_n - \int f \, d\mu \right | \le L {\cal W}_1(\mu_n, \mu) \le L {\cal W}_2(\mu_n, \mu) ,
\]
proving $\int f \, d\mu \le \int f \, d\nu$. Hence, $\mu \in \cM_\nu$ (see Remark \ref{Lipconvexorder}). 

To show $\cM_\nu$ is compact, let $\{\mu_n\}_n \subset \cM_\nu$. By monotone convergence, for $\epsilon > 0$, there exists $a > 0$ with $\int (2|x|^2 - a^2)^+ \, d\nu < \epsilon$. Then we have
\begin{align*}
    \int (2|x|^2 - a^2)^+ \, d\nu &\geq \int (2|x|^2 - a^2)^+ \, d\mu_n \geq \int_{|x| \ge a} (2|x|^2 - a^2)^+ \, d\mu_n \geq \int_{|x| \ge a} |x|^2 \, d\mu_n.
\end{align*}
By the second inequality, $\{\mu_n\}_n$ is tight. Hence there is a subsequence $\{\mu_k\}_k$ and a probability measure $\mu$ such that $\mu_k \to \mu$ weakly. The lemma will follow if $\Wt(\mu_k, \mu) \to 0$. By \cite[Theorem 7.12]{villani2003topics}, it suffices to show that for any $\epsilon > 0$, there is $R > 0$ such that $\int_{|x| \ge R} |x|^2 \, d\mu_k(x) < \epsilon$ for all $k$. This is shown by the third inequality, completing the proof.

ii) As before, for $\epsilon > 0$, there exists $a > 0$ such that $\int (|x| - a)^+ \, d\nu < \epsilon$. Then,
\[
    \mu_n(\R^d \setminus B_{a+1}(0)) \leq \int (|x| - a)^+ \, d\mu_n \leq \int (|x| - a)^+ \, d\nu_n < \epsilon,
\]
for all large $n$, since $\Wt(\nu_n, \nu) \to 0$ implies $\int (|x| - a)^+ \, d\nu_n \to \int (|x| - a)^+ \, d\nu$. This yields a subsequence $\{\mu_k\}_k$ and a probability measure $\mu$ such that $\mu_k \to \mu$ weakly. As before, there exists $a > 0$ such that $\int (2|x|^2 - a^2)^+ \, d\nu < \epsilon$, implying
\[
    \int_{|x| \ge a} |x|^2 \, d\mu_k \leq \int (2|x|^2 - a^2)^+ \, d\nu_k < \epsilon,
\]
for all large $k$. By \cite[Theorem 7.12]{villani2003topics}, we conclude that $\Wt(\mu_k, \mu) \to 0$.
\end{proof}

\begin{remark}
The preceding lemma holds with an almost identical proof hold if we replace $\Wt$ with  the $p$-Wasserstein distance,  defined by $\mathcal{W}_p^p(\mu, \nu) := \inf_{\pi \in \Pi(\mu, \nu)} \int_{\R^d \times \R^d} |x - y|^p \, d\pi(x, y)$ for $ p \in [1, \infty) $, asserting that $\cM_{\nu, p} := \{ \mu \in \cP_p(\R^d) \mid \mu \C \nu \},$ is $\mathcal{W}_p$-compact.
\end{remark}

We now show that our optimization problem \eqref{maxvar} admits a solution and can recover the true distribution $\rho$ as the noise diminishes.
\begin{theorem}\label{existence}
    \begin{enumerate}[label=\roman*), ref=\roman*]
        \item If $\cD \subset \cP_2(\R^d)$ is closed under the $\Wt$-metric and  $\cD \cap \cM_\nu$ is non-empty, then  problem \eqref{maxvar} attains a solution.
        \item Let $\mu^*$ be a solution to \eqref{maxvar}. Then for any $\rho \in \cD$ with $\rho \C \nu$, we have:
        \[
            \Wt(\mu^*, \rho) \leq \sqrt{\Var(\nu) - \Var(\rho)} + \Wt(\nu, \rho).
        \]
        Consequently, $\Wt(\mu^*, \rho) \to 0$ as $\Wt(\nu, \rho) \to 0$, i.e., as the noise diminishes.
    \end{enumerate}
\end{theorem}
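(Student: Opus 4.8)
The plan is to treat the two parts separately, with part (i) resting on compactness and part (ii) on a single inequality chain.

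\textbf{Part (i).} I would argue that \eqref{maxvar} is the maximization of a continuous functional over a non-empty compact set. By Lemma \ref{compactness}(i), $\cM_\nu$ is $\Wt$-compact; since $\cD$ is $\Wt$-closed by hypothesis, the feasible set $\cD \cap \cM_\nu$ is $\Wt$-compact, and it is non-empty by assumption. It then remains to observe that $\mu \mapsto \Var(\mu)$ is $\Wt$-continuous on this set. Indeed, applying the convex order to the affine functions $x \mapsto \pm\langle a, x\rangle$ shows $\int x\, d\mu = \int x\, d\nu = 0$ for every $\mu \C \nu$, so each feasible $\mu$ is centered and $\Var(\mu) = \int |x|^2\, d\mu$. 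Since $\Wt$-convergence is equivalent to weak convergence together with convergence of second moments, this functional is continuous, and a continuous function attains its maximum on a non-empty compact set.

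\textbf{Part (ii).} Here the key is to convert the variance gap into a Wasserstein bound using a martingale coupling. First, the triangle inequality gives $\Wt(\mu^*, \rho) \le \Wt(\mu^*, \nu) + \Wt(\nu, \rho)$, so it suffices to bound $\Wt(\mu^*, \nu)$. Since $\mu^* \C \nu$, Strassen's theorem yields a martingale coupling $\pi \in \cM(\mu^*, \nu)$, and the identity established just before \eqref{maxvar} gives $\E_\pi[|X-Y|^2] = \Var(\nu) - \Var(\mu^*)$. Because $\Wtq(\mu^*, \nu)$ is the infimum over \emph{all} couplings,
\[
\Wtq(\mu^*, \nu) \le \E_\pi[|X-Y|^2] = \Var(\nu) - \Var(\mu^*).
\]
The point $\rho$ is feasible for \eqref{maxvar}, since $\rho \in \cD$ and $\rho \C \nu$; hence optimality of $\mu^*$ forces $\Var(\mu^*) \ge \Var(\rho)$, and therefore $\Wtq(\mu^*, \nu) \le \Var(\nu) - \Var(\rho)$. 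Taking square roots and inserting into the triangle inequality produces the stated estimate. For the final limit, $\Wt(\nu, \rho) \to 0$ forces $\Var(\nu) \to \Var(\rho)$ (again using second-moment convergence and centering), so both terms on the right-hand side vanish.

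I do not anticipate a serious obstacle. The one point requiring care is the genuine continuity (not merely upper semicontinuity) of the variance under $\Wt$, which relies precisely on the characterization of $\Wt$-convergence as weak convergence plus second-moment convergence; closedness of $\cM_\nu$, ensuring that limits of feasible maximizing sequences remain feasible, is already supplied by Lemma \ref{compactness}. The conceptual heart of part (ii) is the observation that \emph{any} martingale coupling, even a suboptimal one, yields the bound $\Wtq \le \E_\pi[|X-Y|^2]$, which combined with the variance identity turns the optimality inequality $\Var(\mu^*) \ge \Var(\rho)$ directly into the desired proximity estimate.
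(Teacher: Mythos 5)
Your proposal is correct and follows essentially the same route as the paper: part (i) via $\Wt$-compactness of $\cD \cap \cM_\nu$ (Lemma \ref{compactness}) plus $\Wt$-continuity of the variance, and part (ii) via the triangle inequality, the bound $\Wtq(\mu^*,\nu) \le \E_\pi[|X-Y|^2] = \Var(\nu) - \Var(\mu^*)$ for a martingale coupling $\pi \in \cM(\mu^*,\nu)$, and the optimality inequality $\Var(\mu^*) \ge \Var(\rho)$. The additional details you supply (centering of feasible measures, continuity of variance as weak convergence plus second-moment convergence) merely flesh out what the paper states without proof, and are accurate.
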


\begin{proof}
    i) By Lemma \ref{compactness} the set $\cM_\nu$ is $\Wt$-compact. Hence, $\cD \cap \cM_\nu$ is $\Wt$-compact. Since the functional $\mu \mapsto \Var(\mu)$ is continuous in the $\Wt$-metric, part i) follows.

    ii) Recall that $\displaystyle \Wt(\mu, \nu) = \min_{\pi \in \Pi(\mu, \nu)} \sqrt{\E_\pi[|X - Y|^2]}$. We proceed as follows:
    \begin{align*}
        \Wt(\mu^*, \rho) & \leq \Wt(\mu^*, \nu) + \Wt(\nu, \rho) \\
        & \leq \sqrt{\E_\pi[|X - Y|^2]} + \Wt(\nu, \rho), \quad \text{for any} \, \pi \in \cM(\mu^*, \nu) \\
        & = \sqrt{\Var(\nu) - \Var(\mu^*)} + \Wt(\nu, \rho) \\
        & \leq \sqrt{\Var(\nu) - \Var(\rho)} + \Wt(\nu, \rho),
    \end{align*}
    where the last inequality follows from the optimality of $\mu^*$ in problem \eqref{maxvar}.
\end{proof}
\subsection{Examples of domains}
The following examples clarify the connection between our framework, $k$-means clustering, and principal curves.

\begin{example}[$\cD$ as the set of measures on Lipschitz curves]\label{ex: lipschitz}
Let $\Omega \subset \R^d$ be a compact and convex set, and fix parameters \( L, T > 0 \). Consider the set of Lipschitz curves and the probability measures supported on them, defined by
\begin{align}
\cC_L &= \big\{ \alpha : [0, T] \to \Omega \,\big|\, | \alpha(t) - \alpha(s) | \le L |t-s| \big\},\\
\cD_L &= \big\{ \mu \in \cP_2(\R^d) \,\big|\, \spt(\mu) \subset {\rm Im}(\alpha) \text{ for some } \alpha \in \cC_L \big\},
\end{align}
where ${\rm Im}(\alpha) = \{ \alpha(t) \mid t \in [0,T] \}$ and $\spt(\mu)$ denotes the support of $\mu$. In practice, $\Omega$ can be the convex hull of the support of $\nu$, or a closed ball that contains $\nu$'s support.

To show $\cD_L$ is closed in $\Wt$-metric, let $(\mu_n)_n$ be a sequence in $\cD_L$ converging to $\mu$, and let $(\alpha_n)_n \subset \cC_L$ with $\spt(\mu_n) \subset {\rm Im}(\alpha_n)$. We need to show that $\spt(\mu) \subset {\rm Im}(\alpha)$ for some $\alpha \in \cC_L$. By the Arzelà–Ascoli theorem, there exists a subsequence of $(\alpha_n)_n$ (which we still denote by $(\alpha_n)_n$) that converges uniformly on $[0,T]$ to some $\alpha \in \cC_L$. This implies $\spt(\mu) \subset {\rm Im}(\alpha)$ as follows: for any $\epsilon > 0$, there exists $N$ such that for all $n \ge N$, ${\rm Im}(\alpha_n) \subset \mathcal{N}_\epsilon({\rm Im}(\alpha))$, where
\[
\mathcal{N}_\epsilon({\rm Im}(\alpha)) := \big\{ x \in \R^d \mid |x-y| \le \epsilon \text{ for some } y \in {\rm Im}(\alpha) \big\}.
\]
Since $\spt(\mu_n) \subset {\rm Im}(\alpha_n)$ and $\mu_n \to \mu$, we have $\spt(\mu) \subset \mathcal{N}_\epsilon({\rm Im}(\alpha))$. Taking the limit as $\epsilon \to 0$ yields $\spt(\mu) \subset {\rm Im}(\alpha)$.
We note that the relaxed problem \eqref{eqn: relaxed problem} for this domain is equivalent to the version of principal curves proposed in \cite{kegl2000learning}.
\end{example}

\begin{example}[$\cD$ as the set of measures on monotone increasing curves]\label{ex: monotone}

A set $\Gamma \subset \R^2$ is said to be monotone if, for any $(x_1, x_2), (y_1, y_2) \in \Gamma$, the following condition holds:
\be
    (y_1 - x_1)(y_2 - x_2) \ge 0.
\ee
Let $\text{MON}$ denote the collection of all monotone sets in $\R^2$. We define the search space as
\be
    \cD_{\text{MON}} = \left\{ \mu \in \cP(\R^2) \mid \text{spt}(\mu) \subset \Gamma, \Gamma \in \text{MON} \right\}.
\ee
To show that $\cD_{\text{MON}}$ is closed under the ${\mathcal W}_2$-metric, consider a sequence $\mu_n \to \mu$ in ${\mathcal W}_2$ with $\mu_n \in \cD_{\text{MON}}$. ${\mathcal W}_2$-convergence implies weak convergence, so by the Portmanteau theorem, for any $(x_1, x_2), (y_1, y_2) \in \text{supp}(\mu)$, there exist $(x_1^n, x_2^n), (y_1^n, y_2^n) \in \text{supp}(\mu_n)$ such that $(x_1^n, x_2^n) \to (x_1, x_2)$ and $(y_1^n, y_2^n) \to (y_1, y_2)$. By continuity of the product function, we have $(y_1 - x_1)(y_2 - x_2) \ge 0$ for any $(x_1, x_2), (y_1, y_2) \in \text{spt}(\mu)$, confirming that $\mu \in \cD_{\text{MON}}$.
Monotonicity, reflecting an increasing dependence on the signal variables $X_1,X_2$, is a natural modelling assumption in many situations (such as when the noisy data variables $Y_1,Y_2$ are highly correlated).  It is also closely related to the Lipschitz condition in the preceding example, as monotone sets $\Gamma \subset \mathbb{R}^2$ are well known to be $1$-Lipschitz graphs $(\bar x_1, \bar x_2) =(\bar x_1, F(\bar x_1))$ of the anti-diagonal  $\bar x_2 =\frac{1}{\sqrt{2}}[x_1-x_2]$ over the diagonal $\bar x_1 =\frac{1}{\sqrt{2}}[x_1+x_2]$ \cite{Minty62}.
\end{example}

\begin{remark}
The last example is closely related to the backwards martingale optimal transport problem studied in \cite{kramkov2022optimal}. For a given $\nu \in \cP_2(\mathbb{R}^2)$, they attempt to minimize $\mathbb{E}_\pi[(Y_1 - X_1)(Y_2 - X_2)]$ among measures $\mu \in \cP_2(\mathbb{R}^2)$ and martingale couplings $\pi \in \cM(\mu,\nu)$.  This differs from our framework in that their cost function $c(x, y) = (y_1 - x_1)(y_2 - x_2)=(\bar y_1-\bar x_1)^2 - (\bar y_2 -\bar x_2 )^2$ is increasing along the diagonal direction 
but decreasing along the anti-diagonal direction 
and they do not restrict to measures $\mu$ in any particular domain $\cD$.  Despite the latter difference, they show that their optimal $\mu$ in fact belongs to $\cD_{\text{MON}}$. Thus, solving either the problem in the last example or the problem in \cite{kramkov2022optimal} yields a measure $\mu \in \cD_{\text{MON}}$ with $\mu \C \nu$. These two measures need not be the same, as shown in Supplementary Material A.
\end{remark}

The following example relates our framework to data clustering problems.

\begin{example}[$\cD$ as the set of measures supported on sets of bounded cardinality] \label{ex: discrete}
Fix $m \in \N$. Consider the following search space
\begin{align}\label{msupport}
\cD^m 
= \bigg\{ \mu =  \sum_{i=1}^m u_i \delta_{x_i} \,\bigg|\, x_i \in \R^d,\, u_i \ge 0 \  \forall i,\, \sum_{i=1}^m u_i = 1 \bigg \}.
\end{align}
We also consider the set of discrete measures of fixed weight as follows. Let $u = (u_1,...,u_m)$ be a vector of fixed weights, where $u_i \geq 0$ and $\sum_{i=1}^m u_i = 1$. We set 
\begin{align}\label{msupportfixed}
\cD^m_u = \bigg\{ \mu  = \sum_{i=1}^m u_i \delta_{x_i} \,\bigg|\, x_i \in \R^d \bigg\}.
\end{align}
Proposition \ref{prop: equivalence between relaxed and full problems} implies that \eqref{problem} and \eqref{eqn: relaxed problem} are equivalent for either of these domains.  Note that the relaxed problem \eqref{eqn: relaxed problem} with $\cD^m$ is exactly the classical $k$-means clustering problem, while with $\cD^m_u$ it is a fixed weight clustering problem.

\end{example}

\subsection{Stability}
The following example illustrates that solution stability may not hold if the input data $\nu$ varies. Note that this differs from the scenario of diminishing noise.

\begin{example}[Instability of denoising with self-consistency]\label{counterexample:stability}
    We demonstrate that the stability of problem \eqref{problem} does not generally hold, inspired by \cite{bruckerhoff2022instability}.
    
    Define the one-step probability kernel $\kappa_\theta$ from $\R^2$ to $\cP(\R^2)$ by
    \be
        \kappa_\theta(x_1, x_2) = \tfrac{1}{2} \delta_{(x_1 + \cos(\theta), x_2 + \sin(\theta))} + \tfrac{1}{2} \delta_{(x_1 - \cos(\theta), x_2 - \sin(\theta))}.
    \ee
 Define $\mu = \tfrac{1}{3} \delta_{(-1,0)} + \tfrac{1}{3} \delta_{(0,0)} + \tfrac{1}{3} \delta_{(1,0)}$, $\theta_n = \pi (1 - \frac{1}{2(n+1)})$,
 and $\nu_n$ as the second marginal of the martingale transport $\mu \otimes {\kappa_{\theta_n}} \in \cM(\mu, \nu_n)$, depicted as red dots in Figure \ref{fig:counterexample-1}) for each $n \in \mathbb{N}$. Then as $n \to \infty$, we have $\nu_n \to \nu_\infty := \kappa_\pi \mu = \frac{1}{6} \delta_{(-2,0)} + \frac{1}{6} \delta_{(-1,0)} + \frac{1}{3} \delta_{(0,0)} + \frac{1}{6} \delta_{(1,0)} + \frac{1}{6} \delta_{(2,0)}$.
    \begin{figure}[ht]
        \centering
        \begin{minipage}{0.4\textwidth}
            \centering
            \begin{tikzpicture}
                \begin{axis}[
                    axis lines = middle,
                    xlabel = {$x$},
                    ylabel = {$y$},
                    xmin=-3, xmax=3,
                    ymin=-1, ymax=1,
                    width=\linewidth,
                    height=3.5cm,
                    legend pos=north east,
                    legend cell align={left},
                    legend style={font=\tiny},
                    axis equal
                ]
                \addplot[only marks, mark=square*, mark size=2pt, blue] coordinates {(-1,0) (0,0) (1,0)};
                \addlegendentry{$\mu$}
                
                \addplot[only marks, mark=*, mark size=2pt, red] coordinates {
                    (-1 - 0.707, 0.707)
                    (-1 + 0.707, -0.707)
                    ( - 0.707, 0.707)
                    ( + 0.707, -0.707)
                    (1 - 0.707, 0.707)
                    (1 + 0.707, -0.707)
                };
                \addlegendentry{$\nu_{1}$}
                \end{axis}
            \end{tikzpicture}
            \caption{Support of $\mu$ and $\nu_{1}$}
            \label{fig:counterexample-1}
        \end{minipage}%
        \begin{minipage}{0.4\textwidth}
            \centering
            \begin{tikzpicture}
                \begin{axis}[
                    axis lines = middle,
                    xlabel = {$x$},
                    ylabel = {$y$},
                    xmin=-3, xmax=3,
                    ymin=-1, ymax=1,
                    width=\linewidth,
                    height=3.5cm,
                    legend pos=north east,
                    legend cell align={left},
                    legend style={font=\tiny},
                    axis equal
                ]
                \addplot[only marks, mark=square*, mark size=2pt, blue] coordinates {
                    (-2, 0)
                    (-1, 0)
                    (0, 0)
                    (1, 0)
                    (2, 0)
                };
                \addlegendentry{$\mu_\infty$}
                
                \addplot[only marks, mark=*, mark size=2pt, red] coordinates {
                    (-2, 0)
                    (-1, 0)
                    (0, 0)
                    (1, 0)
                    (2, 0)
                };
                \addlegendentry{$\nu_\infty$}
                \end{axis}
            \end{tikzpicture}
            \caption{Support of $\mu_\infty = \nu_{\infty}$}
            \label{fig:counterexample-2}
        \end{minipage}
    \end{figure}
    
Set $\cM_{\nu,\text{MON}}  = \{\mu \in \cD_{\text{MON}} \mid \mu \C \nu\}$. 
    For each $n \in \N \cup \{\infty\}$, let $\mu_n$ be the solution to \eqref{maxvar} with respect to $\nu_n$, which aims to maximize the variance by optimally spreading out the mass. 
    This results in $\mu_n = \mu$ for all $n \in \N$ due to the monotonicity constraint. However, in the limit $n \to \infty$, $\nu_\infty$ itself is monotone, implying $\mu_\infty = \nu_\infty$. This example shows that $\mu$ is not stable as a solution to \eqref{maxvar} with respect to $\nu$ in the sense that $\nu_n \to \nu_\infty \centernot\implies \mu_n \to \mu_\infty$, demonstrating the instability of problem \eqref{problem}.
\end{example}

In contrast to \eqref{problem}, the relaxed problem \eqref{eqn: relaxed problem} enjoys the following stability property.
\begin{proposition}[Stability of the relaxed problem]
If $\nu_n \to \nu$ in $\Wt$ metric, and $\mu_n \in \argmin_{\mu \in \cD} \mathcal{W}_2^2(\mu, \nu_n)$ converges in $\Wt$ to $\mu^*$, then $\mu^* \in \argmin_{\mu \in \cD} \mathcal{W}_2^2(\mu, \nu)$.
\end{proposition}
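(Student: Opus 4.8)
The plan is to run the standard ``limits of minimizers are minimizers of the limit problem'' argument, driven entirely by the joint continuity of the Wasserstein metric. Two ingredients suffice: first, that $\mathcal{W}_2$ is $1$-Lipschitz in each of its arguments (by the triangle inequality), and second, that the domain $\cD$ is closed in the $\Wt$-metric, which is the standing hypothesis verified for all the domains $\cD_L$, $\cD_{\text{MON}}$, and $\cD^m$ in the examples above. Closedness guarantees that the limit $\mu^*$ of the minimizers $\mu_n \in \cD$ again lies in $\cD$, so that $\mu^*$ is at least an admissible competitor for the limiting problem.

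First I would record the convergence of the relevant values. Since
\[
\big|\mathcal{W}_2(\mu_n,\nu_n) - \mathcal{W}_2(\mu^*,\nu)\big| \le \mathcal{W}_2(\mu_n,\mu^*) + \mathcal{W}_2(\nu_n,\nu) \longrightarrow 0,
\]
we obtain $\mathcal{W}_2^2(\mu_n,\nu_n) \to \mathcal{W}_2^2(\mu^*,\nu)$. Likewise, for each fixed competitor $\mu \in \cD$, the triangle inequality applied to $\mathcal{W}_2(\mu,\nu_n)$ and $\mathcal{W}_2(\mu,\nu)$ gives $\mathcal{W}_2^2(\mu,\nu_n) \to \mathcal{W}_2^2(\mu,\nu)$ as $n \to \infty$.

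Next I would pass to the limit in the optimality inequality. For every $n$ and every $\mu \in \cD$, optimality of $\mu_n$ for the data $\nu_n$ gives $\mathcal{W}_2^2(\mu_n,\nu_n) \le \mathcal{W}_2^2(\mu,\nu_n)$. Letting $n \to \infty$ and invoking the two convergences just established yields $\mathcal{W}_2^2(\mu^*,\nu) \le \mathcal{W}_2^2(\mu,\nu)$ for every $\mu \in \cD$. Combined with $\mu^* \in \cD$, this is exactly the assertion that $\mu^* \in \argmin_{\mu\in\cD}\mathcal{W}_2^2(\mu,\nu)$.

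There is essentially no hard step here: the objective $\mu \mapsto \mathcal{W}_2^2(\mu,\nu)$ is continuous and the constraint set is closed, which is precisely the setting in which stability of minimizers holds automatically. The one point deserving attention is the membership $\mu^* \in \cD$, which is why the $\Wt$-closedness of $\cD$ must be invoked; without it the conclusion could fail, not because the value estimates break down, but simply because the limiting measure $\mu^*$ would not be an admissible competitor.
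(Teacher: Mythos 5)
Your proof is correct and follows essentially the same route as the paper's: pass to the limit in the optimality inequality $\mathcal{W}_2^2(\mu_n,\nu_n) \le \mathcal{W}_2^2(\xi,\nu_n)$, $\xi \in \cD$, using the continuity of $\Wt$ with respect to its own topology. Your extra step of securing $\mu^* \in \cD$ via $\Wt$-closedness addresses a point the paper's proof leaves implicit (closedness is not even stated as a hypothesis of the proposition), so your version is, if anything, slightly more complete.
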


This proposition implies that if $\cD$ is compact and \eqref{eqn: relaxed problem} has a unique minimizer $\mu^*$ for $\nu$, then any sequence of minimizers $\mu_n$ for $\nu_n$ converges to $\mu^*$.

\begin{proof}
Since $\mu_n$ minimizes $\mathcal{W}_2^2(\cdot, \nu_n)$ over $\cD$, we have
\[
    \mathcal{W}_2^2(\mu_n, \nu_n) \leq \mathcal{W}_2^2(\xi, \nu_n) \quad \forall \xi \in \cD.
\]
By the continuity of $\Wt$ with respect to its own topology, taking the limit yields
\[
    \Wtq(\mu^*, \nu) = \lim_{n \to \infty} \Wtq(\mu_n, \nu_n) \leq \lim_{n \to \infty} \mathcal{W}_2^2(\xi, \nu_n) = \Wtq(\xi, \nu) \quad \forall \xi \in \cD.
\]
This confirms that $\mu^*$ is a minimizer of $\mathcal{W}_2^2(\cdot, \nu)$ over $\cD$.
\end{proof}

\subsection{Preliminary numerical simulations}
As a proof of concept, we generate 110 and 1000 observations of $Y = X + R$, where $X$ is monotone and $R$ is Gaussian noise. We set the domain $\cD$ to be either the set of measures supported on a monotone set or the set of measures supported on a curve with finite length. We solve problem \eqref{problem} by adapting a generalized Lloyd algorithm.   The self-consistency constraint is softly enforced via a penalty term in the objective function. For comparison, we also generated principal curves based on different definitions from \cite{HastiStuetzle1989}, \cite{kegl2000learning}, and \cite{tibshirani1992principal}.\footnote{For the method proposed by \cite{tibshirani1992principal}, we only generate the curve using 110 data points, as we encounter overflow issues with larger sample sizes.} The results are shown in Figure \ref{fig:monotone}. The figures illustrate that the curve fitting method based on problem \eqref{problem} performs reasonably well in capturing the underlying structure of the data compared to other principal curve methods.
\begin{figure}[ht]
    \centering
\includegraphics[width=0.95\textwidth, trim=10 2 10 3.5, clip]{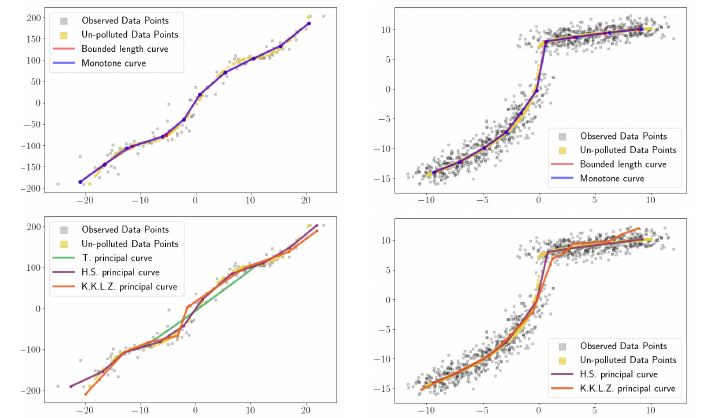}
    \caption{
Principal curve fitting methods applied to 110 data points (Left) and 1000 monotone data points (Right). "Monotone curve" and "Bounded length curve" correspond to our solutions of problem \eqref{problem} over the domain $\cD$, defined as either the set of measures supported on a monotone set or those supported on a curve with finite length. These solutions are computed using a generalized Lloyd algorithm, where the self-consistency constraint is softly enforced via a penalty term in the objective function. 
For comparison, we also present the results of fitting principal curves using the methods by Hastie-Stuetzle \cite{HastiStuetzle1989}, Kégl-Krzyzak-Linder-Zeger \cite{kegl2000learning}, and Tibshirani \cite{tibshirani1992principal}.
}
    \label{fig:monotone}
\end{figure}

\section{The Kantorovich dominance relation}\label{sec3}

Despite its conceptual appeal, the learning problem with self consistency \eqref{problem} (or equivalently \eqref{maxvar}) has certain drawbacks: the convex order constraint is difficult to check, the problem may be unstable (recall Example \ref{counterexample:stability}), and the intersection of the certain natural domains $\cD$ with  the set $\{\mu \, | \, \mu \C \nu\}$ may be nearly empty, and thus not contain a reasonable solution (see, for example, the application in Section \ref{applications}).

These challenges prompt us to introduce a weaker relation, the Kantorovich dominance relation, which still captures some aspects of the martingale property, while avoiding some of the computational and theoretical difficulties associated with the convex order relation. 

Let $\langle \cdot , \cdot \rangle$ denote the inner product. For $\mu, \nu \in \cP_2(\R^d)$, we say that $\mu$ is less than $\nu$ in the Kantorovich dominance relation (KDR), and write $\mu \K \nu$, if there is $\pi \in \Pi(\mu,\nu)$ such that
\be\label{weakconvexorder}
 \int \langle x - b_\nu, y-x \rangle d\pi(x,y) = 0, \  \text{ where } \ b_\nu = \int y \, d\nu(y) \text{ is the barycenter of } \nu.
\ee
Let $T_k(x) = x + k$ be a translation map. \eqref{weakconvexorder} says \(\mu \K \nu\) iff the translated measures \( \mu' = {T_{-b_\nu}}_\#(\mu)\) and \( \nu' = {T_{-b_\nu}}_\# (\nu) \) (thus $\nu'$ is centered) admits a coupling \(\pi' \in \Pi(\mu',\nu')\) such that
\[
\int \langle x, y-x \rangle \,d\pi'(x,y)=0.
\]
By definition, the KDR relation is translation invariant: if \(\mu \K \nu\), then for any common shift vector \(a\), the translated measures \(\mu_a := (T_a)_\#\mu\) and \(\nu_a := (T_a)_\#\nu\) also satisfy \(\mu_a \K \nu_a\). Accordingly, throughout this paper we assume for simplicity that \(\nu\) is centered (i.e., \(\nu \in \cP_{2,0}(\R^d)\)), unless stated otherwise. Since \(\nu\) represents the given data distribution, this centering step is straightforward in practice.

\begin{remark}
    If $\mu \C \nu$, then there exists a martingale coupling $\pi  \in \cM(\mu,\nu)$ such that
    \[
    \int \langle x, y-x \rangle d\pi(x,y) = \int \langle x, \Big(\int (y-x) d\pi_x(y) \Big) \rangle d\mu(x) = 0.
    \]
Therefore, the KDR is a relaxation of the convex order relation, capturing a weaker form of the martingale property.  Precisely,  $v(x): =\int (y-x) d\pi_x(y) $ for some coupling $\pi$, the KDR requires only that $v$ is orthogonal to $\id$ in $L^2(\mu)$.
\end{remark}

A key part of the motivation for the Kantorovich dominance is that it arises as a sort of optimality condition for the relaxed problem \eqref{eqn: relaxed problem} for a certain class of domains (see Theorem \ref{thm: kant equivalence to relaxed prob} below), analagously to the  way that self consistency arises as an optimality condition for particularly simple domains (Proposition \ref{prop: equivalence between relaxed and full problems}).

\begin{remark}
If $\nu$ is centered, for $\pi = \mu \otimes \nu$ (the product measure), we have
   \[
   \int \langle x, y-x \rangle d\pi(x,y) = \langle \int x\, d\mu(x), \int y \, d\nu(y) \rangle  - \int |x|^2 d\mu(x) = -\int |x|^2 d\mu(x) \leq 0.
   \]
   Since the image of the weakly continuous mapping $\pi \mapsto \int \langle x, y-x \rangle d\pi(x,y)$ over the connected set $\Pi(\mu,\nu)$ is itself connected, we conclude that \eqref{weakconvexorder} is equivalent to
   \be\label{weakconvexorder3}
   D(\mu, \nu) := \sup_{\pi \in \Pi(\mu,\nu)} \int \langle x, y-x \rangle d\pi(x,y) \geq 0.
   \ee
\eqref{weakconvexorder3} motivates the term {\em Kantorovich dominance}, as it shows that the "Kantorovich cost" \(\sup_{\pi \in \Pi(\mu,\nu)} \int \langle x, y \rangle d\pi\) dominates (i.e., is greater than or equal to) the second moment of \(\mu\).
 
 Since $ \displaystyle \Wtq(\mu, \nu) = \int |x|^2 d\mu + \int |y|^2 d\nu - 2\sup_{\pi \in \Pi(\mu,\nu)} \int \langle x, y \rangle d\pi$, \eqref{weakconvexorder3} is equivalent to
   \be\label{eqn: weak convex order with Wass}
   \Wtq(\mu, \nu) \leq \int |y|^2 d\nu - \int |x|^2 d\mu.
   \ee
\end{remark}

It follows immediately from \eqref{eqn: weak convex order with Wass} that the set of probability measures dominated by $\nu$
\[
\cM_{\nu, \Omega}^K := \{ \mu \in \cP_2(\Omega) \mid \mu \K \nu \}
\]
is closed with respect to $\Wt$-metric for any compact $\Omega \subset \R^d$. However, in general, the set
\[
    \cM_\nu^K := \{\mu \in \mathcal{P}_2(\R^d) \mid \mu \K \nu\}
\]
is not closed with respect to the $\Wt$ metric (see later discussion). 

\begin{remark}
    Let $B_\mu(r) = \left\{ \rho  \mid \Wt(\mu, \rho) \leq r \right\}$.
    \cite{WieselZhang2023} showed that $\mu \C \nu$     if and only if 
        \be\label{Wiesel}
        \Wtq(\nu, \rho) - \Wtq(\mu, \rho) \leq \int |y|^2 d\nu - \int |x|^2 d\mu 
    \ee
    holds for all $\rho \in B_\mu(\infty)=\mathcal{P}_2(\mathbb{R}^d)$. In contrast, $\mu \K \nu$ if \eqref{Wiesel} holds for $\rho \in B_\mu(0)$, i.e., for $\rho=\mu$. Exploring intermediate radii $r \in (0, \infty)$ remains an open direction for further study.
\end{remark}

We observe that the tail probability of $\mu$ is controlled by the second moment of $\nu$ if $\mu \K \nu$.

\begin{lemma}\label{lemma:moment_bound}
If $\mu \K \nu$, then we have
    \be
 \mu(\{|x|^2 \geq \alpha\}) \leq \frac{1}{\alpha}\int_Y |y|^2 d\nu \quad \text{for every } \ \alpha >0. \label{moment}
    \ee
\end{lemma}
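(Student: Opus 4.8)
The plan is to combine Markov's inequality with the elementary second-moment domination that the Kantorovich dominance already encodes. Since the paper assumes $\nu$ is centered and the KDR is translation invariant, I would work throughout under the standing assumption $b_\nu = 0$.

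First I would record the key auxiliary fact that $\mu \K \nu$ forces $\int |x|^2\,d\mu \le \int |y|^2\,d\nu$. This is immediate from the equivalent characterization \eqref{eqn: weak convex order with Wass}, namely $\Wtq(\mu,\nu) \le \int |y|^2\,d\nu - \int |x|^2\,d\mu$: since $\Wtq(\mu,\nu) \ge 0$, rearranging gives the claimed comparison of second moments. (Alternatively, one can argue directly from the defining coupling: if $\pi \in \Pi(\mu,\nu)$ satisfies $\int \langle x, y-x\rangle\,d\pi = 0$, then $\int |x|^2\,d\mu = \int \langle x,y\rangle\,d\pi \le \big(\int |x|^2\,d\mu\big)^{1/2}\big(\int |y|^2\,d\nu\big)^{1/2}$ by Cauchy--Schwarz, and dividing through by $\big(\int |x|^2\,d\mu\big)^{1/2}$ when it is nonzero, or noting the bound is trivial when it vanishes, yields the same conclusion.)

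Next I would apply Markov's inequality to the nonnegative random variable $|X|^2$ with $X \sim \mu$: for every $\alpha > 0$,
\[
\mu(\{|x|^2 \ge \alpha\}) \le \frac{1}{\alpha}\int |x|^2\,d\mu.
\]
Chaining this with the second-moment bound from the previous step gives $\mu(\{|x|^2 \ge \alpha\}) \le \frac{1}{\alpha}\int |y|^2\,d\nu$, which is exactly \eqref{moment}.

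There is no genuine obstacle here: the only substantive ingredient is the second-moment domination, and that is a one-line consequence of the equivalent formulations of the KDR established just above the lemma. The remaining step is the classical Markov inequality, so the argument is short and self-contained once the comparison $\int |x|^2\,d\mu \le \int |y|^2\,d\nu$ is in hand.
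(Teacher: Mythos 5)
Your proof is correct and follows essentially the same route as the paper's: invoke the equivalent characterization \eqref{eqn: weak convex order with Wass}, drop the nonnegative term $\Wtq(\mu,\nu)$ to get $\int |x|^2\,d\mu \le \int |y|^2\,d\nu$, and finish with Markov's inequality applied to $|X|^2$. The Cauchy--Schwarz aside is a valid alternative derivation of the second-moment comparison, but it does not change the substance of the argument.
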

 \eqref{moment} easily follows from \eqref{eqn: weak convex order with Wass} and Markov's inequality:
\[ \int |y|^2 d\nu \geq \Wtq(\mu, \nu) + \int |x|^2 d\mu 
        \geq \int |x|^2 d\mu 
        \geq \alpha \mu(\{|x|^2 \geq \alpha\}). 
        \]
If we further assume that $\Omega$ is compact and $\nu(\Omega)=1$ (e.g., $\Omega = \text{con}(\spt(\nu))$ with compact $\spt(\nu)$, where $\text{con}(\Omega)$ denotes the convex hull of $\Omega$), then we obtain a stronger compactness.

\begin{corollary}\label{cor:Kan_order_compact}
    $\cM_\nu^K$ is a precompact subset of $\cP(\R^d)$ with respect to the weak topology. If $\Omega$ is compact, then $\cM_{\nu, \Omega}^K$ is compact in both the weak and $\Wt$ topologies.
\end{corollary}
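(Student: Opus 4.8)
The plan is to reduce both assertions to Prokhorov's theorem combined with the tail estimate of Lemma \ref{lemma:moment_bound}, and, for the second assertion, to the $\Wt$-closedness of $\cM_{\nu,\Omega}^K$ already recorded immediately after \eqref{eqn: weak convex order with Wass}. The only real work is to convert the uniform moment control into tightness, and then, in the compact-support case, to note that weak and $\Wt$ convergence coincide so that weak precompactness plus closedness upgrades to $\Wt$-compactness.

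For the first claim I would show that $\cM_\nu^K$ is tight. Lemma \ref{lemma:moment_bound} gives the uniform tail estimate $\mu(\{|x|^2 \ge \alpha\}) \le \alpha^{-1}\int |y|^2\,d\nu$ valid for every $\mu \K \nu$, where the right-hand side is a fixed finite constant independent of $\mu$. Hence, given $\epsilon > 0$, choosing $\alpha$ so large that $\alpha^{-1}\int |y|^2\,d\nu < \epsilon$ forces every $\mu \in \cM_\nu^K$ to place mass at least $1-\epsilon$ on the closed (hence compact) ball $\{|x|^2 \le \alpha\}$. This is exactly uniform tightness, so by Prokhorov's theorem $\cM_\nu^K$ is precompact in the weak topology, which is all the first statement claims.

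For the second claim, suppose $\Omega$ is compact. Then every $\mu \in \cM_{\nu,\Omega}^K$ is supported in $\Omega$, so the family is trivially tight and weakly precompact. To obtain compactness I need closedness: if $\mu_n \in \cM_{\nu,\Omega}^K$ converges weakly to $\mu$, then $\spt(\mu) \subset \Omega$ since $\Omega$ is closed, and $\mu \K \nu$ because the characterization \eqref{eqn: weak convex order with Wass}, namely $\Wtq(\mu,\nu) \le \int|y|^2\,d\nu - \int|x|^2\,d\mu$, passes to the limit — this is precisely the $\Wt$-closedness already established. Finally, since all competitors are supported in the fixed compact set $\Omega$, their second moments are uniformly bounded and the tails $\int_{|x|\ge R}|x|^2\,d\mu$ vanish identically once $R$ is large; hence by \cite[Theorem 7.12]{villani2003topics} weak convergence and $\Wt$-convergence agree on $\cP(\Omega)$. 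Thus $\cM_{\nu,\Omega}^K$ is compact in both topologies.

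The only genuine subtlety is the closedness step, that is, checking that the Kantorovich dominance relation is preserved under weak limits. This is clean on compact $\Omega$ precisely because weak convergence there forces convergence of both $\Wtq(\cdot,\nu)$ and the second moment $\mu \mapsto \int|x|^2\,d\mu$, so the defining inequality \eqref{eqn: weak convex order with Wass} survives in the limit. On all of $\R^d$ this argument breaks down, since the second moment is not weakly continuous without uniform integrability; this is exactly why the unbounded case yields only weak precompactness and not $\Wt$-closedness, consistent with the earlier remark that $\cM_\nu^K$ need not be $\Wt$-closed.
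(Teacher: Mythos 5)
Your proof is correct and follows essentially the same route as the paper: Lemma \ref{lemma:moment_bound} plus Prokhorov's theorem for weak precompactness of $\cM_\nu^K$, and then, for compact $\Omega$, closedness of $\cM_{\nu,\Omega}^K$ (via the stability of \eqref{eqn: weak convex order with Wass} under limits) combined with the equivalence of weak and $\Wt$ convergence on measures supported in a fixed compact set. The only difference is that you spell out the tightness and closedness steps that the paper leaves implicit, which is a faithful expansion rather than a different argument.
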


\begin{proof}
    By Lemma \ref{lemma:moment_bound} and Prokhorov's theorem, $\cM_\nu^K$ is precompact in the weak topology. Since ${\mathcal W}_2$ convergence is equivalent to weak convergence when $\Omega$ is compact (see Remark 2.8 of \cite{ambrosio2013}), and $\cM_{\nu, \Omega}^K$ is closed, the result follows.
\end{proof}

However, $\cM_\nu^K$ is not necessarily compact in the ${\mathcal W}_2$ topology, even if $\spt(\nu)$ is compact. This motivates the consideration of the domain $\cM_{\nu, \Omega}^K$with a compact $\Omega$ instead of $\cM_\nu^K$.

\begin{example}
   Let $\si_c \in \cP(\R^d)$ denote the uniform probability measure over a sphere with radius $c \ge 0$ centered at the origin. For $0 \leq r \leq 1 \leq R$, let $\mu = (1-\la)\si_r + \la \si_R$ and $\nu = \si_1$. The optimal coupling $\pi$ for $\Wtq(\mu, \nu)$ sends $(1-\la)\si_r$ radially to $(1-\la)\si_1$ and $\la \si_R$ to $\la \si_1$. 

    If $r \in (0,1)$ is fixed, and let
        $R = \frac{1 + \sqrt{1 + 4(r-r^2)(\frac{1}{\la} - 1)}}{2}$,
    it is straightforward to verify that inequality \eqref{eqn: weak convex order with Wass} still holds, and hence $\mu \K \nu$. This shows that the support of $\mu$ can be arbitrarily large (by letting $\lambda \to 0$), and that the Kantorovich order relation is not equivalent to the convex order relation, even in one dimension.

    This example also demonstrates that $\cM_\nu^K$ is not $\Wt$ compact in $\cP_2(\R^d)$, even if $\nu$ is compactly supported. If we let $r = 1/2$, $R$ as above, and $\mu_\la = (1-\la)\si_r + \la\si_R$, then as $\la \to 0$, $\mu_\la$ converges weakly to the uniform probability measure $\mu$ over a sphere with radius $1/2$ and second moment $1/4$. However, as $\la \to 0$, we have $\int |x|^2 d\mu_\la \to 1/2 \neq 1/4$.
\end{example}

Although the KDR reflects certain aspects of the convex order, it is not itself a partial order, as it generally fails to satisfy transitivity. We illustrate this with examples in one and two dimensions.

\begin{example}
  Let $\mu = \frac{1}{3} \delta_{-4} + \frac{2}{3} \delta_{2}$, $\nu = \frac{1}{3} \delta_{-4} + \frac{1}{3} \delta_{0} + \frac{1}{3} \delta_{4}$ and $\theta = \frac{2}{3} \delta_{-3} + \frac{1}{3} \delta_{6}$. Then $\Wtq(\mu, \nu) = \frac{8}{3}$, ${\mathcal W}_2^2(\nu, \theta) = \frac{14}{3}$ and ${\mathcal W}_2^2(\mu, \theta) = 14$. On the other hand, $\int |x|^2 d\mu = 8$, $\int |y|^2 d\nu = \frac{32}{3}$ and $\int |z|^2 d\theta = 18$. By \eqref{eqn: weak convex order with Wass}, $\mu \K \nu$ and $\nu \K \theta$, but we do not have $\mu \K \theta$.
\end{example}

\begin{example}
 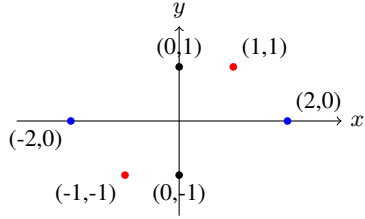
\begin{figure}
        \centering
        \begin{tikzpicture}[scale=0.72] 
            \draw[->] (-3,0) -- (3,0) node[right] {$x$};
            \draw[->] (0,-1.75) -- (0,1.75) node[above] {$y$};
            
            \fill (0,1) circle (2pt);
            \fill[red] (1,1) circle (2pt);
            \fill[blue] (2,0) circle (2pt);
        
            \fill (0,-1) circle (2pt);
            \fill[red] (-1,-1) circle (2pt);
            \fill[blue] (-2,0) circle (2pt);
        
            \node[above] at (0,1) {(0,1)};
            \node[above right] at (1,1) {(1,1)};
            \node[above right] at (2,0) {(2,0)};
        
            \node[below] at (0,-1) {(0,-1)};
            \node[below left] at (-1,-1) {(-1,-1)};
            \node[below left] at (-2,0) {(-2,0)};
        
        \end{tikzpicture}
        \caption{An example showing that the KDR is not transitive and hence not a partial order.}
        \label{fig:2d_countex}
    \end{figure}
    Let $\mu = \frac{1}{2} \delta_{(0,-1)} + \frac{1}{2} \delta_{(0,1)}$, $\nu = \frac{1}{2} \delta_{(-1,-1)} + \frac{1}{2} \delta_{(1,1)}$, and $\theta = \frac{1}{2} \delta_{(-2,0)} + \frac{1}{2} \delta_{(2,0)}$ (see Figure \ref{fig:2d_countex}). Clearly, $\Wtq(\mu, \nu) = 1$ and ${\mathcal W}_2^2(\nu, \theta) = 2$. Meanwhile, $\int |x|^2 d\mu = 1$, $\int |y|^2 d\nu = 2$, and $\int |z|^2 d\theta = 4$. Thus, $\mu \K \nu$ and $\nu \K \theta$ by \eqref{eqn: weak convex order with Wass}. However, since $\Wtq(\mu, \theta) = 5 > 3 =  \int |z|^2 d\theta - \int |x|^2 d\mu$, we do not have $\mu \K \theta$.
\end{example}

\section{Maximizing variance under the Kantorovich dominance relation}\label{sec4}

Throughout the sequel, we will assume that $\Omega$ is a compact, convex set containing $0$, unless stated otherwise. An example is $\Omega = \text{con}(\spt(\nu))$. Let $\nu \in \cP_{2,0}(\Omega)$.  Motivated by our earlier discussion, given a domain $\cD \subset \cP_2(\R^d)$, we now study the following variance maximization problem:
\be\label{weakproblem}
    \max_{\mu \in \cD \cap \cM_{\nu, \Omega}^K} \Var (\mu).
\ee

\begin{theorem}
    \begin{enumerate}[label=\roman*), ref=\roman*]
        \item  If $\cD \subset \cP_2(\R^d)$ is closed in the ${\mathcal W}_2$-metric and $ \cD \cap \cM_{\nu, \Omega}^K$ non-empty, then problem \eqref{weakproblem} admits a solution.

        \item Let $\mu^*$ be any solution to \eqref{weakproblem}. Then for any  $\rho \in \cD \cap \cM_\nu^K \cap \cP_{2,0}(\R^d)$, we have:
        \[
            \Wt(\mu^*, \rho) \leq \sqrt{\int |y|^2 d\nu - \int |z|^2 d\rho} + \Wt(\nu, \rho).
        \]
        Consequently, problem \eqref{weakproblem} exhibits a \textit{denoising property}, meaning that as the noise in the measure $\nu$ decreases, the optimal solution $\mu^*$ tends to recover the original measure.
    \end{enumerate}
\end{theorem}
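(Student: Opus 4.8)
The plan is to follow the template of Theorem~\ref{existence}, substituting the convex-order compactness and the exact martingale identity by their Kantorovich-dominance counterparts: Corollary~\ref{cor:Kan_order_compact} in place of Lemma~\ref{compactness}, and the one-sided inequality \eqref{eqn: weak convex order with Wass} in place of the equality $\Wtq(\mu^*,\nu)=\Var(\nu)-\Var(\mu^*)$ available only under convex order.

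For part i) the argument is a direct Weierstrass extraction. Since $\Omega$ is compact, Corollary~\ref{cor:Kan_order_compact} gives that $\cM_{\nu,\Omega}^K$ is compact in the $\Wt$-topology. Intersecting with the $\Wt$-closed set $\cD$ preserves compactness, so $\cD\cap\cM_{\nu,\Omega}^K$ is a nonempty $\Wt$-compact set, and as $\mu\mapsto\Var(\mu)$ is $\Wt$-continuous it attains a maximum there, producing the desired $\mu^*$.

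For part ii) I would begin from $\Wt(\mu^*,\rho)\le\Wt(\mu^*,\nu)+\Wt(\nu,\rho)$ and estimate the first term. Because $\mu^*\K\nu$, the characterization \eqref{eqn: weak convex order with Wass} gives $\Wtq(\mu^*,\nu)\le\int|y|^2\,d\nu-\int|x|^2\,d\mu^*$. Since measures in $\cD$ need not be centered here (unlike in Theorem~\ref{existence}), I would then bridge second moment and variance via the elementary bound $\int|x|^2\,d\mu^*\ge\Var(\mu^*)$, and invoke optimality of $\mu^*$ in \eqref{weakproblem} together with the fact that $\rho$ is centered to write $\int|x|^2\,d\mu^*\ge\Var(\mu^*)\ge\Var(\rho)=\int|z|^2\,d\rho$. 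Chaining these yields $\Wtq(\mu^*,\nu)\le\int|y|^2\,d\nu-\int|z|^2\,d\rho$, and substituting into the triangle inequality gives the stated estimate. The denoising conclusion then follows because $\Wt(\nu,\rho)\to0$ forces $\int|y|^2\,d\nu\to\int|z|^2\,d\rho$ (second moments converge under $\Wt$), so both terms on the right vanish and $\Wt(\mu^*,\rho)\to0$.

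The one step that genuinely requires care is the optimality comparison $\Var(\mu^*)\ge\Var(\rho)$: this is valid only if $\rho$ is an admissible competitor, i.e.\ $\rho\in\cD\cap\cM_{\nu,\Omega}^K$, whereas the hypothesis only places $\rho$ in $\cM_\nu^K$. I would therefore verify that the relevant $\rho$ has $\spt(\rho)\subset\Omega$ (which holds, for instance, when $\Omega=\text{con}(\spt(\nu))$ and $\rho$ is the underlying signal), so that it competes in \eqref{weakproblem}; otherwise the comparison must be restricted to feasible $\rho$. Everything else reduces to the routine manipulations of \eqref{eqn: weak convex order with Wass} and the triangle inequality, so this feasibility check is the only real obstacle.
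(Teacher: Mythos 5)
Your proposal is correct and follows essentially the same route as the paper's own proof: part i) is the identical compactness-plus-continuity argument via Corollary \ref{cor:Kan_order_compact}, and part ii) is the same chain, namely the triangle inequality, the bound $\Wtq(\mu^*,\nu)\le\int|y|^2\,d\nu-\int|x|^2\,d\mu^*$ from \eqref{eqn: weak convex order with Wass}, and then $\int|x|^2\,d\mu^*\ge\Var(\mu^*)\ge\Var(\rho)=\int|z|^2\,d\rho$ using optimality and the fact that $\rho$ is centered.

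The feasibility caveat you flagged is genuine, and it is worth noting that the paper's own proof does not address it: the theorem hypothesizes only $\rho\in\cD\cap\cM_\nu^K\cap\cP_{2,0}(\R^d)$, yet the optimality comparison $\Var(\mu^*)\ge\Var(\rho)$ requires $\rho$ to be a competitor in \eqref{weakproblem}, i.e.\ $\rho\in\cD\cap\cM_{\nu,\Omega}^K$, which demands $\spt(\rho)\subset\Omega$. Unlike the convex order (where $\rho\C\nu$ forces $\spt(\rho)$ into the closed convex hull of $\spt(\nu)$), Kantorovich dominance does not confine the support of $\rho$ --- the paper's own sphere example shows $\mu\K\nu$ with arbitrarily large support --- so the containment must either be assumed or verified in the intended application (e.g.\ $\rho$ the true signal and $\Omega$ chosen large enough). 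Your explicit identification of this as the one step requiring care is a point where your write-up is more precise than the paper's.
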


\begin{proof}
    i) Since $\cM_{\nu, \Omega}^K$ is compact in the ${\mathcal W}_2$-metric by Corollary \ref{cor:Kan_order_compact}, and $\cD$ is ${\mathcal W}_2$-closed, it follows that $\cD \cap \cM_{\nu, \Omega}^K$ is also ${\mathcal W}_2$-compact. The continuity of the variance functional with respect to the ${\mathcal W}_2$-metric guarantees that problem \eqref{weakproblem} admits a solution.

    ii) The proof is similar to the proof of Theorem \ref{existence}.
    \begin{align*}
        \Wt (\mu^*, \rho) &\leq \Wt(\mu^*, \nu) + \Wt(\nu, \rho) \\
        &\leq \sqrt{\int |y|^2 d\nu - \int |x|^2 d\mu^*} + \Wt(\nu, \rho)
        \leq \sqrt{\int |y|^2 d\nu - \int |z|^2 d\rho} + \Wt(\nu, \rho),
    \end{align*}
    where the second inequality is by the KDR and the last inequality follows from the optimality of $\mu^*$, which gives $\E_\rho [|X|^2] = \Var(\rho) \le \Var(\mu^*) = \E_{\mu^*} [|X|^2] - | \E_{\mu^*} [X] |^2 \le  \E_{\mu^*} [|X|^2]$.
\end{proof}

\subsection{Stability}
Under a certain condition on $\cD$, we now show problem \eqref{weakproblem} is stable, in contrast to \eqref{problem}, as shown in Example \ref{counterexample:stability}.  The hypothesis we will impose on $\cD$ in the following definitions is quite weak; in particular, it is satisfied by all domains we have considered.  

\begin{definition}\label{def: approachability}
   Let $\nu \in \cP_{2,0}(\R^d)$.  We say that $\cD$ is {\em approachable from the interior with respect to $\nu$} if, for any $\mu \in \cD$ with $\mu \neq \delta_0$ and $\mu \K \nu$, there exists a sequence $\{ \mu_n \} \subset \cD$ such that $\Wt(\mu_n, \mu) \to 0$ and $D(\mu_n, \nu) > 0$ in \eqref{weakconvexorder3} for all $n \in \N$. 
\end{definition}
The class of domains that are closed under contractions serves as a key example of domains that can be approached from the interior.

\begin{definition}
Let $\la_\# \mu$ denote the \textit{dilation} of $\mu$ by $\la \in \R$, i.e., $\la X \sim \la_\# \mu$ when $X \sim \mu$. 
We say that $\cD$ is \textit{closed under contractions} if $\la_\# \mu \in \cD $ for any $\mu \in \cD$ and $\la \in (0,1)$.
\end{definition}

\begin{lemma}
If $\cD$ is closed under contractions, then for any $\nu \in \mathcal{P}_{2,0}(\Omega)$ and  $\delta_0 \neq \mu \in \cD$ with $\mu \K \nu$, there is a sequence $\{\mu_n\} \subset \cD$ such that  $W_2(\mu_n,\mu) \rightarrow 0$ and $D(\mu_n, \nu) > 0$.
\end{lemma}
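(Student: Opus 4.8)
The plan is to use the contractions of $\mu$ themselves as the approximating sequence. Given $\delta_0 \neq \mu \in \cD$ with $\mu \K \nu$, I would pick any $\lambda_n \in (0,1)$ with $\lambda_n \to 1$ and set $\mu_n := (\lambda_n)_\#\mu$. Since $\cD$ is closed under contractions, each $\mu_n \in \cD$. The convergence $\Wt(\mu_n,\mu) \to 0$ is immediate from the explicit coupling $(\lambda_n X, X)$ with $X \sim \mu$, which gives $\Wtq(\mu_n,\mu) \le (1-\lambda_n)^2 \int |x|^2\,d\mu \to 0$. Thus the only substantive point is to verify $D(\mu_n,\nu) > 0$ in the sense of \eqref{weakconvexorder3}.

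For that, I would track how the Kantorovich functional $D(\cdot,\nu)$ transforms under dilation. Since $\lambda > 0$, the map $(x,y) \mapsto (\lambda x, y)$ is a bijection between $\Pi(\mu,\nu)$ and $\Pi(\lambda_\#\mu,\nu)$, so
\[
D(\lambda_\#\mu,\nu) = \sup_{\pi \in \Pi(\mu,\nu)} \int \langle \lambda x,\, y - \lambda x \rangle\, d\pi(x,y) = \lambda S - \lambda^2 M,
\]
where $S := \sup_{\pi \in \Pi(\mu,\nu)} \int \langle x, y\rangle\, d\pi$ and $M := \int |x|^2\, d\mu$, using that $\int |x|^2\,d\pi = M$ for every $\pi \in \Pi(\mu,\nu)$ (the first marginal is $\mu$). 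Both $S$ and $M$ are finite by Cauchy--Schwarz since $\mu, \nu \in \cP_2(\R^d)$.

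Then I would simply read off the sign. The hypothesis $\mu \K \nu$ is exactly $D(\mu,\nu) = S - M \ge 0$, i.e.\ $S \ge M$, while $\mu \neq \delta_0$ forces $M = \int |x|^2\,d\mu > 0$. Hence for every $\lambda \in (0,1)$,
\[
D(\lambda_\#\mu,\nu) = \lambda(S - \lambda M) = \lambda\big[(S-M) + (1-\lambda)M\big] \ge \lambda(1-\lambda)M > 0 .
\]
Applying this with $\lambda = \lambda_n$ yields $D(\mu_n,\nu) > 0$ for all $n$, completing the construction. There is no real obstacle here; the one point worth isolating is the middle computation. The key structural observation is that contracting $\mu$ lowers its second moment \emph{quadratically} (the $\lambda^2 M$ term) but lowers the Kantorovich inner-product term only \emph{linearly} (the $\lambda S$ term), so any genuine contraction strictly increases $D(\cdot,\nu)$ relative to the boundary value $D(\mu,\nu) \ge 0$, pushing $\mu$ into the open region $\{D(\cdot,\nu) > 0\}$. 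This quadratic-versus-linear gap is precisely what makes closedness under contractions a natural sufficient condition for approachability from the interior.
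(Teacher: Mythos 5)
Your proof is correct and follows essentially the same route as the paper's: both take $\mu_n = (\lambda_n)_\#\mu$ with $\lambda_n \uparrow 1$ and exploit the fact that dilation scales the inner-product term linearly in $\lambda$ but the second moment quadratically, so that $\mu \neq \delta_0$ forces $D(\mu_n,\nu)>0$. Your version is marginally tidier in that you compute $D(\lambda_\#\mu,\nu) = \lambda S - \lambda^2 M$ for the full supremum and thereby avoid the paper's case split between $D(\mu,\nu)>0$ and $D(\mu,\nu)=0$, and you also verify the $\Wt$-convergence explicitly, which the paper leaves implicit.
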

\begin{proof}
If $ D(\mu,\nu) > 0$ there is nothing to prove.  If not, there exists some $\pi \in \Pi(\mu,\nu)$ with $\int \langle x, y-x \rangle d\pi(x,y) =0$. Let $\lambda_n$  be an increasing sequence of positive numbers converging to $1$ and choose $\mu_n =\lambda_{n\#}\mu$ and $\pi_n = \big((x,y) \mapsto (\lambda_nx,y)\big)_\# \pi$, so that $\int \langle x, y-x \rangle d\pi_n(x,y) =\lambda_n\int \langle x, y \rangle d\pi(x,y)- \lambda_n^2\int|x|^2d\mu(x)  > \lambda_n\int \langle x, y \rangle d\pi(x,y)- \lambda_n\int|x|^2d\mu(x)=0$.
\end{proof}

The following result illustrates a stability property inherent to problem \eqref{weakproblem}, where $\Omega \subset \R^d$ is not required to be compact.

\begin{theorem}\label{thm:stability}
 Let $\nu \in \mathcal{P}_{2,0}(\Omega)$ and assume that $\cD$ is approachable from the interior w.r.t. $\nu$. Let $\{\nu_n\} \subset \cP_2(\Omega)$ such that ${\mathcal W}_2(\nu_n, \nu) \to 0$. Let $\mu_n$ be a solution to \eqref{weakproblem} with $\nu_n$. If ${\mathcal W}_2(\mu_n, \mu) \to 0$ for some $\mu$, then $\mu$ is a solution to \eqref{weakproblem} with $\nu$.
\end{theorem}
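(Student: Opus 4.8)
The plan is to show the limit $\mu$ satisfies the two defining properties of a solution to \eqref{weakproblem} for the data $\nu$: \emph{feasibility}, $\mu \in \cD \cap \cM_{\nu,\Omega}^K$, and \emph{optimality}, $\Var(\mu) \ge \Var(\tilde\mu)$ for every $\tilde\mu \in \cD \cap \cM_{\nu,\Omega}^K$. Throughout I use that the maps $\mu \mapsto \Var(\mu)$, $\mu \mapsto \int |x|^2\,d\mu$, $\mu \mapsto b_\mu$, and $(\mu,\nu)\mapsto \Wtq(\mu,\nu)$ are continuous in the $\Wt$-metric, together with the reformulation \eqref{eqn: weak convex order with Wass} of the KDR. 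Since exact convergence $\mu_n \to \mu$ is assumed, no compactness extraction is needed and $\Omega$ need not be compact.

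\textbf{Feasibility.} Since $\cD$ is $\Wt$-closed and $\mu_n \to \mu$, we get $\mu \in \cD$; since $\Wt$-convergence implies weak convergence and $\Omega$ is closed, $\spt(\mu)\subset\Omega$. For $\mu\K\nu$, I write $\mu_n \K \nu_n$ in the barycentric form of \eqref{eqn: weak convex order with Wass}, namely $\Wtq(\mu_n,\nu_n)\le \Var(\nu_n) - \int|x-b_{\nu_n}|^2\,d\mu_n$, and pass to the limit: as $b_{\nu_n}\to b_\nu = 0$, $\nu_n \to \nu$, and $\mu_n \to \mu$, every term converges and the inequality survives, giving $\Wtq(\mu,\nu)\le \int|y|^2\,d\nu - \int|x|^2\,d\mu$, i.e.\ $\mu\K\nu$. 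Hence $\mu \in \cD \cap \cM_{\nu,\Omega}^K$.

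\textbf{Optimality.} This is the crux, and where approachability is essential. Fix $\tilde\mu \in \cD \cap \cM_{\nu,\Omega}^K$; if $\tilde\mu = \delta_0$ then $\Var(\mu)\ge 0 = \Var(\tilde\mu)$ trivially, so assume $\tilde\mu \neq \delta_0$. The obstruction is that $\tilde\mu$ is feasible for $\nu$ but may fail $\tilde\mu \K \nu_n$, so it cannot be compared directly with $\mu_n$. By Definition \ref{def: approachability} there is a sequence $\tilde\mu_k \in \cD$ with $\Wt(\tilde\mu_k, \tilde\mu)\to 0$ and $D(\tilde\mu_k, \nu) > 0$ in \eqref{weakconvexorder3}. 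The key point is that strict dominance is an \emph{open} condition: since $D(\tilde\mu_k, \cdot)$ is $\Wt$-continuous and $\nu_n \to \nu$, for each fixed $k$ there is $N_k$ with $D(\tilde\mu_k, \nu_n)\ge 0$, i.e.\ $\tilde\mu_k \K \nu_n$, for all $n \ge N_k$. (The approximants may be taken in $\cP_2(\Omega)$; this is automatic when $\cD$ is closed under contractions, since $\Omega$ is star-shaped about $0$.) Thus $\tilde\mu_k \in \cD \cap \cM_{\nu_n,\Omega}^K$ for $n \ge N_k$, so maximality of $\mu_n$ gives $\Var(\mu_n)\ge \Var(\tilde\mu_k)$ for all such $n$. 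The conclusion follows from a double limit: for fixed $k$, letting $n\to\infty$ yields $\Var(\mu) = \lim_n \Var(\mu_n) \ge \Var(\tilde\mu_k)$ by continuity of the variance, and then letting $k \to \infty$ with $\Var(\tilde\mu_k)\to\Var(\tilde\mu)$ gives $\Var(\mu) \ge \Var(\tilde\mu)$. As $\tilde\mu$ was arbitrary, $\mu$ solves \eqref{weakproblem} for $\nu$.

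The main obstacle is precisely the moving feasible set $\cD \cap \cM_{\nu_n,\Omega}^K$: the closed KDR constraint is not robust to perturbations of the data $\nu$, which is exactly why the convex-order problem \eqref{problem} can be unstable (cf.\ Example \ref{counterexample:stability}). Approachability is the mechanism that restores stability, by letting us approach any boundary competitor through measures lying strictly inside the constraint, where feasibility persists under small changes in $\nu$. Everything else reduces to the routine $\Wt$-continuity of variance, second moments, barycenters, and the Wasserstein distance, which makes both the passage to the limit in the feasibility step and the double limit in the optimality step legitimate.
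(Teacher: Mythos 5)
Your proof is correct and follows essentially the same route as the paper's: approximate an arbitrary competitor by measures strictly inside the KDR constraint (via approachability), observe that strict dominance is an open condition so these approximants remain feasible for $\nu_n$ once $n$ is large, compare variances with $\mu_n$, and conclude with the double limit in $n$ and then in the approximation index. The only difference is that you additionally verify feasibility of the limit $\mu$ (using $\Wt$-closedness of $\cD$ and $\cP_2(\Omega)$), treat the $\tilde\mu = \delta_0$ case, and track the barycenters of the non-centered $\nu_n$ explicitly—points the paper's proof leaves implicit.
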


\begin{proof}
    Consider first an arbitrary  $\xi \in \cD \cap \cM_{\nu, \Omega}^K$ satisfying ${\mathcal W}_2^2(\xi, \nu) < \int |y|^2 d \nu(y) - \int |x|^2 d \xi(x)$. By the strict inequality, for all large $n$ we have $\xi \in \cD \cap \cM_{\nu_n,\Omega}^K$.  By optimality of $\mu_n$ in \eqref{weakproblem} with $\nu_n$, we have $\Var(\mu_n) \geq \Var(\xi)$; taking limits then yields $\Var(\mu) \geq \Var(\xi)$.

    Now consider any $\xi \in \cD \cap \cM_{\nu,\Omega}^K$.  Letting $\{\xi_m\} \subset \cD \cap \cM_{\nu,\Omega}^K$ approximate $\xi$ such that  ${\mathcal W}_2^2(\xi_m, \nu) < \int |y|^2 d \nu - \int |x|^2 d \xi_m$, the above argument yields  $\Var(\mu) \geq \Var(\xi_m)$; taking limits implies  $\Var(\mu) \geq \Var(\xi)$.  As $\xi \in \cD \cap \cM_{\nu,\Omega}^K$ is arbitrary, optimality of $\mu$ follows. 
\end{proof}

\begin{remark}
    A natural question is whether a quantitative version of the approachable from the interior condition yields quantitative stability of \eqref{weakproblem}.  Namely, suppose we  require that for every $\mu \neq \delta_0$ in $\cD \cap\cM_{\nu,\Omega}^K$ the existence of an approximating sequence $\{\mu_n\} \subset \cD$ as in Definition \ref{def: approachability}  with the additional condition that $f({\mathcal W}_2(\mu_n, \mu)) < f(D(\mu_n,\nu)) $, for some suitable strictly increasing continuous function $f :[0,\infty) \rightarrow [0,\infty)$, with $f(0)=0$.  Does this then imply that $g({\mathcal W}_2(\hat \mu, \tilde \mu)) < g( {\mathcal W}_2(\hat \nu, \tilde \nu))$, where $\hat \mu$ and $\tilde \mu$ are solutions in \eqref{weakproblem} for $\hat \nu$ and $\tilde \nu$, respectively, for an appropriate, strictly increasing continuous functions $g :[0,\infty) \rightarrow [0,\infty)$, with $g(0)=0$?

    Like related issues regarding the quantitative stability of optimal transport (see for instance \cite{DelalandeMerigot23}), this question seems very challenging.  Our proof of Theorem \ref{thm:stability} does not seem to provide a particular insight into how one might approach the quantitative stability question, and its resolution will likely require significant new ideas and techniques.
\end{remark}

\subsection{Conic domains and equivalence with the relaxed problem}\label{sec5}

We establish the equivalence between the problem \eqref{weakproblem} and the relaxed problem \eqref{eqn: relaxed problem} under certain structural assumptions on $\cD$. Notably, this result applies even when $\spt(\nu) = \Omega$ is not compact, ensuring the existence and stability of solutions in such cases as well.

\begin{definition}
   $\cD  \subset \cP(\R^d)$ is called a \textit{cone} if $\lambda_\# \mu \in \cD$ for any $\mu \in \cD$ and $\lambda \geq 0$.   $\cD$ is called \textit{translation invariant} if ${T_k}_\# \mu \in \cD$ for any $\mu \in \cD$ and $k \in \R^d$, where $T_k(x):= x + k$.
\end{definition}
It is worth noting that problem \eqref{weakproblem} is stable on cone domains, as these domains are closed under contraction.

Many of the domains we are interested in are cones and translation invariant, including those in Examples \ref{ex: monotone} and \ref{ex: discrete}, as well as measures supported on lines and multivariate Gaussians as will be explored in Section \ref{applications} below.  Note that the domain in Example \ref{ex: lipschitz} is not a cone.

\begin{lemma}\label{lemma:Kantorovich_same_mean}
    Let $\mu, \nu \in \cP_2(\mathbb{R}^d)$, and let $\mu_k := {T_k}_\# \mu$ denote the translation of $\mu$ by $k$. Among all translations $\mu_k$, the one that minimizes the $\Wt$ distance to $\nu$ is the one for which
    \be\label{samecenter}
    \int_{\mathbb{R}^d} x \, d\mu_k(x) = \int_{\mathbb{R}^d} y \, d\nu(y).
    \ee
\end{lemma}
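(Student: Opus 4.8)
The plan is to expand the squared Wasserstein-type expression as a function of the translation vector $k$ and minimize it directly. Since we are translating $\mu$ while keeping $\nu$ fixed, I would parametrize the problem by $k \in \R^d$ and study the map $k \mapsto \Wtq(\mu_k, \nu)$, aiming to show it is minimized precisely when the barycenter of $\mu_k$ equals that of $\nu$.

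The first step is to relate couplings of $(\mu_k, \nu)$ to couplings of $(\mu, \nu)$. If $\pi \in \Pi(\mu, \nu)$, then pushing forward by the map $(x,y) \mapsto (x+k, y)$ produces a coupling $\pi_k \in \Pi(\mu_k, \nu)$, and this correspondence is a bijection between $\Pi(\mu,\nu)$ and $\Pi(\mu_k,\nu)$. Under this correspondence, I would compute
\[
\int |x - y|^2 \, d\pi_k(x,y) = \int |x + k - y|^2 \, d\pi(x,y) = \int |x-y|^2 d\pi + 2 \langle k, \textstyle\int (x - y)\, d\pi \rangle + |k|^2.
\]
The crucial observation is that the linear-in-$k$ cross term depends only on the marginals: $\int (x-y)\, d\pi = b_\mu - b_\nu$ where $b_\mu = \int x \, d\mu$ and $b_\nu = \int y \, d\nu$, independent of the particular coupling $\pi$. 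Hence adding $k$ shifts every coupling's cost by the \emph{same} amount $2\langle k, b_\mu - b_\nu\rangle + |k|^2$, so taking the infimum over couplings commutes with this shift:
\[
\Wtq(\mu_k, \nu) = \Wtq(\mu, \nu) + 2\langle k, b_\mu - b_\nu\rangle + |k|^2.
\]

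The final step is to minimize this explicit quadratic in $k$. Completing the square (or setting the gradient $2(b_\mu - b_\nu) + 2k$ to zero) gives the unique minimizer $k^* = b_\nu - b_\mu$, which is exactly the shift making the barycenter of $\mu_{k^*}$ equal to $b_\nu$, since $\int x \, d\mu_{k^*} = b_\mu + k^* = b_\nu$. This establishes \eqref{samecenter} as the minimizing condition.

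I do not anticipate a genuine obstacle here, as the argument is an elementary quadratic minimization. The only point requiring a small amount of care is verifying that the cross term $\int(x-y)\,d\pi$ is coupling-independent — this is what lets the infimum over $\Pi(\mu,\nu)$ pass through the translation and reduces the problem to minimizing a scalar quadratic. Everything else is the standard bias-variance style decomposition of the transport cost under translation.
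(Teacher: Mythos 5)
Your proof is correct and follows essentially the same route as the paper: expand the transport cost under translation into a quadratic in $k$ and minimize it. Your version is in fact slightly more careful than the paper's two-line argument, since you explicitly verify that the cross term $\int (x-y)\, d\pi = b_\mu - b_\nu$ is coupling-independent — which is what justifies passing the infimum over $\Pi(\mu,\nu)$ through the translation — whereas the paper simply plugs in an optimal plan for $(\mu,\nu)$ and differentiates in $k$ without noting this point.
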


\begin{proof}
    The $\Wt$ distance between $\mu_k$ and $\nu$ is given by
    \[
    \Wtq(\mu_k, \nu) = \int_{\R^d \times \R^d} |(x + k) - y|^2 d\pi(x,y),
    \]
    where $\pi$ is an optimal transport plan between $\mu$ and $\nu$. Differentiating with respect to $k$ and setting the derivative equal to zero gives $
    k = \int y d\nu(y) - \int x d\mu(x)$. 
    Thus, the translation $\mu_{k^*}$ that minimizes the $\Wt$ distance satisfies \eqref{samecenter}, as claimed.
\end{proof}

\begin{lemma}\label{lemma:optimality_cone}
   Let \(\nu \in \cP_{2,0}(\R^d)\) and \(\cD\) be a cone. For any $\mu^* \in \argmax_{\mu \in \cD \cap \cM_\nu^K} \Var(\mu)$ and $ \pi^* \in \argmax_{\pi \in \Pi(\mu^*, \nu)} \int \langle x, y \rangle d\pi (x,y)$, the following equality holds:
    \be\label{eqn:kantorovich_opt}
    \int \langle x, y \rangle d\pi^*(x,y) = \int |x|^2 d\mu^*(x).
    \ee
    If, in addition, $\cD$ is translation invariant, then $\mu^*$ must be centered; $\int x\, d\mu^*(x)=0$.
\end{lemma}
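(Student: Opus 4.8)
The plan is to rephrase everything in terms of the functional $D(\mu,\nu) = \sup_{\pi \in \Pi(\mu,\nu)} \int \langle x, y-x\rangle\, d\pi$ from \eqref{weakconvexorder3}, for which feasibility $\mu \K \nu$ is equivalent to $D(\mu,\nu) \ge 0$. Since $\pi^*$ maximizes $\int \langle x,y\rangle\, d\pi$, the desired identity \eqref{eqn:kantorovich_opt} is exactly the assertion that the KDR constraint is \emph{tight} at the optimum, i.e. $\int\langle x,y\rangle\, d\pi^* - \int|x|^2 d\mu^* = D(\mu^*,\nu) = 0$. As $D(\mu^*,\nu) \ge 0$ holds by feasibility, it suffices to rule out strict slack $D(\mu^*,\nu) > 0$.

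For part i) I would argue by contradiction using a dilation. First dispose of the trivial case $\mu^* = \delta_0$, where both sides of \eqref{eqn:kantorovich_opt} vanish. If $\mu^* \ne \delta_0$, observe that the only point mass dominated by the centered $\nu$ is $\delta_0$: for $\mu = \delta_a$ the unique coupling with $\nu$ is $\delta_a \otimes \nu$, giving $\int\langle x, y-x\rangle\, d\pi = \langle a, -a\rangle = -|a|^2$, so $D(\delta_a,\nu)\ge 0$ forces $a=0$. Hence $\mu^*$ is genuinely spread out and $\Var(\mu^*) > 0$. Now suppose $D(\mu^*,\nu) > 0$ and set $\mu^\lambda := \lambda_\# \mu^*$ for $\lambda > 1$; since $\cD$ is a cone, $\mu^\lambda \in \cD$. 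Pushing $\pi^*$ forward by $(x,y)\mapsto(\lambda x, y)$ yields $\sup_\pi \int\langle x,y\rangle\, d\pi \ge \lambda \int\langle x,y\rangle\, d\pi^*$, while $\int |x|^2 d\mu^\lambda = \lambda^2 \int|x|^2 d\mu^*$; combining these gives $D(\mu^\lambda,\nu) \ge \lambda\big[D(\mu^*,\nu) - (\lambda-1)\int|x|^2 d\mu^*\big]$, which remains strictly positive for $\lambda$ slightly larger than $1$. Thus $\mu^\lambda \in \cD \cap \cM_\nu^K$, yet $\Var(\mu^\lambda) = \lambda^2 \Var(\mu^*) > \Var(\mu^*)$, contradicting optimality. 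Therefore $D(\mu^*,\nu) = 0$.

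For part ii), with $\cD$ now also translation invariant, I would exploit that translation preserves the variance and --- crucially, because $\nu$ is centered --- the quantity $\sup_\pi \int\langle x,y\rangle\, d\pi$. Writing $b := \int x\, d\mu^*$ and $\mu_k := {T_k}_\#\mu^*$, a direct computation gives $\int|x|^2 d\mu_k = \int|x|^2 d\mu^* + 2\langle k,b\rangle + |k|^2$ with $\sup_\pi \int\langle x,y\rangle\, d\pi$ unchanged (any coupling of $\mu_k,\nu$ is the push-forward of one of $\mu^*,\nu$ under $(x,y)\mapsto(x+k,y)$, and the extra term $\langle k, \int y\, d\nu\rangle$ vanishes). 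Hence $D(\mu_k,\nu) = D(\mu^*,\nu) - 2\langle k,b\rangle - |k|^2 = |b|^2 - |k+b|^2$, after inserting $D(\mu^*,\nu)=0$ from part i). Choosing $k = -b$ produces the centered translate $\mu_{-b}$, which lies in $\cD$ by translation invariance, is feasible since $D(\mu_{-b},\nu) = |b|^2 \ge 0$, and satisfies $\Var(\mu_{-b}) = \Var(\mu^*)$ --- so it is itself a maximizer. Applying part i) to $\mu_{-b}$ forces $D(\mu_{-b},\nu) = 0$, i.e. $|b|^2 = 0$, whence $b = 0$ and $\mu^*$ is centered.

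The main obstacle I anticipate is the bookkeeping needed to confirm feasibility after each perturbation: one must verify that the dilated (resp.\ translated-and-centered) measure stays in $\cM_\nu^K$, which hinges on correctly tracking how $\sup_\pi\int\langle x,y\rangle\,d\pi$ and the second moment transform, and on securing the strictness $\Var(\mu^*) > 0$ that makes the dilation a genuine improvement. The conceptual crux --- and what keeps part ii) short --- is the observation that centering of $\nu$ makes $\sup_\pi\int\langle x,y\rangle\,d\pi$ translation invariant, so the centered translate is again an optimizer and part i) may be reused instead of re-running the dilation estimate.
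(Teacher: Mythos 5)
Your proof is correct and follows essentially the same route as the paper's: part i) is the same dilation-by-$\lambda>1$ contradiction exploiting the cone property, and part ii) passes to the centered translate of $\mu^*$, checks it remains feasible and equally optimal (using that $\nu$ is centered), and then reapplies part i) to force $\int x\, d\mu^*=0$. The only cosmetic differences are that you phrase the argument through the functional $D(\mu,\nu)$ with exact transformation identities under dilation and translation, and that you explicitly rule out the degenerate point masses $\delta_a$, $a \neq 0$ (showing they are infeasible, hence $\Var(\mu^*)>0$), a case the paper's parenthetical ``unless $\mu^*=\delta_0$'' treats more tersely.
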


\begin{proof}
    By the KDR \eqref{weakconvexorder3}, for any $\mu \in \cD \cap \cM_\nu^K$ and for any corresponding optimal coupling $\pi^* \in \argmax_{\pi \in \Pi(\mu, \nu)} \int \langle x, y \rangle d\pi(x,y)$, we have the inequality:
    \be\label{eqn:kantorovich_ineq}
    \int \langle x, y \rangle d\pi^*(x,y) \geq \int |x|^2 d\mu(x).
    \ee

    Suppose this inequality is strict for some $\mu^* \in \argmax_{\mu \in \cD \cap \cM_\nu^K} \Var(\mu)$. As $\cD$ is a cone,  $\mu^*_\la = \la_{\#} \mu^* \in \cD$ for any $\la \ge 0$. Let  \(\pi^*_\lambda = \big( (x,y) \mapsto (\lambda x, y)\big)_\# \pi^*\) be the corresponding coupling of $\mu^*_\la$ and $\nu$. 
    If the inequality \eqref{eqn:kantorovich_ineq} is strict, there exists a \(\lambda > 1\) such that:
    \[
    \int \langle x, y \rangle d\pi^*_\lambda = \lambda \int \langle x, y \rangle d\pi^* \geq \lambda^2 \int |x|^2 d\mu^* = \int |x|^2 d\mu^*_\lambda,
    \]
    meaning that $\mu^*_\lambda \K \nu$. However, the variance of \(\mu^*_\lambda\) exceeds that of \(\mu^*\) (unless $\mu^* = \delta_0$, in which case the result is trivial), contradicting the optimality of $\mu^* \in \argmax_{\mu \in \cD \cap \cM_\nu^K} \Var(\mu)$. We conclude that the inequality \eqref{eqn:kantorovich_ineq} must hold with equality for any optimal \(\mu^*\) and \(\pi^*\).

     To see $\int x d \mu^* = 0$ if $\cD$ is translation invariant, let $k = -\int x d \mu^*$ and $\mu^*_{k} = {T_{k}}_\# \mu^*$ be defined as in Lemma \ref{lemma:Kantorovich_same_mean}. Setting a coupling $\pi^*_{k} = \big((x,y) \mapsto (x + k, y)\big)_\# \pi^*$ of $\mu^*_{k}$ and $\nu$ and recalling that $\nu$ is centered, the following calculation
      \[
    \int |x|^2 d\mu_{k}^* = \int |x + k|^2 d\mu^* \leq \int |x|^2 d\mu^* = \int \langle x, y \rangle d\pi^* = \int \langle x + k, y \rangle d\pi^* = \int \langle x, y \rangle d\pi^*_{k}
    \]
     shows $\mu_{k}^* \K \nu$. With this,   $\Var(\mu_{k}^*) = \Var(\mu^*)$ implies $\mu_k^*$ is also optimal. Now $k \neq 0$ yields strict inequality above, which contradicts \eqref{eqn:kantorovich_opt}. We conclude $k=0$.
\end{proof}

We can now extend the solution existence result for the problem \eqref{weakproblem} when $\Omega = \R^d$.

\begin{theorem}\label{thm: kant equivalence to relaxed prob}
    Let \(\nu \in \cP_{2,0}(\R^d)\), and let \(\cD\) be a cone. Suppose that \(\cD\) is either translation invariant or \(\cD \subset \cP_{2,0}(\R^d)\). Then, the problem \eqref{weakproblem} with \(\Omega = \R^d\) is equivalent to the relaxed problem \eqref{eqn: relaxed problem}, meaning that both formulations have the same set of solutions.
\end{theorem}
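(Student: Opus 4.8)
The plan is to recast both problems through two scalar functionals and to exploit the cone structure via a one-dimensional optimization along each ray $\{\lambda_\#\mu : \lambda \ge 0\} \subset \cD$. Write $Q(\mu) := \int |x|^2\,d\mu$, $S(\mu) := \sup_{\pi \in \Pi(\mu,\nu)} \int \langle x,y\rangle\,d\pi$, and $C := \int |y|^2\,d\nu$. Recall from \eqref{eqn: weak convex order with Wass} that $\Wtq(\mu,\nu) = Q(\mu) + C - 2S(\mu)$ and that $\mu \K \nu$ iff $S(\mu) \ge Q(\mu)$. The two facts that drive everything are the scalings $Q(\lambda_\#\mu) = \lambda^2 Q(\mu)$ and $S(\lambda_\#\mu) = \lambda S(\mu)$ for $\lambda \ge 0$, the latter because $\Pi(\lambda_\#\mu,\nu)$ is the image of $\Pi(\mu,\nu)$ under $(x,y)\mapsto(\lambda x,y)$.

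First I would settle the centered case $\cD \subset \cP_{2,0}(\R^d)$, where $\Var(\mu) = Q(\mu)$. Along a single ray, the quadratic $\lambda \mapsto \Wtq(\lambda_\#\mu,\nu) = \lambda^2 Q(\mu) + C - 2\lambda S(\mu)$ is minimized at $\lambda^\ast = S(\mu)/Q(\mu)$ with value $C - S(\mu)^2/Q(\mu)$, while $\lambda_\#\mu \K \nu$ precisely for $\lambda \in [0,\lambda^\ast]$, so the largest KDR-admissible variance on the ray is $Q(\lambda^\ast_\#\mu) = S(\mu)^2/Q(\mu)$. Setting $V := \sup_{\mu \in \cD} S(\mu)^2/Q(\mu)$ and passing to the supremum (resp.\ infimum) over all rays of the cone $\cD$, this identifies the optimal value of \eqref{weakproblem} as $V$ and that of \eqref{eqn: relaxed problem} as $C - V$; in particular the two optimal values sum to $C$.

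The solution sets then coincide by a one-line argument in each direction. If $\mu^\ast$ solves \eqref{weakproblem}, Lemma \ref{lemma:optimality_cone} yields $S(\mu^\ast) = Q(\mu^\ast)$, hence $\Wtq(\mu^\ast,\nu) = C - Q(\mu^\ast) = C - V$, so $\mu^\ast$ solves \eqref{eqn: relaxed problem}. Conversely, if $\mu^\ast$ solves \eqref{eqn: relaxed problem} then $\Wtq(\mu^\ast,\nu)=C-V$ gives $2S(\mu^\ast) - Q(\mu^\ast) = V$, and combining with $S(\mu^\ast)^2/Q(\mu^\ast) \le V$ forces $(S(\mu^\ast) - Q(\mu^\ast))^2 \le 0$; thus $S(\mu^\ast) = Q(\mu^\ast)$, so $\mu^\ast \K \nu$ with $Q(\mu^\ast) = V$, i.e.\ $\mu^\ast$ solves \eqref{weakproblem}. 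The degenerate ray $\mu^\ast = \delta_0$ (equivalently $V = 0$) is checked directly.

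Finally I would reduce the two stated hypotheses to this centered case. When $\cD$ is a cone and translation invariant, I claim both problems admit only centered solutions: for \eqref{weakproblem} this is exactly the last assertion of Lemma \ref{lemma:optimality_cone}, while for \eqref{eqn: relaxed problem} the identity $\Wtq(\mu_k,\nu) = \Wtq(\mu,\nu) + 2\langle k, b_\mu\rangle + |k|^2$ (from the proof of Lemma \ref{lemma:Kantorovich_same_mean}, with $b_\nu = 0$) shows that a non-centered minimizer would be strictly improved by its centering translate $\mu_{-b_\mu} \in \cD$. Restricting to the sub-cone $\cD \cap \cP_{2,0}(\R^d)$ returns us to the centered analysis, and since every solution of either problem is centered, the solution sets over $\cD$ and over $\cD \cap \cP_{2,0}(\R^d)$ agree. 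I expect the main obstacle to be precisely this reduction in the translation-invariant case — keeping the objective $\Var$ (not the second moment) aligned across $\cD$ and its centered sub-cone — together with nailing down the ray-optimization identity that makes the two optimal values sum to $C$.
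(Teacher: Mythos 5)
Your proposal is correct and follows essentially the same route as the paper's proof: your ray-wise quadratic optimization in $\lambda$ (with optimizer $\lambda^\ast = S(\mu)/Q(\mu)$) is exactly the paper's argument for \eqref{eqn: projection}, your centering reduction is Lemma \ref{lemma:Kantorovich_same_mean} plus the last assertion of Lemma \ref{lemma:optimality_cone}, and your identity that the two optimal values sum to $C$ is the paper's \eqref{eqn: equality of values}. The only cosmetic difference is the algebraic $(S(\mu^\ast)-Q(\mu^\ast))^2 \le 0$ step in your relaxed-to-KDR direction, where the paper instead invokes the scaling argument and Lemma \ref{lemma:optimality_cone} together with the equality of values.
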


\begin{proof}
For any $\displaystyle \mu^* \in \argmin_{\mu \in \cD}{\mathcal W}_2^2(\mu,\nu)$ and $ \displaystyle \pi^* \in \argmax_{\pi \in \Pi(\mu^*, \nu)} \int \langle x, y \rangle d\pi(x,y)$, we claim
    \be\label{eqn: projection}
        \int \langle x, y - x \rangle \, d \pi^*(x,y) = 0,
    \ee
    that is, $\mu^* \K \nu$. To see this, for any $\mu \in \cD \setminus \{ \delta_0 \}$, consider a rescaling factor $\lambda^*$ that solves
    \[
    \min_{\lambda \in \R} {\mathcal W}_2^2(\lambda_{\#} \mu, \nu) = \min_{\lambda \in \R} \left( \lambda^2 \int |x|^2 d \mu - 2 \lambda \max_{\pi \in \Pi(\mu, \nu)} \int \langle x, y \rangle d \pi(x,y) + \int |y|^2 d \nu \right).
    \]
    The optimal scaling is attained at:
    \[
    \lambda^* = \frac{\max_{\pi \in \Pi(\mu, \nu)} \int \langle x, y \rangle d \pi(x,y)}{\int |x|^2 d \mu}.
    \]
    Since $\lambda^* = 1$ for the optimal $\mu^*$ (no rescaling improves the objective function), \eqref{eqn: projection} follows.

   Next, by the assumption on $\cD$ and Lemma \ref{lemma:Kantorovich_same_mean}, $\int x d \mu^* = \int y d \nu = 0$. This and \eqref{eqn: projection} imply
    \begin{eqnarray}\label{eqn: equality of values}
        \min_{\mu \in \cD} \Wtq(\mu, \nu) &=& \min_{\mu \in \cD \cap \cM_\nu^K \cap \cP_{2,0}(\R^d)} \Wtq(\mu, \nu)\nonumber\\
        &=& \min_{\mu \in \cD \cap \cM_\nu^K \cap \cP_{2,0}(\R^d)} \left( \int |y|^2 d \nu - \int |x|^2 d \mu \right) \nonumber\\
        &=& \Var(\nu) - \max_{\mu \in \cD \cap \cM_\nu^K} \Var(\mu),
    \end{eqnarray}
    where we have used Lemma  \ref{lemma:optimality_cone} to remove $\bigcap \cP_{2,0}(\R^d)$ in \eqref{eqn: equality of values}. This shows any solution to the relaxed problem \eqref{eqn: relaxed problem} is also a solution to  \eqref{weakproblem} with $\Omega= \R^d$.  Conversely, if $\mu^*$ solves \eqref{weakproblem}, the equality of values \eqref{eqn: equality of values} and Lemma \ref{lemma:optimality_cone} imply that $\mu^*$ solves \eqref{eqn: relaxed problem}, completing the proof.
\end{proof}

\begin{remark}
Observe that the proofs of Lemma \ref{lemma:optimality_cone} and Theorem \ref{thm: kant equivalence to relaxed prob} do not rely on full translation invariance of $\mathcal D$. Instead, the following weaker condition suffices (recalling that these results assume $\nu$ is centered): for any $\mu \in \mathcal D$ with $k = - \int x d\mu(x)$, the centered measure $\mu_k = {T_k}_\# \mu \in \cP_{2,0}(\R^d)$ also belongs to $\mathcal D$.
\end{remark}

\subsection{Weak optimizer closedness and equivalence to the problem with self-consistency}

We have shown that for appropriate domains, the problem \eqref{weakproblem} is equivalent to the fully relaxed problem \eqref{eqn: relaxed problem}. We now consider when \eqref{weakproblem} is equivalent to the problem with the full self-consistency condition \eqref{problem}. 
Recall the $\pi$-conditional barycentric map from Definition \ref{def: centering map}.
\begin{definition}
Let $\nu \in \cP_{2,0}(\Omega)$ and $\cD \subset \cP_2(\Omega)$.
 We say that $\cD$ is  closed under weak optimizers if, 
 for any $\mu \in \cD \cap \cM^K_\nu$ that solves \eqref{weakproblem}, there exists $\pi \in \Pi(\mu,\nu)$ such that
\be\label{recenter}
 \int \langle x, y \rangle d\pi(x,y) \geq \int |x|^2 d\mu(x)  \quad \text{and} \quad {c_\pi}_\#(\mu) \in \cD.
\ee
\end{definition}
For instance, the domains $\cD^m$ and $\cD^m_u$ in Example \ref{ex: discrete} are closed under weak optimizers.

\begin{theorem}\label{equivalencetheorem}
  If $\cD$ is closed under weak optimizers, then every optimizer $\mu$ for the problem \eqref{weakproblem} satisfies $\mu \C \nu$, and consequently, $\mu$ solves the original problem \eqref{maxvar}.
\end{theorem}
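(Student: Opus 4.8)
The plan is to take an arbitrary optimizer $\mu$ of \eqref{weakproblem}, use the closed-under-weak-optimizers hypothesis to produce a coupling $\pi \in \Pi(\mu,\nu)$ satisfying \eqref{recenter}, and then show that the barycentric recentering $\mu' := {c_\pi}_\#(\mu)$ must actually coincide with $\mu$; this will force $\pi$ to be a martingale coupling and hence $\mu \C \nu$. First I would record that $\mu'$ is admissible for \eqref{weakproblem}. By the observation following Definition \ref{def: centering map}, $\mathsf{C}_\#\pi \in \cM(\mu',\nu)$, so $\mu' \C \nu$ and in particular $\mu' \K \nu$. Moreover $c_\pi(x)=\int y\,d\pi_x(y)$ is an average of points of $\spt(\nu)\subset\Omega$, so by convexity of $\Omega$ we get $\spt(\mu')\subset\Omega$; and since $\nu$ is centered, $\int x\,d\mu'(x)=\int y\,d\nu(y)=0$, i.e. $\mu'$ is centered. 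Together with the hypothesis ${c_\pi}_\#(\mu)\in\cD$, this shows $\mu' \in \cD \cap \cM_{\nu,\Omega}^K$, so optimality of $\mu$ gives $\Var(\mu')\le\Var(\mu)$.

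Next I would pass to the Hilbert space $L^2(\mu;\R^d)$ and set $f(x)=x$, $g(x)=c_\pi(x)$. Disintegrating $\pi=\mu\otimes\pi_x$ gives $\int\langle x,y\rangle\,d\pi=\int\langle x,c_\pi(x)\rangle\,d\mu=\langle f,g\rangle$, so the first inequality in \eqref{recenter} reads $\langle f,g\rangle\ge\|f\|^2$. Because $\mu'$ is centered, $\|g\|^2=\int|c_\pi(x)|^2\,d\mu=\Var(\mu')$, while $\|f\|^2=\int|x|^2\,d\mu=\Var(\mu)+|b_\mu|^2$ with $b_\mu=\int x\,d\mu$. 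Cauchy--Schwarz then yields
\[
\|f\|^2 \le \langle f,g\rangle \le \|f\|\,\|g\|,
\]
so (unless $\mu=\delta_0$, in which case $\mu\C\nu$ trivially) $\|f\|\le\|g\|$, that is $\Var(\mu)+|b_\mu|^2\le\Var(\mu')$. Combining with $\Var(\mu')\le\Var(\mu)$ from the first step squeezes every inequality to an equality: $b_\mu=0$, $\Var(\mu)=\Var(\mu')$, and Cauchy--Schwarz is saturated with $\|f\|=\|g\|$ and $\langle f,g\rangle=\|f\|^2$. Hence $\|f-g\|^2=\|f\|^2-2\langle f,g\rangle+\|g\|^2=0$, so $c_\pi(x)=x$ for $\mu$-a.e. $x$.

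The identity $c_\pi=\id$ is exactly the self-consistency relation $\E_\pi[Y|X]=X$, so $\pi\in\cM(\mu,\nu)$ and Strassen's theorem gives $\mu\C\nu$. Finally, since $\mu\C\nu$ implies $\mu\K\nu$ and confines $\spt(\mu)$ to the convex hull of $\spt(\nu)\subset\Omega$, the feasible set $\{\mu\in\cD:\mu\C\nu\}$ of \eqref{maxvar} is contained in that of \eqref{weakproblem}; as $\mu$ attains the maximum in \eqref{weakproblem} and lies in this smaller set, it is automatically optimal for \eqref{maxvar}. The main obstacle I expect is the equality analysis rather than any single estimate: one has to balance the hypothesis \eqref{recenter} against the two opposing variance comparisons---the decreasing direction from optimality of $\mu$ and the increasing direction from Cauchy--Schwarz---so that they pin $\mu'$ back onto $\mu$. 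Verifying that $\mu'$ is genuinely admissible (centered and supported in $\Omega$) is precisely what makes the comparison $\Var(\mu')\le\Var(\mu)$ legitimate and thus makes the squeeze work.
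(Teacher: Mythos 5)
Your proof is correct and follows essentially the same route as the paper's: recenter $\mu$ via the barycentric map $c_\pi$ from the coupling supplied by \eqref{recenter}, use feasibility of ${c_\pi}_\#(\mu)$ together with optimality of $\mu$ to get $\Var({c_\pi}_\#(\mu))\le\Var(\mu)$, derive the reverse inequality from \eqref{recenter}, and conclude from the equality case that the recentering is trivial, whence $\mu \C \nu$ by Strassen. The only cosmetic differences are that you saturate Cauchy--Schwarz in $L^2(\mu;\R^d)$ where the paper adds the martingale identity to the barycentric projection inequality (both computations force $c_\pi = \id$ $\mu$-a.e.), and that you spell out details the paper leaves implicit, namely the admissibility of ${c_\pi}_\#(\mu)$ for \eqref{weakproblem} and the final deduction that $\mu$ solves \eqref{maxvar}.
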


\begin{proof}
Assume that $\mu$ solves \eqref{weakproblem}, so there exists $\pi \in \Pi(\mu,\nu)$ satisfying \eqref{recenter}. Since $\gamma = \big( (x,y) \mapsto (c_\pi(x), y)\big)_\# \pi \in \cM({c_\pi}_\#(\mu), \nu)$ is a martingale measure, we have
\[
\int |x|^2 d {c_\pi}_\#(\mu) - \int \langle x, y \rangle d\gamma = 0 \geq \int |x|^2 d\mu - \int \langle x, y \rangle d\pi.
\]

Now,  as the barycenter minimizes the expected squared distance, we have the inequality
\[
\int \frac{1}{2} |x-y|^2 d\gamma(x,y) \leq \int \frac{1}{2} |x-y|^2 d\pi(x,y),
\]
with strict inequality if ${c_\pi}_\#(\mu) \ne \mu$. After canceling $\int |y|^2 d\gamma = \int |y|^2 d\pi = \int |y|^2 d\nu$, we get
\[
\int \left( -\frac{|x|^2}{2} + \langle x, y \rangle \right) d\gamma \geq \int \left( -\frac{|x|^2}{2} + \langle x, y \rangle \right) d\pi,
\]
with strict inequality if ${c_\pi}_\#(\mu) \ne \mu$. Adding this to the previous inequality gives
\[
\Var({c_\pi}_\#(\mu)) = \int |x|^2 d{c_\pi}_\#(\mu) \geq \int |x|^2 d\mu \geq \Var(\mu).
\]
The optimality of $\mu$ implies equality. Hence ${c_\pi}_\#(\mu) = \mu$, which implies $\mu \C \nu$.
\end{proof}

\subsection{Relationship with principal component analysis}\label{applications}  
In this section, we explore the relationship between our formulation \eqref{weakproblem} and principal component analysis (PCA). Let $V_m$ denote the set of all $m$-dimensional subspaces of $\R^d$. Consider the following cone domain:
\[
\cD_m := \{ \mu \in \cP_{2,0}(\R^d) \, | \, \mu(E) =1 \text{ for some } E \in V_m \}.
\]
Let $p_E : \R^d \to E$ denote the orthogonal projection map onto a subspace $E$ of $\R^d$. 

\begin{lemma}\label{PCAlemma}
Any $\mu$ solving the problem $\displaystyle \max_{\mu \in \cD_m, \, \mu \K \nu}  \Var (\mu)$, where $\nu \in \cP_{2,0}(\R^d)$, is the orthogonal projection of $\nu$ onto some $E \in V_m$.
\end{lemma}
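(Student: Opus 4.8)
\emph{Proof plan.}

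The plan is to reduce the Kantorovich dominance problem to the relaxed problem via the cone-equivalence theorem, and then carry out an orthogonal (Pythagorean) decomposition of the transport cost. First I would observe that $\cD_m$ is a cone contained in $\cP_{2,0}(\R^d)$: if $\mu \in \cD_m$ is supported on $E \in V_m$ and $\lambda \geq 0$, then $\lambda_\# \mu$ is again supported on $E$ and remains centered, so $\lambda_\# \mu \in \cD_m$. Since the problem in the statement is precisely \eqref{weakproblem} with $\cD = \cD_m$ and $\Omega = \R^d$ (here $\cM_{\nu,\R^d}^K = \cM_\nu^K$), Theorem \ref{thm: kant equivalence to relaxed prob} applies and shows that this problem has the same solution set as the relaxed problem $\min_{\mu \in \cD_m} \Wtq(\mu, \nu)$. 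It therefore suffices to show that every solution of the relaxed problem is an orthogonal projection $(p_E)_\# \nu$ of $\nu$ onto some $E \in V_m$.

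The key computation is to minimize $\Wtq(\cdot, \nu)$ over measures supported on a \emph{fixed} subspace $E$. For any such $\mu$ and any $\pi \in \Pi(\mu,\nu)$, the vectors $x - p_E(y) \in E$ and $y - p_E(y) \in E^\perp$ are orthogonal, so
\be
\int |x-y|^2 \, d\pi = \int |x - p_E(y)|^2 \, d\pi + \int |y - p_E(y)|^2 \, d\nu,
\ee
where the last term is independent of both $\mu$ and $\pi$. Hence $\Wtq(\mu, \nu) \geq \int |y - p_E(y)|^2 \, d\nu$, with equality if and only if there is a coupling under which $x = p_E(y)$ almost surely, i.e.\ if and only if $\mu = (p_E)_\# \nu$ (realized by the coupling $(p_E, \id)_\# \nu$). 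Because $\nu$ is centered, $(p_E)_\# \nu$ is centered as well, so it indeed lies in $\cD_m$; thus for each fixed $E$ the \emph{unique} minimizer of $\Wtq(\cdot, \nu)$ among centered measures on $E$ is $(p_E)_\# \nu$.

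To finish, let $\mu^*$ be any solution of the relaxed problem, and let $E^* \in V_m$ be a subspace carrying $\mu^*$. Since $\mu^*$ is supported on $E^*$, the previous step gives $\Wtq((p_{E^*})_\# \nu, \nu) \leq \Wtq(\mu^*, \nu)$; conversely $\mu^*$ is a global minimizer over $\cD_m$ and $(p_{E^*})_\# \nu \in \cD_m$, so the reverse inequality holds and the two costs coincide. By the uniqueness just established, $\mu^* = (p_{E^*})_\# \nu$, the orthogonal projection of $\nu$ onto $E^*$. One may additionally note that the optimal subspace maximizes $\int |p_E(y)|^2 \, d\nu$ over the (compact) Grassmannian $V_m$, recovering the classical PCA subspace.

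I do not anticipate a serious obstacle here: the argument is essentially the orthogonal decomposition combined with the cone-equivalence theorem. The only points demanding slight care are the uniqueness of the fixed-$E$ minimizer — deducing $\mu = (p_E)_\# \nu$ from the vanishing of $\int |x - p_E(y)|^2 \, d\pi$ — and verifying that the projected measure stays centered so that it genuinely belongs to $\cD_m$; both follow directly from the hypothesis $\nu \in \cP_{2,0}(\R^d)$.
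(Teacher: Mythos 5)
Your proof is correct and follows essentially the same route as the paper's: invoke Theorem \ref{thm: kant equivalence to relaxed prob} (noting $\cD_m$ is a cone in $\cP_{2,0}(\R^d)$) to pass to the relaxed problem, decompose $\cD_m$ over the subspaces $E \in V_m$, and identify $(p_E)_\#\nu$ as the unique minimizer on each fixed $E$. The only difference is that you spell out, via the Pythagorean decomposition of the transport cost, the projection fact that the paper simply cites as known, which is a welcome but inessential elaboration.
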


\begin{proof} 
Theorem \ref{thm: kant equivalence to relaxed prob} shows the problem $ \max_{\mu \in \cD_m, \, \mu \K \nu}  \Var (\mu)$ is equivalent to the relaxed problem $ \min_{\mu \in \cD_m} {\cal W}_2^2(\mu,\nu)$. Noting $\cD_m = \cup_{L \in V_m} \cD_m^E$, where $\cD_m^E := \{ \mu \in \cD_m \, | \, \mu(E) =1 \}$, we can decompose the relaxed problem as $\min_{E \in V_m} \min_{\mu \in \cD_m^E} \Wtq(\mu,\nu)$. The lemma follows by the fact that the projection ${p_E}_\# \nu$ is the unique minimizer of $ \min_{\mu \in \cD_m^E} \Wtq(\mu,\nu)$.
\end{proof}

Lemma \ref{PCAlemma} reveals that PCA can be viewed as a particular case of problem \eqref{weakproblem} with the domain $\cD_1$. Specifically, the first principal component is defined as the direction that maximizes the variance of the projected data. Lemma \ref{PCAlemma} demonstrates that the projected data satisfies the variance maximization problem under the Kantorovich dominance constraint.
\\

We now consider the following version of the PCA in \cite[Section 3.3]{ChenChiFanMa2021Spec}. Consider the model
\be
Y = L^*W + R,
\ee
where $W \sim  \cN(0, I_m)$ is an $m$-dimensional Gaussian vector of latent factors, $L^* \in \R^{d \times m}$ is an unknown factor loading matrix of rank $m$, and $R \sim \cN(0, \sigma^2 I_d)$ represents random noise that is independent of $W$ and cannot be explained by the latent factor. Without loss of generality, assume that $L^* = U^*(\Lambda^*)^{1/2}$, where the columns of $U^* \in \R^{d \times m}$ form an orthonormal set, and $\Lambda^* = {\rm diag}[\lambda_1^*, \dots, \lambda_m^*]$ is a diagonal matrix with $\lambda_1^* \geq \dots \geq \lambda_m^* > 0$.

Let $\nu = \cL(Y)$. We now focus on solving the problem \eqref{weakproblem} over the following cone domain
\begin{align}\label{PCAD}
\cD = \{\mu_L &=  \cL(LW) \mid L = U\Lambda^{1/2}, \text{ where the columns of } U \in \R^{d \times m} \text{ are orthonormal,} \\
&\text{and } \Lambda = {\rm diag}[\lambda_1, \dots, \lambda_m] \text{ is a diagonal matrix with } \lambda_1 \geq \dots \geq \lambda_m \geq 0 \}. \nn
\end{align}
Note that the problem \eqref{weakproblem} remains equivalent to the relaxed problem \eqref{eqn: relaxed problem}, by Theorem \ref{thm: kant equivalence to relaxed prob}.

\begin{theorem}\label{PCADthm}
Let $\nu_n, \nu \in \cP_{2,0}(\R^d)$ and $\Wt(\nu_n,\nu) \to 0$. Then for all $n$, $\cD \cap \cM_{\nu_n}^K \neq \emptyset$ with $\cD$ in \eqref{PCAD}, and for any $\displaystyle \mu_{L_n} \in \argmax_{\mu \in \cD \cap \cM_{\nu_n}^K} \Var(\mu)$ with $L_n = U_n \Lambda_n^{1/2}$, 
\be\label{recovery}
L_n L_n^\tr - \sigma_n^2 U_n U_n^\tr \to L^* {L^*}^\tr \quad \text{and} \quad \sigma_n^2 \to \sigma^2 \quad \text{as} \quad n \to \infty,
\ee
where $\sigma_n^2 := \frac{1}{d-m} \int |y - U_n U_n^\tr y |^2 d \nu_n(y)$ is an estimator of the noise variance $\sigma^2$. 
\end{theorem}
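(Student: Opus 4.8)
The plan is to reduce \eqref{weakproblem} to the relaxed problem via the cone machinery, solve the relaxed problem \emph{exactly} for the true Gaussian law $\nu=\cN(0,\Si)$ with $\Si = U^*\La^*{U^*}^\tr + \sigma^2 I_d$, and then transfer the uniqueness of that population solution to the noisy data by a compactness-plus-stability argument. First I would record the structural facts: $\cD$ in \eqref{PCAD} is a cone (if $\mu_L\in\cD$ then $\lambda_\#\mu_L=\mu_{\lambda L}\in\cD$), it is contained in $\cP_{2,0}(\R^d)$, and it is $\Wt$-closed (a $\Wt$-limit of centered Gaussians is a centered Gaussian whose covariance is a limit of rank-$\le m$ matrices, hence of rank $\le m$). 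Taking $\La=0$ shows $\delta_0\in\cD$, and $\Wtq(\delta_0,\nu_n)=\int|y|^2d\nu_n \le \int|y|^2d\nu_n-\int|x|^2d\delta_0$ gives $\delta_0\K\nu_n$ via \eqref{eqn: weak convex order with Wass}; thus $\cD\cap\cM^K_{\nu_n}\neq\emptyset$. Since $\cD$ is a cone in $\cP_{2,0}(\R^d)$, Theorem \ref{thm: kant equivalence to relaxed prob} (with $\Omega=\R^d$) makes \eqref{weakproblem} for $\nu_n$ equivalent to the relaxed problem $\min_{\mu\in\cD}\Wtq(\mu,\nu_n)$, whose minimizers exist because $\cD$ is $\Wt$-closed and the objective is $\Wt$-coercive.

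Next I would solve the population relaxed problem $\min_{\mu\in\cD}\Wtq(\mu,\nu)$. Writing $\cD=\cup_{E\in V_m}\cD^E$ with $\cD^E$ the centered Gaussians supported on $E$, and splitting $|x-y|^2=|x-p_Ey|^2+|p_{E^\perp}y|^2$ for $x\in E$, one gets $\min_{\mu\in\cD^E}\Wtq(\mu,\nu)=\operatorname{tr}(p_{E^\perp}\Si)$, attained uniquely at $\mu=(p_E)_\#\nu$; crucially, since $\nu$ is Gaussian its projection $(p_E)_\#\nu$ is again a centered Gaussian on $E$ and hence lies in $\cD$ (this is the point that lets the restricted Gaussian domain reach the same optimum as in Lemma \ref{PCAlemma}). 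Minimizing $\operatorname{tr}(p_{E^\perp}\Si)=\operatorname{tr}(\Si)-\operatorname{tr}(p_E\Si)$ over $E\in V_m$ amounts to maximizing $\operatorname{tr}(p_E\Si)$, whose maximizer is the top-$m$ eigenspace of $\Si$; by the spectral gap $\lambda_m^*+\sigma^2>\sigma^2$ (i.e. $\lambda_m^*>0$) this maximizer is \emph{unique}, namely $E^*=\operatorname{col}(U^*)$. Hence the population relaxed problem has the unique minimizer $\mu_\star=(p_{E^*})_\#\nu=\cN(0,A_\star)$ with $A_\star=U^*(\La^*+\sigma^2 I_m){U^*}^\tr = L^*{L^*}^\tr+\sigma^2U^*{U^*}^\tr$.

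I would then transfer this to the data. For the minimizers $\mu_{L_n}$ of $\min_{\mu\in\cD}\Wtq(\mu,\nu_n)$, optimality against the competitor $\delta_0$ gives $\operatorname{tr}(L_nL_n^\tr)=\Wtq(\mu_{L_n},\delta_0)\le\big(\Wt(\mu_{L_n},\nu_n)+\Wt(\nu_n,\delta_0)\big)^2\le 4\int|y|^2d\nu_n$, which is bounded since $\nu_n\to\nu$; thus $\{\mu_{L_n}\}$ has uniformly bounded covariances and is $\Wt$-precompact. Any $\Wt$-subsequential limit of $\mu_{L_n}$ solves the population relaxed problem, by the stability property of the relaxed problem proved earlier, and therefore equals $\mu_\star$ by the uniqueness just established; since every subsequential limit is $\mu_\star$, the whole sequence converges, $\mu_{L_n}\to\mu_\star$ in $\Wt$, and in particular the covariances converge, $A_n:=L_nL_n^\tr\to A_\star=L^*{L^*}^\tr+\sigma^2U^*{U^*}^\tr$.

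Finally I would extract \eqref{recovery} from $A_n\to A_\star$. Because $A_\star$ has rank exactly $m$ while $\operatorname{rank}(A_n)\le m$, lower semicontinuity of rank forces $\operatorname{rank}(A_n)=m$ for large $n$, so $U_nU_n^\tr$ is the spectral projection onto the top-$m$ eigenspace of $A_n$; the gap between its $m$-th eigenvalue (converging to $\min_i(\lambda_i^*+\sigma^2)>0$) and its vanishing trailing eigenvalues yields, by standard spectral perturbation, $U_nU_n^\tr\to U^*{U^*}^\tr$. Combining $U_nU_n^\tr\to U^*{U^*}^\tr$ with $\Wt(\nu_n,\nu)\to0$ (which forces convergence of the quadratic integrals) gives $\sigma_n^2=\tfrac{1}{d-m}\int|y-U_nU_n^\tr y|^2d\nu_n\to\tfrac{1}{d-m}\int|y-U^*{U^*}^\tr y|^2d\nu=\tfrac{1}{d-m}\operatorname{tr}(p_{(E^*)^\perp}\Si)=\sigma^2$, and hence $L_nL_n^\tr-\sigma_n^2U_nU_n^\tr\to A_\star-\sigma^2U^*{U^*}^\tr=L^*{L^*}^\tr$. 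I expect the main obstacle to be this last paragraph: controlling the moving spectral projection $U_nU_n^\tr$ (which is only well defined through the constant-rank/spectral-gap structure of $A_\star$) and handling the joint limit in $\sigma_n^2$, where a perturbed projection is integrated against perturbed data; the reduction and the exact population computation are comparatively routine once the Gaussianity of $(p_E)_\#\nu$ is exploited.
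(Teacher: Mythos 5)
Your proposal is correct, and its skeleton --- identify the unique population optimizer $\mu_\star=\cN(0,U^*(\La^*+\sigma^2I_m){U^*}^\tr)$, use precompactness plus a stability result to force $\mu_{L_n}\to\mu_\star$ in $\Wt$, then convert covariance convergence into \eqref{recovery} via a spectral-gap argument --- is the same as the paper's. Where you differ is in the supporting machinery. The paper stays on the KDR side throughout: precompactness comes from Lemma \ref{PCADcompact} (the variance bound $\Var(\mu_{L_n})\le\int|y|^2\,d\nu_n$ being immediate from $\mu_{L_n}\K\nu_n$ via \eqref{eqn: weak convex order with Wass}), convergence to the population solution comes from the KDR stability theorem (Theorem \ref{thm:stability}, applicable because the cone $\cD$ is closed under contractions, hence approachable from the interior), and the matrix limits $\La_n\to\La^*+\sigma^2I_m$, $U_nU_n^\tr\to U^*{U^*}^\tr$ are supplied by the purpose-built Lemma \ref{NormDistConvLemma}. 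You instead invoke Theorem \ref{thm: kant equivalence to relaxed prob} to pass to the relaxed problem, use the elementary stability proposition for the relaxed problem from Section 2, obtain precompactness by playing the competitor $\delta_0$ against optimality and the triangle inequality, and rederive the projection convergence from rank lower semicontinuity together with spectral perturbation across the gap $\la_m^*+\sigma^2>0$. Both routes are sound. Yours buys self-containedness: relaxed-problem stability is a soft continuity argument, and you make explicit both the population computation (the $\Wt$-projection of the Gaussian $\nu$ onto $\cD$, with the observation that Gaussianity of $\nu$ is what keeps $(p_E)_\#\nu$ inside the Gaussian domain) and the uniqueness coming from $\la_m^*>0$, which the paper disposes of with ``clearly.'' The paper's route is shorter on the page because it leans on Theorem \ref{thm:stability} and Lemmas \ref{PCADcompact} and \ref{NormDistConvLemma}, which were developed precisely to serve this theorem, and it has the conceptual advantage of exercising the KDR-specific stability theory that the section is meant to showcase.
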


\begin{remark}\label{rem: denoising with convex order is trivial} \eqref{recovery} indicates that the optimization problem $ \max_{\mu \in \cD \cap \cM_{\nu_n}^K} \Var(\mu)$ can recover $\cL(L^*W)$ as the empirical distribution $\nu_n$ converges to the population distribution $\nu$. 
This result cannot hold if the full convex order constraint is imposed. In many practical cases, such as when \(\nu_n\) is discrete (e.g., \(\nu_n\) is an empirical measure sampled from \(\nu\)), the convex order condition \(\mu \C \nu_n\) fails for any Gaussian \(\mu \in \cD\) unless $\mu = \delta_0$. As a result, the domain \(\cD \cap \{\mu \C \nu_n\}\) reduces to \(\{\delta_0\}\).
\end{remark}

\begin{lemma}\label{PCADcompact}
For $\cD$ in \eqref{PCAD}, the set $\cD \cap \{ \mu \mid \Var(\mu) \le \delta \}$ is $\mathcal{W}_2$-compact for any $\delta >0$.
\end{lemma}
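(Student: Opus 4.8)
The plan is to identify $\cD \cap \{\mu \mid \Var(\mu) \le \delta\}$ with a compact set of covariance matrices and then transfer compactness through a continuous parametrization. First I would observe that for $\mu_L = \cL(LW) \in \cD$ with $L = U\Lambda^{1/2}$ and $W \sim \cN(0,I_m)$, the law $\mu_L$ is the centered Gaussian $\cN(0,\Sigma)$ with covariance $\Sigma = LL^\tr = U\Lambda U^\tr$, and $\Var(\mu_L) = \operatorname{tr}(\Sigma) = \sum_{i=1}^m \lambda_i$. Since every symmetric positive semidefinite matrix of rank at most $m$ admits such a spectral decomposition with $\lambda_1 \ge \dots \ge \lambda_m \ge 0$, the map $\Phi : \Sigma \mapsto \cN(0,\Sigma)$ is a bijection between
\[
K := \{ \Sigma \in \R^{d\times d} \mid \Sigma = \Sigma^\tr \succeq 0, \ \operatorname{rank}(\Sigma) \le m, \ \operatorname{tr}(\Sigma) \le \delta \}
\]
and $\cD \cap \{\Var \le \delta\}$.

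Next I would show that $K$ is compact in the space of symmetric matrices. It is bounded, since for $\Sigma \succeq 0$ the operator norm is controlled by the trace, $\|\Sigma\| \le \operatorname{tr}(\Sigma) \le \delta$. It is closed because each defining condition is closed: the conditions $\Sigma = \Sigma^\tr \succeq 0$ and $\operatorname{tr}(\Sigma) \le \delta$ are plainly closed, while $\{\operatorname{rank}(\Sigma) \le m\}$ is closed, being the common zero set of all $(m+1)\times(m+1)$ minors, which are polynomial (hence continuous) in the entries. I would flag here that rank is only lower semicontinuous, so the limiting covariance may have rank strictly below $m$; this is harmless, since such matrices still lie in $K$ and thus correspond to measures in $\cD$.

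The key analytic step is the continuity of $\Phi$ from $(K, \|\cdot\|_F)$ into $(\cP_2(\R^d), \Wt)$. Given $\Sigma_n \to \Sigma$ in $K$, I would write $\cN(0,\Sigma_n) = \cL(\Sigma_n^{1/2} Z)$ and $\cN(0,\Sigma) = \cL(\Sigma^{1/2} Z)$ with a common $Z \sim \cN(0,I_d)$, where $\Sigma_n^{1/2}, \Sigma^{1/2}$ denote the positive semidefinite square roots. Coupling through the shared $Z$ yields
\[
\Wtq\big(\cN(0,\Sigma_n),\, \cN(0,\Sigma)\big) \le \E\big[ |(\Sigma_n^{1/2} - \Sigma^{1/2}) Z|^2 \big] = \|\Sigma_n^{1/2} - \Sigma^{1/2}\|_F^2 .
\]
Since the positive semidefinite square root is continuous on the cone of symmetric positive semidefinite matrices, $\Sigma_n^{1/2} \to \Sigma^{1/2}$, so the right-hand side tends to $0$.

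Finally I would combine these facts: given any sequence $\mu_n = \cN(0,\Sigma_n)$ in $\cD \cap \{\Var \le \delta\}$, compactness of $K$ furnishes a subsequence with $\Sigma_{n_k} \to \Sigma \in K$, and continuity of $\Phi$ gives $\mu_{n_k} \to \cN(0,\Sigma) \in \cD \cap \{\Var \le \delta\}$ in $\Wt$; hence the set is sequentially, and therefore (as a subset of the metric space $(\cP_2(\R^d),\Wt)$) topologically, compact. I expect the main obstacle to be the analytic step: establishing continuity of $\Phi$ at rank-deficient $\Sigma$, where no single smooth branch of the square root is available. The coupling bound above sidesteps this by reducing the matter to continuity of the matrix square root, which holds on the entire positive semidefinite cone.
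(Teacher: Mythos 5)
Your proof is correct and, if anything, more complete than the paper's. The paper's entire proof is one line: since $\Var(\mu_L)=\sum_{i=1}^m \lambda_i$, the constraint $\Var(\mu_L)\le\delta$ forces $\max_i\lambda_i\le\delta$; compactness is then left implicit, resting on compactness of the parameter set (the orthonormal frames $U$ form a compact Stiefel manifold and the eigenvalues lie in $[0,\delta]^m$) and on the unstated $\Wt$-continuity of $(U,\Lambda)\mapsto\cL(U\Lambda^{1/2}W)$. You take the cousin route of parametrizing by the covariance $\Sigma=U\Lambda U^{\tr}$ rather than by $(U,\Lambda)$, which buys two things: injectivity of $\Sigma\mapsto\cN(0,\Sigma)$ (avoiding the non-uniqueness of the spectral factorization), and a transparent closedness argument, where your handling of the rank constraint via the vanishing of $(m+1)\times(m+1)$ minors, together with the remark that rank may drop in the limit without leaving the set, addresses exactly what the paper glosses over. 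Most importantly, your synchronous-coupling bound $\Wtq\bigl(\cN(0,\Sigma_n),\cN(0,\Sigma)\bigr)\le\|\Sigma_n^{1/2}-\Sigma^{1/2}\|_F^2$, combined with continuity of the positive semidefinite square root, rigorously supplies the continuity step that the paper never writes down, and does so even at rank-deficient limits (the paper's neighboring Lemma \ref{NormDistConvLemma} goes in the opposite direction, deducing parameter convergence from $\Wt$-convergence). In short: the underlying idea is the same --- reduce to a compact finite-dimensional parameter set --- but your execution is self-contained where the paper's is a sketch.
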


\begin{proof}
For any $\mu_L \in \cD$ with $L = U\Lambda^{1/2}$, $\Var(\mu_L) = \sum_{i=1}^m \lambda_i$. Thus, $\max_i \lambda_i \leq \delta$.
\end{proof}

\begin{lemma}\label{NormDistConvLemma}
Let $\mu_n = \mathcal{L}(U_n \Lambda_n^{1/2} W)$ and $\mu = \mathcal{L}(U \Lambda^{1/2} W)$ as in \eqref{PCAD}. If $\mathcal{W}_2(\mu_n, \mu) \to 0$, then $U_n \Lambda_n U_n^\tr \to U \Lambda U^\tr$ and $\Lambda_n \to \Lambda$. If ${\rm rank}(\La)=m$, then $U_n U_n^\tr \to U U^\tr$ as well.
\end{lemma}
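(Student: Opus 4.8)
The plan is to recognize that $\mu_n$ and $\mu$ are centered Gaussians and to reduce all three assertions to statements about their covariance matrices. Writing $\Sigma_n := U_n \Lambda_n U_n^\tr$ and $\Sigma := U \Lambda U^\tr$, observe that $\mu_n = \mathcal{N}(0,\Sigma_n)$ and $\mu = \mathcal{N}(0,\Sigma)$, since $W \sim \mathcal{N}(0,I_m)$ makes $L_n W = U_n \Lambda_n^{1/2} W$ a centered Gaussian with covariance $U_n \Lambda_n^{1/2}(U_n \Lambda_n^{1/2})^\tr = \Sigma_n$. Because $U_n$ has orthonormal columns ($U_n^\tr U_n = I_m$), the eigenvalues of $\Sigma_n$ are the diagonal entries $\lambda_1^{(n)} \geq \cdots \geq \lambda_m^{(n)}$ of $\Lambda_n$ together with $d-m$ copies of $0$; indeed $\Sigma_n (U_n e_i) = \lambda_i^{(n)} (U_n e_i)$. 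The same holds for $\Sigma$.

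For the first assertion I would use that $\mathcal{W}_2$ convergence implies weak convergence, and that for centered Gaussians weak convergence is equivalent to convergence of covariances: comparing characteristic functions $\exp(-\tfrac12 t^\tr \Sigma_n t) \to \exp(-\tfrac12 t^\tr \Sigma t)$ for every $t \in \R^d$ gives $t^\tr \Sigma_n t \to t^\tr \Sigma t$, and polarization yields $\Sigma_n \to \Sigma$, i.e. $U_n \Lambda_n U_n^\tr \to U \Lambda U^\tr$ (alternatively, one may invoke convergence of second moments under $\mathcal{W}_2$). The second assertion then follows from the continuity (indeed Lipschitz continuity, by Weyl's inequality) of the ordered-eigenvalue map: since the top $m$ ordered eigenvalues of $\Sigma_n$ are precisely the diagonal entries of $\Lambda_n$, the convergence $\Sigma_n \to \Sigma$ forces $\lambda_i^{(n)} \to \lambda_i$ for each $i$, hence $\Lambda_n \to \Lambda$.

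The third assertion is the delicate one, because eigenvectors, unlike eigenvalues, are not continuous functions of the matrix without a spectral gap. Here the hypothesis $\mathrm{rank}(\Lambda) = m$ supplies exactly the needed gap: it forces $\lambda_m > 0$, so $\Sigma$ separates its top $m$ eigenvalues from the $(d-m)$-fold zero eigenvalue. By the second assertion $\lambda_m^{(n)} \to \lambda_m > 0$, so $\lambda_m^{(n)} > 0$ for all large $n$; consequently $\mathrm{Range}(\Sigma_n) = \mathrm{col}(U_n)$ and $U_n U_n^\tr$ is the orthogonal projection onto it, which is exactly the spectral projection of $\Sigma_n$ onto the eigenspaces of its nonzero eigenvalues, and likewise $U U^\tr$ is the spectral projection of $\Sigma$ onto $\mathrm{Range}(\Sigma)$. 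I would then invoke continuity of the spectral projection under a uniform gap, either via the contour-integral (Riesz) representation $P_n = \frac{1}{2\pi i}\oint_\Gamma (z I - \Sigma_n)^{-1}\, dz$ with $\Gamma$ a fixed contour enclosing $[\lambda_m/2,\, \lambda_1 + 1]$ and excluding $0$, or via the Davis--Kahan $\sin\Theta$ theorem, to conclude $U_n U_n^\tr \to U U^\tr$ from $\Sigma_n \to \Sigma$.

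I expect the main obstacle to be this last step: one must verify that the spectral gap persists uniformly for large $n$ (which is where $\lambda_m > 0$ and the second assertion are used), so that a single contour $\Gamma$ separates the nonzero eigenvalues of every $\Sigma_n$ from $0$, thereby legitimizing the passage $P_n \to P$.
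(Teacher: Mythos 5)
Your proof is correct, but it takes a genuinely different route from the paper's. The paper argues by compactness: since $\mathcal{W}_2(\mu_n,\mu)\to 0$ forces $\Var(\mu_n)$ to be bounded, the sequences $\{U_n\}$ and $\{\Lambda_n\}$ are precompact, and any subsequential limit $(\tilde U,\tilde\Lambda)$ must satisfy $\mathcal{L}(\tilde U\tilde\Lambda^{1/2}W)=\mu$; identifying the parameters of equal Gaussian laws (first for distinct eigenvalues, then handling multiplicities) gives $\tilde\Lambda=\Lambda$ and $\tilde U\tilde\Lambda\tilde U^\tr=U\Lambda U^\tr$, and the subsequence principle concludes. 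You instead work directly with the covariance matrices: characteristic functions and polarization give $\Sigma_n\to\Sigma$, Weyl's inequality gives $\Lambda_n\to\Lambda$ (using that the top $m$ ordered eigenvalues of $\Sigma_n$ are exactly the diagonal of $\Lambda_n$), and the Riesz contour-integral (or Davis--Kahan) argument gives $U_nU_n^\tr\to UU^\tr$ once the hypothesis $\mathrm{rank}(\Lambda)=m$ supplies a uniform spectral gap at $0$. Your approach buys quantitative control (Weyl and Davis--Kahan yield explicit perturbation bounds, hence rates), avoids subsequence extraction, and makes fully explicit both why the Gaussian structure matters (the law determines, and is determined by, the covariance) and why the rank condition is exactly what the third assertion needs; it also handles eigenvalue multiplicities automatically, which the paper's proof only sketches ("can be similarly proven"). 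The paper's argument, by contrast, is softer and shorter, requiring no spectral perturbation theory, but it is purely qualitative and leaves the multiplicity case to the reader. One small point to keep in mind: your claim that the top $m$ eigenvalues of $\Sigma_n$ are the diagonal entries of $\Lambda_n$ relies on $\lambda_i^{(n)}\ge 0$ and the ordering built into the domain \eqref{PCAD}, which you correctly invoke via $U_n^\tr U_n=I_m$ and $\Sigma_n(U_ne_i)=\lambda_i^{(n)}(U_ne_i)$.
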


\begin{proof}
Assume that $\lambda_1 > \dots > \lambda_k > 0 = \lambda_{k+1} = \dots = \lambda_m$ in the diagonal of $\Lambda$. Since $\mathcal{W}_2(\mu_n, \mu) \to 0$ implies $\Var(\mu_n)$ is bounded, the sequences $\{U_n\}$ and $\{\Lambda_n\}$ are precompact. For any subsequences $\{U_k\}$ of $\{U_n\}$ and $\{\Lambda_k\}$ of $\{\Lambda_n\}$ converging to $\tilde{U}$ and $\tilde{\Lambda}$, respectively, define $\tilde{\mu} = \mathcal{L}(\tilde{U} \tilde{\Lambda}^{1/2} W)$. Then, $\mathcal{W}_2(\mu_n, \mu) \to 0$ implies $\tilde{\mu} = \mu$, which ensures 
$\tilde{U}_i \tilde{U}_i^\tr = U_i U_i^\tr$ for $i=1,\dots,k$ where $U_i$ is the $i$th column of $U$ (notice this yields $\tilde U \tilde{U}^\tr = U U^\tr$ if $\la_m >0$), 
and $\tilde{\Lambda} = \Lambda$, i.e., $\tilde{\lambda}_i = \lambda_i$ for $i = 1, \dots, m$. Consequently, $\tilde{U} \tilde{\Lambda} \tilde{U}^\tr = U \Lambda U^\tr$. The arbitrariness of the subsequence establishes the lemma. The more general case, $\lambda_1 \geq \dots \geq \lambda_k$, can be similarly proven by taking into account the multiplicity of the singular values $\la_1,\dots, \la_k$.
\end{proof}

\begin{proof}[Proof of Theorem \ref{PCADthm}]
$\cD \cap \cM_{\nu_n}^K$ is compact by Lemma \ref{PCADcompact}, and 
$\delta_0 \in \cD \cap \cM_{\nu_n}^K$. Notice the unique solution to $\max_{\mu \in \cD \cap \cM_{\nu}^K} \Var(\mu)$ is $
\mu^* = \cN(0, U^*(\Lambda^* + \sigma^2 I_m) {U^*}^T)$, 
since $\nu = \cN(0, U^* \Lambda^* {U^*}^T + \sigma^2 I_d)$, and the $\mathcal{W}_2$-projection of $\nu$ onto $\cD$ is clearly $\mu^*$. 
Then for any $ \mu_{L_n} \in \argmax_{\mu \in \cD \cap \cM_{\nu_n}^K} \Var(\mu)$, Lemma \ref{PCADcompact} and Theorem \ref{thm:stability} give $\Wt(\mu_{L_n}, \mu^*) \to 0$. Then since $\mu_{L_n} = \cL(U_n \Lambda_n^{1/2} W)$ and $\mu^* = \cL(U^*(\Lambda^* + \sigma^2 I_m)^{1/2} W)$, Lemma \ref{NormDistConvLemma} gives $U_n \Lambda_n U_n^\tr \to U^*(\Lambda^* + \sigma^2 I_m) {U^*}^\tr$, $\Lambda_n \to \Lambda^* + \sigma^2 I_m$ and $U_n U_n^\tr \to U^* {U^*}^\tr$. This with $\nu_n \to\nu$ implies $\sigma_n^2 \to \sigma^2 = \frac{1}{d-m}\int |y - U U^\tr y |^2 d \nu(y)$, yielding $L_n L_n^\tr - \sigma_n^2 U_n U_n^\tr \to U^* \Lambda^* {U^*}^\tr$.
\end{proof}

\section{Numerical examples}\label{sec_numerics}

We provide examples of numerically solving the KDR problem \(\displaystyle \max_{\mu \in \cD,\, \mu \K \nu} \Var (\mu)\) with $\nu = \frac1n \sum_{j=1}^n \delta_{y_j}$ and three closely related discrete curve domains. 

\subsection{Curves with bounded length}\label{ex_bounded_length}
\begin{figure}
    \centering
    \includegraphics[width=1\linewidth]{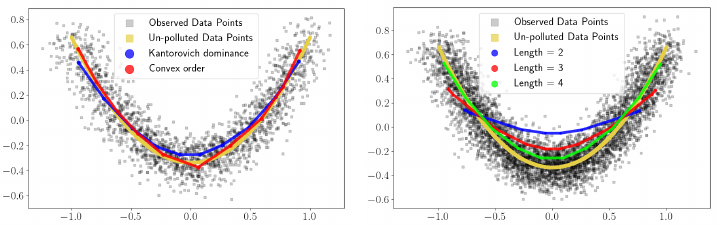}
    \caption{Left: Optimal measure $\mu$ under the Kantorovich dominance and the convex order for 2000 data points. Right: Optimal measure $\mu$ under the Kantorovich dominance for 5000 data points.}
    \label{fig:len_constr_compare}
\end{figure}
 Our first example considers the following domain
 \be
\cD_1 = \cD_1(m, B) = \bigg\{ \mu =  \sum_{i=1}^m u_i \delta_{x_i} \,\bigg|\, x_i \in \R^d,\, u_i \ge 0,\, \sum_{i=1}^m u_i = 1,\, L(\mu) \le B \bigg \}, \nn
\ee
where $L(\mu) = \sum_{i=1}^{m-1} \|x_{i+1} - x_i\|$ represents the length of the discrete curve $\mu$. Due to the length bound, $\cD_1$ is not a cone. 
To address the bound, we transform the constrained optimization problem into an unconstrained form by introducing a Lagrangian with multipliers that correspond to each constraint. Details are given in Section B of the Supplementary Material. The method employs gradient ascent to update the position  $x = (x_i)_{i=1}^m \in (\R^d)^m$ and weight $u = (u_i)_{i=1}^m$, while applying gradient descent to update the Lagrange multipliers iteratively. 

It is possible to numerically solve problem \eqref{problem}, particularly the martingale constraint, using gradient descent. However, this approach involves optimizing over a larger set of unknowns for the coupling $\pi$ rather than focusing on $\mu$ alone, which leads to increased memory requirements for storing variables, greater computational demand for calculating gradients, and potentially longer convergence times. In contrast, although calculating the coupling $\pi$ is still required to compute the $\Wt$ distance for the KDR problem, it can be done efficiently using the Sinkhorn algorithm \cite{Cuturi2013}, \cite{carlier2022linear}, which is known for its speed and computational efficiency. 

To illustrate our example, we consider $\nu$ as a discrete measure, defined over either $n=2000$ or $5000$ points, where $Y = (Z, Z^2) + \varepsilon$, with $Z$ being a one-dimensional variable uniformly distributed over $[-1, 1]$, and $\varepsilon$ representing Gaussian noise. The initial measure $\mu$ is supported on $m = 10$ points, which are initialized along the first principal component of $\nu$. 

Using the proposed method, we optimize the location $\bbx$ and the weight $\bbu$. 
In the case of $2000$ data points, we set the length bound $B = 4.0$ and the left side of Figure \ref{fig:len_constr_compare} illustrates the optimal $\mu$ under both the Kantorovich dominance relation and the convex order relation. For the case with 5000 data points, an out-of-memory error occurred when attempting to compute the optimal measure under the convex order. 
Therefore, we present only the optimal measure under the KDR, with $B = 2.0, 3.0$ and $4.0$, as shown in the right side of Figure \ref{fig:len_constr_compare}.
\footnote{The Python code for Example \ref {ex_bounded_length} is available at https://github.com/joshuaHiew/Kantorovich\_dominance.}
\\

Further improvement in performance is achieved in the case of cone domains, as enabled by the following result.

\begin{proposition}\label{bilinear_form}
    Assume that $\cD$ is a cone and translation invariant. Then, solving the problem \eqref{weakproblem} with $\Omega = \R^d$ is equivalent to solving the following problem:
    \be\label{prbm:max_inner_prod_var_1}
        P_1 =    \max_{\substack{\mu \in \cD \cap \cP_{2,0}(\R^d),\, \pi \in \Pi(\mu, \nu)  \\ \Var(\mu) \leq 1 }} \int \langle x, y \rangle d \pi(x,y).
    \ee
\end{proposition}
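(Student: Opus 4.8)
The plan is to reduce both problems to a single scalar quantity and then exploit the homogeneity coming from the cone structure. Introduce the shorthand $M(\mu) := \max_{\pi \in \Pi(\mu,\nu)} \int \langle x,y\rangle\, d\pi(x,y)$. Optimizing out $\pi$, problem \eqref{prbm:max_inner_prod_var_1} is exactly $P_1 = \max\{ M(\mu) : \mu \in \cD \cap \cP_{2,0}(\R^d),\ \Var(\mu) \le 1\}$, while problem \eqref{weakproblem} with $\Omega=\R^d$ has value $V^* := \max\{\Var(\mu) : \mu \in \cD \cap \cM_\nu^K\}$. Because $\cD$ is a cone, for each $\lambda \ge 0$ the map $(x,y)\mapsto(\lambda x,y)$ is a bijection $\Pi(\mu,\nu)\to\Pi(\lambda_\#\mu,\nu)$, giving the homogeneities $M(\lambda_\#\mu)=\lambda M(\mu)$ and $\Var(\lambda_\#\mu)=\lambda^2\Var(\mu)$; moreover, for centered $\mu$ the relation $\mu \K \nu$ is by \eqref{weakconvexorder3} equivalent to $M(\mu)\ge \Var(\mu)$, since $D(\mu,\nu)=M(\mu)-\int|x|^2\,d\mu$. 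I will prove $V^*=P_1^2$ and show the optimizers correspond under the dilation $\mu\mapsto \big(\sqrt{V^*}\big)_{\#}\mu$. (The degenerate case $V^*=P_1=0$, which forces $\mu^*=\delta_0$, is handled separately and trivially, so I assume $V^*>0$ below.)

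For the inequality $P_1^2\ge V^*$, I would start from a solution $\mu^*$ of \eqref{weakproblem}. Here Lemma \ref{lemma:optimality_cone} is the crucial input: since $\cD$ is a cone and translation invariant, $\mu^*$ is centered and its optimal coupling \emph{saturates} the Kantorovich bound, so that $M(\mu^*)=\Var(\mu^*)=V^*$. Setting $\tilde\mu := \big(1/\sqrt{V^*}\big)_\#\mu^*$, the cone and centering properties give $\tilde\mu\in\cD\cap\cP_{2,0}(\R^d)$ with $\Var(\tilde\mu)=1$, and the homogeneity gives $M(\tilde\mu)=\sqrt{V^*}$. Pairing $\tilde\mu$ with its optimal coupling is then feasible for $P_1$, so $P_1\ge\sqrt{V^*}$.

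For the reverse inequality $P_1^2\le V^*$, I would take an optimal pair $(\mu,\pi)$ for $P_1$. First note that at optimality $\pi$ may be taken optimal, so the value equals $M(\mu)$, and the constraint $\Var(\mu)\le1$ must be active: if $\Var(\mu)<1$ then dilating by $t=1/\sqrt{\Var(\mu)}>1$ keeps $t_\#\mu\in\cD\cap\cP_{2,0}(\R^d)$ and strictly increases $M$ (using $M(\mu)=P_1>0$), contradicting optimality. Hence $P_1=M(\mu)$ with $\Var(\mu)=1$. Now dilate by $\lambda=M(\mu)=P_1$: the scaled measure $\mu_\lambda\in\cD$ satisfies $M(\mu_\lambda)=\lambda M(\mu)=\lambda^2=\lambda^2\Var(\mu)=\Var(\mu_\lambda)$, so $\mu_\lambda \K \nu$ and $\Var(\mu_\lambda)=P_1^2$, whence $V^*\ge P_1^2$. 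Combining the two directions yields $V^*=P_1^2$, and the explicit dilations exhibit a one-to-one correspondence between the two solution sets.

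The main obstacle is not the scaling algebra, which is routine given the cone structure, but rather securing the saturation identity $M(\mu^*)=\Var(\mu^*)$ together with the centering of $\mu^*$; both are exactly what Lemma \ref{lemma:optimality_cone} provides, and without saturation the normalized measure $\tilde\mu$ would fail to attain $M(\tilde\mu)=\sqrt{V^*}$, breaking the first inequality. A secondary point requiring care is verifying that the variance constraint is active at the $P_1$ optimum, which is precisely where the cone homogeneity (quadratic growth of variance against linear growth of $M$) is used most essentially, and checking the positivity of $M(\mu)$ so that the dilation argument produces a strict improvement.
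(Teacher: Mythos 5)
Your proof is correct and is built from the same two ingredients as the paper's: Lemma \ref{lemma:optimality_cone} (centering of the optimizer plus the saturation identity, which in your notation reads $M(\mu^*)=\Var(\mu^*)$) and the dilation homogeneities $M(\lambda_\#\mu)=\lambda M(\mu)$, $\Var(\lambda_\#\mu)=\lambda^2\Var(\mu)$ available on cone domains. The difference is organizational. The paper proceeds through a chain: by Theorem \ref{thm: kant equivalence to relaxed prob}, problem \eqref{weakproblem} is first replaced by the relaxed problem \eqref{eqn: relaxed problem}, which via Lemma \ref{lemma:optimality_cone} and \eqref{eqn: projection} is recast as the intermediate problem \eqref{prbm:max_inner_prod_equal_sec_momt} (maximizing $\int \langle x,y\rangle \, d\pi$ under the saturation equality), and only then is the scaling correspondence run between \eqref{prbm:max_inner_prod_equal_sec_momt} and \eqref{prbm:max_inner_prod_var_1}. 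You skip the relaxed problem entirely and couple \eqref{weakproblem} to $P_1$ directly, proving $V^*=P_1^2$; this is more self-contained, while the paper's detour exposes the explicit scaling factor $\hat\la = 1/\check\la$ of \eqref{inverse}, which it reuses in the numerical scheme of Supplement C. Your two dilations are, computation for computation, the paper's displays \eqref{eqn:lambda2} and \eqref{eqn:lambda1}.

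One quantifier issue is worth repairing. You prove $P_1\ge\sqrt{V^*}$ assuming \eqref{weakproblem} attains a solution, and $V^*\ge P_1^2$ assuming $P_1$ attains an optimal pair; concluding that either dilated point is genuinely \emph{optimal} therefore presupposes that both optima exist, which is not automatic when $\Omega=\R^d$ (the paper's existence theorem for \eqref{weakproblem} requires compact $\Omega$). The paper avoids this by establishing the scaling correspondence for \emph{arbitrary feasible} pairs, which yields the value identity unconditionally. Your own computations already contain the fix: for any feasible $(\mu,\pi)$ for $P_1$ with value $v>0$, dilating by $\lambda=M(\mu)/\int|x|^2\,d\mu$ produces a centered measure in $\cD$ that is Kantorovich-dominated by $\nu$ with variance at least $v^2$, so $P_1\le \sqrt{V^*}$ holds with no attainment hypothesis; symmetrically, translating (using translation invariance of $\cD$ and the fact that $\nu$ is centered, so $M$ is unchanged) and rescaling an arbitrary feasible $\mu$ for \eqref{weakproblem} gives $P_1\ge\sqrt{\Var(\mu)}$, hence $P_1\ge\sqrt{V^*}$. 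With the inequalities stated for feasible points rather than optimizers, your argument is complete.
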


\begin{proof}
By Theorem \ref{thm: kant equivalence to relaxed prob}, the problem \eqref{weakproblem} is equivalent to the problem $
        \min_{\mu \in \cD} {\mathcal W}_2^2(\mu, \nu)$, 
    which, by Lemma \ref{lemma:optimality_cone} and equation \eqref{eqn: projection} in the proof of Theorem \ref{thm: kant equivalence to relaxed prob}, is also equivalent to:
    \be\label{prbm:max_inner_prod_equal_sec_momt}
      P_2 =  \max_{\substack{\mu \in \cD \cap \cP_{2,0}(\R^d),\,  \pi \in \Pi(\mu, \nu) \\ \int \langle x, y \rangle d\pi(x,y) = \int |x|^2 d\mu}} \int \langle x, y \rangle d \pi(x,y).
    \ee
We will show that any optimizer $(\mu, \pi)$ for \eqref{prbm:max_inner_prod_equal_sec_momt} induces a solution to \eqref{prbm:max_inner_prod_var_1}, and conversely. For $\mu \in \cP(\R^d)$, $\pi \in \Pi(\mu,\nu)$ and $\la > 0$, define $\mu_\la = \la_\# \mu$ and $\pi_\la = (\la \times \id)_\# \pi \in \Pi(\mu_\la, \nu)$. Then for any feasible pair $(\hat \mu, \hat \pi)$ for \eqref{prbm:max_inner_prod_var_1} and $(\check\mu,\check\pi)$ for \eqref{prbm:max_inner_prod_equal_sec_momt} with $\int \langle x,y\rangle d \hat \pi >0$ and $ \check\mu \neq \delta_0$,
    \begin{align}\label{eqn:lambda2}
        &1 = \int |x|^2 d \check\mu_{\check\la} = {\check\la}^2 \int |x|^2 d\check \mu = {\check\la}^2 \int \langle x, y \rangle d\check \pi = {\check\la} \int \langle x, y \rangle d \check \pi_{\check\la}, \ \text{ and}\\
        \label{eqn:lambda1}
        &{\hat \la} \int \langle x, y \rangle d\hat \pi = \int \langle x, y \rangle d\hat \pi_{\hat \la} = \int |x|^2 d\hat \mu_{\hat \la} = {\hat \la}^2 \int |x|^2 d\hat \mu \le {\hat \la}^2,
    \end{align}
where $\check \la =  \sqrt{1/ \Var(\check\mu)}$ so that $\Var(\check\mu_{\check\la}) = 1$, while $\hat \la >0$ is the unique constant yielding the second equality in \eqref{eqn:lambda1}. This shows $(\check\mu_{\check\la}, \check \pi_{\check\la})$ is feasible for \eqref{prbm:max_inner_prod_var_1} and $(\hat \mu_{\hat \la} , \hat \pi_{\hat \la})$ for  \eqref{prbm:max_inner_prod_equal_sec_momt}, and moreover, if $P_1, P_2 >0$, then for any optimal pair $(\hat \mu, \hat \pi)$ for \eqref{prbm:max_inner_prod_var_1} and $(\check\mu,\check\pi)$ for \eqref{prbm:max_inner_prod_equal_sec_momt}, we have
\be\label{inverse}
\hat \la = 1 / \check \la,
\ee
since for any optimal pair \((\hat\mu, \hat\pi)\) for \eqref{prbm:max_inner_prod_var_1}, the constraint \(\Var(\hat\mu) \leq 1\) must be tight, meaning that the inequality in \eqref{eqn:lambda1} is satisfied as an equality. Then \eqref{inverse}, with \eqref{eqn:lambda2} and \eqref{eqn:lambda1}, shows that for any optimal $(\hat \mu, \hat \pi)$ for \eqref{prbm:max_inner_prod_var_1}, its scaling $(\hat \mu_{\hat \la} , \hat \pi_{\hat \la})$ is optimal for  \eqref{prbm:max_inner_prod_equal_sec_momt}, and conversely, for any optimal $(\check\mu,\check\pi)$ for \eqref{prbm:max_inner_prod_equal_sec_momt}, $(\check\mu_{\check\la}, \check \pi_{\check\la})$ is optimal for \eqref{prbm:max_inner_prod_var_1}. Finally, the proof also shows \(P_1 = 0\) if and only if \(P_2 = 0\), in which case \(\mu = \delta_0\) serves as the trivial solution to both problems.
\end{proof}

In the following, we apply this result to solve two examples with cone domains\footnote{Source code for examples \ref{ex_bounded_length_sd} and \ref{ex_bounded_curvature} are available at https://github.com/souza-m/data-denoising.}. In both cases, we use fixed weights for \(\mu\), specifically \(\mu = \frac{1}{m} \sum_{i=1}^{m} \delta_{x_i}\), to approximate a dataset of \(n = 300\) distributed along a step-shaped curve with added noise.

\subsection{Curves with bounded length-to-standard-deviation ratio}\label{ex_bounded_length_sd} 
Here we consider the following modification of the domain $\cD_1$:
\be
 \cD_2 = \bigg\{ \mu = \frac1m \sum_{i=1}^m  \delta_{x_i} \,\bigg|\, x_i \in \R^d,\,  \frac{L(\mu)}{{\rm SD}(\mu)}\leq B \bigg \}, \nn
\ee
where ${\rm SD}(\mu) = \sqrt{\Var(\mu)}$ represents the standard deviation of $\mu$. Thus, the bound is now imposed on the ratio between the length and the standard deviation. Since rescaling $\mu$ does not change the ratio $L(\mu) / {\rm SD}(\mu)$, we see that $\cD_2$ is a cone.

For each value of $B$, we calculate $m= 100$ points $x_1,...,x_{100}$ and linearly connect them. Figure \ref{bounded_ratio_curvature} illustrates the resulting curves for two values of $B$. 
We include an optimal solution under the convex order constraint, with size $m = 10$; higher values resulted in longer processing times or even out-of-memory error. Both methods seem to capture the underlying structure of the data reasonably well.
Supplementary Material C describes the alternating numerical steps for the solution.

\begin{figure}[h!]
   \centering 
    \makebox[\columnwidth]{\includegraphics[width=1\textwidth, trim=0 25 0 0]{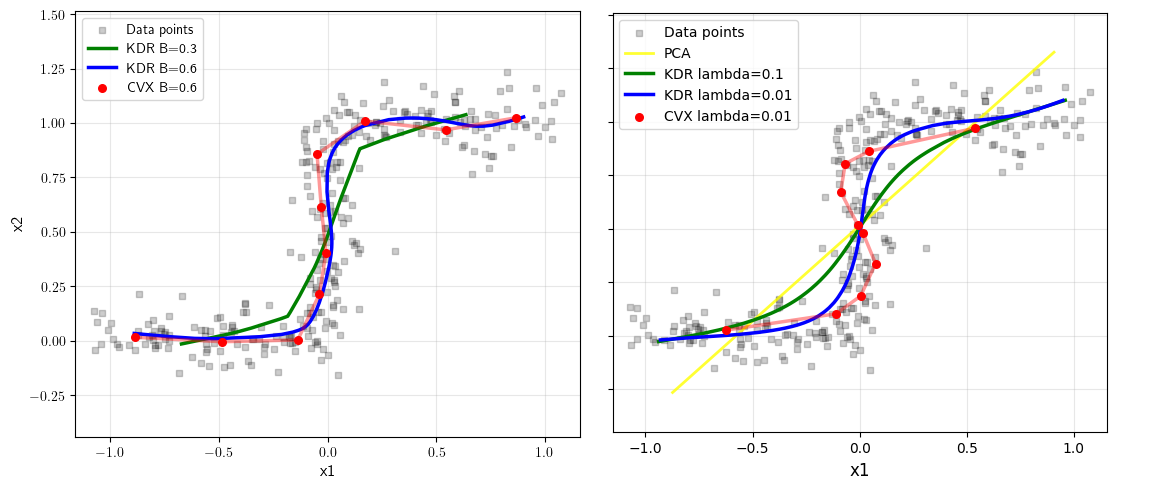}}
    \caption{Left: Curves with bounded length-standard deviation ratios. Right: Curves with bounded curvatures. Curves are formed by connecting points using straight lines.
    }
    \label{bounded_ratio_curvature}
\end{figure}

\subsection{Curves with bounded curvature}\label{ex_bounded_curvature}
Our last example considers a domain given by
\be
\cD_3 = \bigg\{ \mu = \frac1m \sum_{i=1}^m  \delta_{x_i} \,\bigg|\, x_i \in \R^d,\, \phi(\mu) \le B \bigg\}, \nn
\ee
where $\phi(\mu) = \sum_{i=2}^{m-1}\cos^{2}\frac{\theta_{i}}{2}$ represents the total curvature, with $\theta_{i}$ being the angle between segments $\overline{x_{i-1} x_i}$ and $\overline{x_i x_{i+1}}$. $\cD_2$ is a cone since the angles do not change with scaling of $\mu$, and by Lemma \ref{PCAlemma}, $B=0$ corresponds to the problem of finding the first principal direction.

In the numerical computation, the curvature constraint is handled indirectly via penalization and solved using an alternating method, as detailed in the Supplementary Material C. For each curvature penalty parameter $\la$, we set $m=100$ points and connect them linearly to form curves in Figure \ref{bounded_ratio_curvature}, where the first principal direction is also included for reference. A solution under the convex order constraint (with $m = 10$ as in Section \ref{ex_bounded_length_sd}) is also included. In this example, we observe a slight convergence of the latter towards the barycenter, which can be due to the stricter nature of the convex order or to numerical approximation errors.

The examples in this section demonstrate that the proposed method can efficiently compute the optimal measure under the Kantorovich dominance with large datasets.

\begin{acks}[Acknowledgments]
B.P. is pleased to acknowledge the support of Natural Sciences and Engineering Research Council of Canada Discovery Grant numbers  04658-2018 and 04864-2024. The work of J.H. and M.S. is in partial fulfillment of their doctoral degrees.
\end{acks}



\bibliographystyle{plain}
\bibliography{biblio}

\newpage

\begin{supplement}\label{counterexample_Kramkov}
\stitle{A: Solution comparison between the problem \eqref{problem}  and the problem studied in \cite{kramkov2022optimal}.}
\end{supplement}

\begin{center}
    \begin{tikzpicture}[scale=0.65] 
    \begin{axis}[
        axis lines = middle,
        grid = both,
        xmin = -3.5, xmax = 2.5,
        ymin = -0.7, ymax = 5.5,
        width=8cm,  
        height=8cm, 
        legend style={font=\footnotesize, at={(0.0,1.0)}, anchor=north west, 
        }
    ]
    \addplot[
        only marks,
        mark=*,
        color=black,
    ]
    coordinates {
        (-3,3)
        (-1,5)
        (0,0)
        (2,2)
    };
    \addlegendentry{Support of $\nu$}

    \addplot[
        only marks,
        mark=*,
        color=blue,
    ]
    coordinates {
        (-3/2, 3/2)
        (1/2, 7/2)
    };
    \addlegendentry{Support of $\mu_k$}

    \addplot[
        only marks,
        mark=*,
        color=green,
    ]
    coordinates {
        (-1/2, 1/2)
        (-1/2, 5/2)
        (-1/2, 9/2)
    };
    \addlegendentry{Support of $\mu_m$}

    \end{axis}
\end{tikzpicture}
\end{center}

We present an example which shows that the optimal measure for problem (1.2) in \cite{kramkov2022optimal} is not an optimal measure for our problem \eqref{problem}.
We define the measure $\mu_k, \mu_m$ and $\nu$ as follows:
\begin{align*}
\mu_k &= \tfrac{1}{2} \delta_{(-3/2, 3/2)} + \tfrac{1}{2} \delta_{(1/2, 7/2),}\\
\mu_m &= \tfrac{3}{10} \delta_{(-1/2, 1/2)} + \tfrac{2}{5} \delta_{(-1/2, 5/2)} + \tfrac{3}{10} \delta_{(-1/2, 9/2)},\\
\nu &= \tfrac{1}{4} \delta_{(0,0)} + \tfrac{1}{4} \delta_{(-3,3)} + \tfrac{1}{4} \delta_{(2,2)} + \tfrac{1}{4} \delta_{(-1,5)}.
\end{align*}

We define the martingale coupling \(\pi_k\) between $\mu_k$ and $\nu$ as follows:
\begin{equation*}
\pi_k = \tfrac{1}{4} \delta_{( (-3/2, 3/2), (0,0))} + \tfrac{1}{4} \delta_{((-3/2, 3/2), (-3,3))} + \tfrac{1}{4} \delta_{((1/2, 7/2), (2,2))} + \tfrac{1}{4} \delta_{((1/2, 7/2), (-1,5))},
\end{equation*}
while the martingale coupling \(\pi_m\) between $\mu_m$ and $\nu$ as follows:
\begin{align*}
\pi_m = &\, \tfrac{1}{4} \delta_{((-1/2, 1/2), (0,0))} + \tfrac{1}{20} \delta_{( (-1/2, 1/2), (-3,3))} + \tfrac{1}{5} \delta_{((-1/2, 5/2), (-3,3))} \\
       & + \tfrac{1}{4} \delta_{((-1/2, 9/2), (-1,5))} + \tfrac{1}{20} \delta_{((-1/2, 9/2), (2,2))} + \tfrac{1}{5} \delta_{((-1/2, 5/2), (2,2))}.
\end{align*}

For $u=(u_1,u_2)$, $v=(v_1,v_2) \in \R^2$, let $c_k(u,v) = (u_1 - v_1)(u_2 - v_2)$ be the cost function considered in \cite{kramkov2022optimal}. It is straightforward to check that for any $(x_0, y_0), (x_1, y_1) \in \spt(\pi_k)$,
\begin{equation}
    (1-t) c_k(x_0, y_0) + t c_k(x_1, y_1) \leq t (1-t) c_k(y_0, y_1), \qquad  t \in [0,1]. \label{Kram}
\end{equation}
By \cite[Theorem 2.2]{kramkov2022optimal}, \eqref{Kram} shows that $\pi_k$ is the optimal coupling for their problem. 
On the other hand, it is also easy to check that $\pi_m$ provide a better coupling for our problem \eqref{problem}, showing the optimal measure for problem (1.2) in \cite{kramkov2022optimal} is not the optimal measure for \eqref{problem}.
\\

\begin{supplement}\label{compute1}
\stitle{B: Computational details for Section \ref{ex_bounded_length}.}
\end{supplement}
The Lagrangian we maximize for the example in Section \ref{ex_bounded_length} is the following:
\begin{align*}
    \mathcal{L}&(x, u, \lambda_1, \lambda_{1,i}, \lambda_2, \lambda_3) = \Big( \sum_i \|x_i\|^2 u_i - \| \sum_i x_i u_i \|^2 \Big) 
    - \lambda_1 \Big( \sum_i u_i - 1 \Big) + \sum_i \lambda_{1,i} u_i\\ 
    & - \lambda_2 \Big( \sum_i \|x_{i+1} - x_i\| - B \Big)
    - \lambda_3 \Big( \Wtq(\mu, \nu) - \Big( \sum_j \|y_j\|^2 /n - \sum_i \|x_i\|^2 u_i \Big) \Big),
\end{align*}
where $\sum_i \|x_i\|^2 u_i - \left\|\sum_i x_i u_i \right\|^2$ is the variance of $\mu$,  $\lambda_1 \in \R$ is the Lagrange multiplier for the probability constraint $\sum_i u_i = 1$, $\lambda_{1,i} \geq 0$ enforces $u_i \geq 0$, $\lambda_2 \geq 0$ enforces the length constraint $L(\mu) \leq B$, and $\lambda_3 \geq 0$ enforces the Kantorovich dominance constraint. 

The complementary slackness condition for each constraint ensures that the Lagrange multipliers only contribute when their respective constraints are active. Specifically:
\begin{align*}
\lambda_1 \Big( \sum_i u_i - 1 \Big) &= 0, \quad \lambda_{1,i} u_i = 0 \ \  \forall i, \quad \lambda_2 \Big(  B - \sum_{i=1}^{m-1} \|x_{i+1} - x_i\| \Big) = 0, \quad \text{and}\\
&\lambda_3 \Big( \sum_j \|y_j\|^2 /n - \sum_i \|x_i\|^2 u_i - \Wtq(\mu, \nu)\Big) = 0.
\end{align*}

Let $\pi$ be an optimal coupling corresponding to $\Wtq(\mu, \nu)$. The gradients with respect to the location $x = (x_i)_{i=1}^m \in (\R^d)^m$ and weight $u= (u_i)_{i=1}^m \in [0,1]^m$ are computed as follows.  \\
{\em Gradient with respect to $x$:}
\[
\frac{\partial \mathcal{L}}{\partial x_i} = 2x_i u_i - 2 \Big( \sum_j x_j u_j \Big) u_i - 2 \lambda_3 x_i u_i - 2 \lambda_3 \sum_j \pi_{ij} (x_i - y_j) + \lambda_2 \Big( \frac{x_{i+1} - x_i}{\|x_{i+1} - x_i\|} - \frac{x_i - x_{i-1}}{\|x_i - x_{i-1}\|} \Big).
\]
{\em Gradient with respect to $u$:}
\[
\frac{\partial \mathcal{L}}{\partial u_i} = \|x_i\|^2 - 2\Big( \sum_j x_j u_j \Big) x_i - \lambda_1 + \lambda_{1,i} - \lambda_3 \Big( \|x_i\|^2 + \sum_j \pi_{ij} \|x_i - y_j\|^2 \Big).
\]
The optimal transport plan $(\pi_{ij})$ with marginal $\mu$ using current $(x_i)$ and $(u_i)$ and given $\nu$. is calculated using the Sinkhorn algorithm for efficiency. 

The multipliers are updated using projected gradient descent to satisfy the  KKT conditions, with non-negativity of the multipliers enforced by projecting onto the feasible region. Specifically, each update step is given by:
\begin{align*}
    \lambda_1 &\leftarrow \lambda_1 + \eta_{\lambda_1} \Big( 1 - \sum_i \mu_i  \Big), \\
    \lambda_{1,i} &\leftarrow \max \Big( 0, \lambda_{1,i} - \eta_{\lambda_{1,i}} \mu_i \Big), \quad \forall i, \\
    \lambda_2 &\leftarrow \max  \Big( 0, \lambda_2 + \eta_{\lambda_2}  \Big(  \sum_{i=1}^{m-1} \|x_{i+1} - x_i\| - B  \Big)  \Big), \\
    \lambda_3 &\leftarrow \max  \Big(  0, \lambda_3 + \eta_{\lambda_3} \Big(  \Wtq(\mu, \nu) - \sum_j \|y_j\|^2 \nu_j + \sum_i \|x_i\|^2 \mu_i  \Big)  \Big).
\end{align*}
The projection operator $\max(0, \cdot)$ ensures that $\lambda_{1,i}, \lambda_2$ and $\lambda_3$ remain non-negative, in line with the KKT requirements.
\\

\begin{supplement}\label{compute2}
\stitle{C: Computational details for Sections \ref{ex_bounded_length_sd} and \ref{ex_bounded_curvature}. }
\end{supplement}

To approximate a solution to \eqref{prbm:max_inner_prod_var_1}, we propose an alternating procedure that splits the problem into two subproblems, optimizing \((\mu, \pi)\) with respect to each variable separately. The variables are \(x = (x_1,..., x_m) \in (\mathbb{R}^d)^m\), where each \(x_i = (x_i^1,..., x_i^d)\), and \(\pi = (\pi_{ij}) \in \mathbb{R}_{\geq 0}^{m \times n}\). These examples use the constant weight on the points of $\mu$, that is, we set $\mu = \tfrac1m \sum_{i=1}^m \delta_{x_i}$. \(\pi\) represents a transport plan / coupling between the variable \(\mu\) and the  data \(\nu = \frac{1}{n} \sum_{j=1}^n \delta_{y_j}\), subject to the marginal constraints \(\sum_j \pi_{ij} = \frac1m \) for all \(i\), and \(\sum_i \pi_{ij} = \frac{1}{n}\) for all \(j\).

The optimization begins with an initial set \(x^0\) satisfying the  constraints $\frac1m \sum_{i}x_{i}=0$ and $\frac1m \sum_{i}\left\Vert x_{i}\right\Vert ^{2}\leq 1$ and the domain constraint (e.g., the set of PCA projections in both examples). At each iteration \(t \geq 1\), we perform the following two steps, repeating until convergence. The final output is the pair \((x, \pi)\), where \(x\) is rescaled by the factor \(\hat{\lambda}\) given in \eqref{inverse}.

\emph{Step 1.} Given $x$, find $\pi$ that solves 
\begin{align*}
\max_{\pi \in \Pi(1/m, 1/n)} \sum_{i}\sum_{j}\pi_{ij}\left\langle x_{i},y_{j}\right\rangle \quad 
\text{st.} \quad  \sum_{i}\sum_{j}\pi_{ij}x_{i}=0 \quad \text{and} \quad
\sum_{i}\sum_{j}\pi_{ij}\left\Vert x_{i}\right\Vert ^{2} \leq 1.
\end{align*}

\emph{Step 2.} Given $\pi$, set $\bar{y}=\pi^{t}y$ and find $x$ that solves 
\begin{align*}
\max_{x \in \cD} \sum_{i}\left\langle x_{i},\bar{y}_{i}\right\rangle \quad 
\text{st.} \quad \frac1m \sum_{i}x_{i}=0 \quad \text{and} \quad 
 \frac1m \sum_{i} \left\Vert x_{i}\right\Vert ^{2}\leq 1.
\end{align*}

In Step 1, note that the constraints on $\pi$ are satisfied by construction, since the marginal condition imposes $\sum_{i}\sum_{j}\pi_{ij}x_{i}=\sum_{i}\frac1m x_{i}=0$ and $\sum_{i}\sum_{j}\pi_{ij}\left\Vert x_{i}\right\Vert ^{2} =\sum_{i}\frac1m \left\Vert x_{i}\right\Vert ^{2}\leq 1$. Thus, the problem falls into the class of traditional optimal transport with fixed marginals. To solve it, in both examples we apply the Sinkhorn method, as noted in Subsection \ref{ex_bounded_length}. Then to solve Step 2 in Subsection \ref{ex_bounded_length_sd}, we explicitly state the domain constraint on $x$ as follows:
\begin{align*}
\max_{x \in (\R^d)^m}  \sum_{i=1}^m \left\langle x_{i},\bar{y}_{i}\right\rangle \quad
\text{st.} \quad \sum_{i=1}^m x_{i}=0, \ 
 \sum_{i=1}^m \left\Vert x_{i}\right\Vert ^{2}\leq1, \ 
 \sum_{i=1}^{m-1} \left\Vert x_{i+1} - x_{i}\right\Vert\leq B.
\end{align*}

Since the second constraint is clearly binding, together with the third constraints it implies that any solution to the above satisfies the domain constraint $\frac{\sum_{i=1}^{m-1} \left\Vert x_{i+1} - x_{i}\right\Vert}{\sum_{i=1}^m \left\Vert x_{i}\right\Vert ^{2}}\leq B$. We convert this problem into a second-order conic program, which can be solved efficiently by interior point methods -- see \cite{boyd2004convex} for an overview. To do that, we define the variable to be optimized as
\[
(x_1^1,\ldots,x_1^d,\ldots,x_m^1,\ldots,x_m^d,a_1,\ldots,a_{m-1})
\]
where the superscript in $x_i$ denotes the dimensional component and $a_1,\ldots,a_{m-1}$ is an auxiliary variable. The second-moment constraint on $x$ is replaced by the equivalent
\[
\sqrt{\sum_{i=1}^m \sum_{k=1}^d (x_i^k)^2} \leq 1.
\]

The constraints are equivalently rewritten as
\begin{align*}
x_{1}^{k}+\ldots+x_{m}^{k} & =0 &  & k=1,\ldots,d,\\
\left\Vert x_{1}^{1}\ldots x_{1}^{d}\ldots x_{m}^{1}\ldots x_{m}^{d}\right\Vert  & \leq1,\\
\left\Vert x_{i+1}-x_{i}\right\Vert  & \leq a_{i} &  & i=1,\ldots,m-1,\\
a_{1}+\ldots+a_{m-1} & =B.
\end{align*}

We solve the problem in this format using Python and CVXOPT package.

In Step 2 of Subsection \ref{ex_bounded_curvature}, we replace the domain constraint  $\phi\left(\mu\right)\leq B$ by a linear penalization together with another, inner iteration loop, as follows. Call $a_{i}=x_{i+1}-x_{i}$ for $i=1,\ldots,m-1$, and $e_{i}=\frac{a_{i}}{\left\vert a_{i}\right\vert}-\frac{a_{i-1}}{\left\vert a_{i-1}\right\vert}$ for $i=2,\ldots,m-1$. Notice that $\phi(\mu) = \frac14 \sum_{i=2}^{m-1}\left\Vert e_{i}\right\Vert ^{2}$. At iteration $t$, the problem is solved for
$\phi^{t}$ defined as
\begin{align*}
\phi^{0}\left(x\right) & =0\\
\phi^{t}\left(x\right)
& =\frac14\sum_{i=2}^{m-1}\left\langle e_{i}^{t},\frac{x_{i+i}^{t-1}-x_{i}}{{\left\vert a_i\right\vert}^{t-1}}-\frac{x_{i}-x_{i-1}^{t-1}}{{\left\vert a_{i-1}\right\vert}^{t-1}}\right\rangle
 =\sum_{i=2}^{m-1} \varphi_{i}^{t} x_i + \text{constant},
\end{align*}
where $\varphi_{i}^{t}=-\frac14e_{i}^{t}(\frac{1}{{\left\vert a_{i}\right\vert}^{t-1}}+\frac{1}{{\left\vert a_{i-1}\right\vert}^{t-1}})$. The penalized problem is written as
\begin{align*}
\max_{x \in (\R^d)^m} \sum_{i}\left\langle x_{i},c_{i}\right\rangle \quad 
\text{st.} \quad \sum_{i}x_{i}=0 \quad \text{and} \quad 
 \sum_{i}\left\Vert x_{i}\right\Vert ^{2}\leq m,
\end{align*}
where $c_i=\bar{y}_i-\lambda \varphi_{i}^{t}$ for some penalization multiplier $\lambda$. We solve it  through the Lagrangian
\begin{align*}
\mathcal{L}\left(x,\alpha,\beta\right) & =\sum_{k=1}^{d}\sum_{i=1}^{m}x_{i}^{k}c_{i}^{k}-\sum_{k=1}^{d}\sum_{i=1}^{m}\alpha^{k}x_{i}^{k}-\beta\Big[\sum_{k=1}^{d}\sum_{i=1}^{m}(x_{i}^{k})^{2}-m\Big].
\end{align*}

The first derivatives are 
\begin{align*}
\mathcal{L}_{x_{i}^{k}}\left(x,\alpha,\beta\right) & =c_{i}^{k}-\alpha^{k} -2 \beta x_{i}^{k},\\
\mathcal{L}_{\alpha^{k}}\left(x,\alpha,\beta\right) & =-\sum_{i=1}^{m} x_{i}^{k},\\
\mathcal{L}_{\beta}\left(x,\alpha,\beta\right) & =m-\sum_{k=1}^{d}\sum_{i=1}^{m} (x_{i}^{k})^{2}.
\end{align*}

We can assume that the second-moment condition is binding and $\beta>0$,
since otherwise it would be possible to increase the objective function.
The first order conditions imply
\begin{align*}
 & \left(i\right) \ c_{i}^{k}-\alpha^{k} -2\beta x_{i}^{k} =0 \implies  \hat{x}_{i}^{k}  =\frac{1}{2\beta}\Big(c_{i}^{k}-\alpha^{k}\Big) \quad \forall k,i, \\
 & \left(ii\right) \ \sum_{i=1}^{m} \hat{x}_{i}^{k}  =\frac{1}{2\beta}\Big(\sum_{i=1}^{m}c_{i}^{k}-\alpha^{k}\Big)=0  \implies \alpha^{k} =\sum_{i=1}^{m}c_{i}^{k} \quad \forall k,\\
 & \left(iii\right) \  \sum_{k=1}^{d}\sum_{i=1}^{m} (\hat{x}_{i}^{k})^{2}  =\frac{1}{4\beta^{2}}\sum_{k=1}^{d}\sum_{i=1}^{m} \Big(c_{i}^{k}-\alpha^{k}\Big)=m \\
 &\qquad \implies \beta  =\frac{1}{2}\sqrt{\frac1m \sum_{k=1}^{d}\sum_{i=1}^{m}\Big(c_{i}^{k}-\alpha^{k}\Big)^{2} }.
\end{align*}
Replacing $\alpha$ and $\beta$ gives $x$
as a function of $c$. Finally, we update $x^{t}$ partially at each iteration, as $x^{t}=\epsilon x_{\text{opt}}^{t}+\left(1-\epsilon\right)x^{t-1}$ where $x_{\text{opt}}^{t}$ solves the problem for $\phi^t$. The loop stops when the sequence $\left(x^{t}\right)$ converges to a fixed point $x^{*}$,
and we get
\[
\phi^{*}\left(\mu\right)=\frac14\sum_{i=2}^{m-1}\left\langle e_{i}^{*},\frac{x_{i+i}^{*}-x_{i}^{*}}{{\left\vert a_i\right\vert}^*}-\frac{x_{i}^{*}-x_{i-1}^{*}}{{\left\vert a_{i-1}\right\vert}^*}\right\rangle =\phi\left(\mu\right).
\]

\end{document}